\keywords{Denotational semantics, probabilistic coherence spaces,
  differentials of programs, observational equivalence and distances}
\newcommand\CMLLPAR{
\usepackage{cmll}
\newcommand\IPar{\mathord{\parr}}
}
\newtheorem*{lemma*}{Lemma}
\newtheorem*{proposition*}{Proposition}
\newcommand{\Endproof}{
  \ifmmode % if math mode, assume display: omit penalty etc.
  \else \leavevmode\unskip\penalty9999 \hbox{}\nobreak\hfill
  \fi
  \quad\hbox{$\Box$}
  \par\medskip}
\newcommand\Eqref[1]{(\ref{#1})}
\newcommand\Eg{\textsl{e.g.}}
\renewcommand{\phi}{\varphi}
\renewcommand\epsilon{\varepsilon}
\newcommand{\Implies}{\Rightarrow}
\newcommand\Equiv{\Leftrightarrow}
\newcommand{\St}{\mid}
\newcommand{\Infi}{\wedge}
\newcommand{\Supi}{\vee}
\renewcommand{\Bot}{{\mathord{\perp}}}
\newcommand\Seqempty{\Tuple{}}
\newcommand\cC{\mathcal{C}}
\newcommand\cD{\mathcal{D}}
\newcommand\cL{\mathcal{L}}
\newcommand\Fini{{\mathrm{fin}}}
\newcommand\Part[1]{{\mathcal P}\left({#1}\right)}
\newcommand{\Linarrow}{\multimap}
\newcommand\Myleft{}
\newcommand\Myright{}
\newcommand\Web[1]{\Myleft|{#1}\Myright|}
\newcommand\Supp[1]{\operatorname{\mathsf{supp}}({#1})}
\newcommand\Mset[1]{[{#1}]}
\newcommand\ITens{\otimes}
\newcommand\Tens[2]{{#1}\ITens{#2}}
\newcommand\IWith{\mathrel{\&}}
\newcommand\Orth[2][]{#2^{\Bot_{#1}}}
\newcommand\Orthp[2][]{(#2)^{\Bot_{#1}}}
\newcommand\Bwith{\mathop{\&}}
\newcommand\Bplus{\mathop\oplus}
\newcommand\Bunion{\mathop\cup}
\newcommand\Biorth[1]{#1^{\Bot\Bot}}
\newcommand\One{1}
\newcommand\Card[1]{\#{#1}}
\newcommand\FamRestr[2]{{#1}|_{#2}}
\newcommand\Locun[1]{1^J}
\newcommand\Isom\simeq
\newcommand\NUCS{\mathbf{nCoh}}
\newcommand\Comp{\mathrel\circ}
\newcommand\Limpl[2]{{#1}\Linarrow{#2}}
\newcommand\Nat{{\mathbb{N}}}
\newcommand\Biind[2]{\genfrac{}{}{0pt}{1}{#1}{#2}}
\newcommand\Snat{\mathsf N}
\newcommand\Snatcoh{\mathsf N^\NUCS}
\newcommand\App[2]{({#1}){#2}}
\newcommand\Abst[3]{\lambda#1^{#2}\,{#3}}
\newcommand\List[3]{#1_{#2},\dots,#1_{#3}}
\newcommand\Kronecker[2]{\delta_{{#1},{#2}}}
\newcommand\Subst[3]{{#1}\left[{#2}/{#3}\right]}
\newcommand\Factor[1]{{#1}!}
\newcommand\Binom[2]{\genfrac{(}{)}{0pt}{}{#1}{#2}}
\newcommand\Real{\mathbb{R}}
\newcommand\Realp{\mathbb{R}_{\geq 0}}
\newcommand\Realpto[1]{(\Realp)^{#1}}
\newcommand\Realto[1]{\Real^{#1}}
\newcommand\Realpc{\overline{\Realp}}
\newcommand\Realpcto[1]{\Realpc^{#1}}
\newcommand\Intercc[2]{[#1,#2]}
\newcommand\Interoc[2]{(#1,#2]}
\newcommand\Interco[2]{[#1,#2)}
\newcommand\Rational{\mathbb Q}
\newcommand\Bcanon[1]{e_{#1}}
\newcommand\Mfin[1]{\mathcal M_\Fini({#1})}
\newcommand\Diracm{\delta}
\newcommand\REL{\operatorname{\mathbf{Rel}}}
\newcommand\Norm[1]{\|{#1}\|}
\newcommand\Normsp[2]{\|{#1}\|_{#2}}
\newcommand\Rel[1]{\mathrel{#1}}
\newcommand\Redst[1]{\mathop{\mathsf{Red}}}
\newcommand\Tuple[1]{\langle{#1}\rangle}
\newcommand\Msetofsubst[1]{\bar F}
\newcommand\Pcoh[1]{\mathsf P{#1}}
\newcommand\Pcohp[1]{\Pcoh{(#1)}}
\newcommand\Pcohc[1]{\overline{\mathsf P}{#1}}
\newcommand\Pcohcp[1]{\Pcohc{(#1)}}
\newcommand\Base[1]{\mathsf e_{#1}}
\newcommand\Matapp[2]{{#1}\Compl{#2}}
\newcommand\Matappa[2]{{#1}\cdot{#2}}
\newcommand\PCOH{\mathbf{Pcoh}}
\newcommand\Absval[1]{\left|{#1}\right|}
\newcommand\Retri\zeta
\newcommand\Retrp\rho
\newcommand\Impl[2]{{#1}\Rightarrow{#2}}
\newcommand\Tsem[1]{\llbracket{#1}\rrbracket}
\newcommand\Tsemrel[1]{\llbracket{#1}\rrbracket^{\REL}}
\newcommand\Tsemcoh[1]{\llbracket{#1}\rrbracket^{\NUCS}}
\newcommand\Psem[2]{\llbracket{#1}\rrbracket_{#2}}
\newcommand\Psemrel[2]{\llbracket{#1}\rrbracket^{\REL}_{#2}}
\newcommand\Psemcoh[2]{\llbracket{#1}\rrbracket^{\NUCS}_{#2}}
\newcommand\Tnat\iota
\newcommand\Fix[1]{\operatorname{\underline{\mathsf{fix}}}(#1)}
\newcommand\If[3]{\operatorname{\underline{\mathsf{if}}}(#1,#2,#3)}
\newcommand\Pred[1]{\operatorname{\underline{\mathsf{pred}}}(#1)}
\newcommand\Succ[1]{\operatorname{\underline{\mathsf{succ}}}(#1)}
\newcommand\Num[1]{\underline{#1}}
\newcommand\Loop\Omega
\newcommand\Loopt[1]{\Omega^{#1}}
\newcommand\Dice[1]{\underline{\operatorname{\mathsf{coin}}}(#1)}
\newcommand\Tseq[3]{{#1}\vdash{#2}:{#3}}
\newcommand\Tseqst[2]{{#1}\vdash{#2}}
\newcommand\Timpl\Impl
\newcommand\Simpl\Impl
\newcommand\PCFP{\mathsf{pPCF}}
\newcommand\PCF{\mathsf{PCF}}
\newcommand\Fun[1]{\widehat{#1}}
\newcommand\Id{\operatorname{\mathsf{Id}}}
\newcommand\Proj[1]{\pi_{#1}}
\newcommand\Excl[1]{\oc{#1}}
\newcommand\Exclp[1]{\oc({#1})}
\newcommand\Prom[1]{#1^!}
\newcommand\Relincl\eta
\newcommand\Relrestr\rho
\newcommand\Compl{\,}
\newcommand\Kl[1]{{#1}_\oc}
\newcommand\IF{\mathsf{if}}
\newcommand\Ifv[4]{\IF(#1,#2,#3\cdot #4)}
\newcommand\APP[2]{(#1)#2}
\newcommand\Eval[2]{\langle#1,#2\rangle}
\newcommand\Let[3]{\underline{\mathsf{let}}(#1,#2,#3)}
\newcommand\Snum[1]{\Base{#1}}
\newcommand\Ssuc{{\mathsf{suc}}}
\newcommand\Spred{{\mathsf{pred}}}
\newcommand\Vect[1]{\vec{#1}}
\newcommand\Obseq{\sim}
\newcommand\Bnfeq{\mathrel{\mathord:\mathord=}}
\newcommand\Bnfor{\,\,\mathord|\,\,}
\newcommand\PPCF{\mathsf{pPCF}}
\newcommand\Locpcs[2]{{#1}_{#2}}
\newcommand\Deriv[1]{{#1}'}
\newcommand\Klfun[1]{\Fun{#1}}
\newcommand\Distsp[3]{\mathsf d_{#1}(#2,#3)}
\newcommand\Distspsymb[1]{\mathsf d_{#1}}
\newcommand\Distobs[2]{\mathsf d_{\mathsf{obs}}(#1,#2)}
\newcommand\Cuball[1]{\mathcal B#1}
\newcommand\Cuballp[1]{\mathcal B(#1)}
\newcommand\Cloc[2]{{#1}_{#2}}
\newcommand\Rem[1]{\widetilde{#1}}
\newcommand\Probared[2]{\mathbb{P}(#1\downarrow #2)}
\newcommand\Mark[2]{{#1}^{#2}}
\newcommand\Starg[1]{\mathsf{arg}(#1)}
\newcommand\Stsucc{\mathsf{succ}}
\newcommand\Stpred{\mathsf{pred}}
\newcommand\Stif[2]{\mathsf{if}(#1,#2)}
\newcommand\Stlet[2]{\mathsf{let}(#1,#2)}
\newcommand\State[2]{\langle#1,#2\rangle}
\newcommand\Stcons{\cdot}
\newcommand\Stempty{\epsilon}
\newcommand\Oempty{\Seqempty}
\newcommand\Labels{\cL}
\newcommand\Ocons[2]{\Tuple{#1} #2}
\newcommand\Osingle[1]{\langle#1\rangle}
\newcommand\Inistate[1]{\State{#1}{\Stempty}}
\newcommand\Len[1]{\mathsf{len}(#1)}
\newcommand\Pcfst{\mathsf S}
\newcommand\Evalst{\mathsf{Ev}}
\newcommand\Evdom[1]{\cD(#1)}
\newcommand{\Labext}{\mathsf{lab}}
\newcommand\PPCFlab{\PPCF_\Labext}
\newcommand\PCFlab{\PCF_\Labext}
\newcommand\Clabext{\mathsf{lc}}
\newcommand\PPCFlc{\PPCF_\Clabext}
\newcommand\Cantor{\cC}
\newcommand\Cantorfin{\Cantor_0}
\newcommand\Evalstlab{\Evalst_\Labext}
\newcommand\Evalstlc{\Evalst_\Clabext}
\newcommand\Evalstlcsh{\Evalst_\Clabext^\eta}
\newcommand\Evalstlcshinv{\Evalst_\Clabext^{-\eta}}
\newcommand\Pcfstlab{\Pcfst_\Labext}
\newcommand\Pcfstlc{\Pcfst_\Clabext}
\newcommand\Striplab[1]{\mathsf{strip}(#1)}
\newcommand\Striplc[1]{\mathsf{strip}(#1)}
\newcommand\Evdomlab[1]{\cD_\Labext(#1)}
\newcommand\Evdomlc[1]{\cD_\Clabext(#1)}
\newcommand\Dicelab[2]{\operatorname{\mathsf{lcoin}}(#1,#2)}
\newcommand\Labof[1]{\mathsf{lab}(#1)}
\newcommand\Lcof[1]{\mathsf{lc}_{#1}}
\newcommand\Evalstf[1]{\mathsf{Ev}(#1)}
\newcommand\Evalstlabf[1]{\Evalstlab(#1)}
\newcommand\Evalstlabfp[2]{\Evalstlab(#1)_{#2}}
\newcommand\Eventconvz[1]{{#1}\downarrow\Num 0}
\newcommand\Proba[1]{\mathbb P_{#1}}
\newcommand\Expect[1]{\mathbb E(#1)}
\newcommand\Spy[1]{\mathsf{sp}_{#1}}
\newcommand\Tamedc[2]{{#1}^{\langle#2\rangle}}
\newcommand\Tdistobs[3]{\mathsf d^{\langle#1\rangle}_{\mathsf{obs}}(#2,#3)}
\newcommand\Tdistobsf[1]{\mathsf d^{\langle#1\rangle}}
\newcommand\Ldistobs[2]{\mathsf d_{\mathsf{lin}}(#1,#2)}
\newcommand\Undef{\mathord\uparrow}
\newcommand\Eset[1]{\{#1\}}
\newcommand\Pneg[2]{\nu_{#1}(#2)}
\newcommand\Zeroc[1]{0_{#1}}
\newcommand\Underc[1]{\underline{#1}}
\newcommand\Sseq[4]{#1\vdash #2:#3:#4}
\newenvironment{theorem}{\begin{thm}}{\end{thm}}
\newenvironment{lemma}{\begin{lem}}{\end{lem}}
\newenvironment{proposition}{\begin{prop}}{\end{prop}}
\newenvironment{remark}{\begin{rem}}{\end{rem}}
\newenvironment{example}{\begin{exa}}{\end{exa}}
\begin{document}
% \linenumbers

\title[Differentials in $\PCOH$]
{Differentials and distances in probabilistic coherence spaces}

\author[T.~Ehrhard]{Thomas Ehrhard}
\address{Université de Paris, IRIF, CNRS, F-75013 Paris, France}
  \email{ehrhard@irif.fr}
  \thanks{This work has been partly funded by the ANR PRC project \emph{Probabilistic Programming Semanics} (PPS) ANR-19-CE48-0014.}

\begin{abstract}
  In probabilistic coherence spaces, a denotational model of
  probabilistic functional languages, morphisms are analytic and
  therefore smooth. We explore two related applications of the
  corresponding derivatives. First we show how derivatives allow to
  compute the expectation of execution time in the weak head reduction
  of probabilistic PCF (pPCF). Next we apply a general notion of
  ``local'' differential of morphisms to the proof of a Lipschitz
  property of these morphisms allowing in turn to relate the
  observational distance on pPCF terms to a distance the model is
  naturally equipped with. This suggests that extending probabilistic
  programming languages with derivatives, in the spirit of the
  differential lambda-calculus, could be quite meaningful.
\end{abstract}

\maketitle

\section*{Introduction}
Currently available denotational models of probabilistic functional
programming (with full recursion, and thus partial computations) can
be divided in three classes.
\begin{itemize}
\item \emph{Game} based models, first proposed
  in~\cite{DanosHarmer00} and further developed by various authors
  (see~\cite{CastellanClairambaultPaquetWinskel18} for an example of
  this approach). From their deterministic ancestors they typically
  inherit good definability features.
\item Models based on Scott continuous functions on domains endowed
  with additional probability related structures. Among these models
  we can mention %Plotkin and Keimel
  \emph{Kegelspitzen}~\cite{KeimelPlotkin17} (domains equipped with an
  algebraic convex structure) and \emph{$\omega$-quasi Borel
    spaces}~\cite{VakarKammarStaton19} (domains equipped with a
  generalized notion of measurability).
\item Models based on (a generalization of) Berry stable
  functions. The first category of this kind was that of
  \emph{probabilistic coherence spaces} (PCSs) and power series with
  non-negative coefficients (the Kleisli category of the model of
  Linear Logic developed in~\cite{DanosEhrhard08}) for which we could
  prove adequacy and full abstraction with respect to a probabilistic
  version of $\PCF$~\cite{EhrhardPaganiTasson18,Ehrhard20}. We
  extended this idea to ``continuous data types'' (such as $\Real$) by
  substituting PCSs with \emph{positive cones} and power series with
  functions featuring an hereditary monotonicity property that we
  called~\emph{stability}\footnote{Because, when reformulated in the
    domain-theoretic framework of Girard's coherence spaces, this
    condition exactly characterizes Berry's stable functions.}
  and~\cite{Crubille18} showed that this extension is actually
  conservative (stable functions on PCSs, which are special positive
  cones, are exactly power series).
\end{itemize}

The main feature of this latter semantics is the extreme regularity of
its morphisms. Being power series, they must be smooth. Nevertheless,
the category $\PCOH$ is not a model of differential linear logic in
the sense of~\cite{Ehrhard18}. This is due to the fact that general
addition of morphisms is not possible (only sub-convex linear
combinations are available) thus preventing, \Eg, the Leibniz rule to
hold in the way it is presented in differential LL. Also a morphism
$X\to Y$ in the Kleisli category $\Kl\PCOH$ can be considered as a
function from the \emph{closed unit ball} of the cone $P$ associated
with $X$ to the closed unit ball of the cone $Q$ associated with
$Y$. From a differential point of view such a morphism is well behaved
only in the interior of the unit ball. On the border derivatives can
typically take infinite values.

\paragraph*{Contents}
We already used the analyticity of the morphisms of $\Kl\PCOH$ to
prove full abstraction results~\cite{EhrhardPaganiTasson18}. We
provide here two more corollaries of this property, involving
also derivatives. For both results, we consider a paradigmatic
probabilistic purely functional programming language\footnote{One
  distinctive feature of our approach is to not consider probabilities
  as an effect.} which is a probabilistic extension of Scott and
Plotkin's PCF. This language $\PPCF$ features a single data type
$\Tnat$ of integers, a simple probabilistic choice operator
$\Dice r:\Tnat$ which flips a coin with probability $r$ to get
$\Num 0$ and $1-r$ to get $\Num 1$. To make probabilistic programming
possible, this language has a $\Let xMN$ construct restricted to $M$
of type $\Tnat$ which allows to sample an integer according to the
sub-probability distribution represented by $M$. The operational
semantics is presented by a deterministic ``stack machine'' which is
an environment-free Krivine machine
parameterized by a choice sequence $\in\Cantorfin=\{0,1\}^{<\omega}$,
presented as a partial \emph{evaluation function}. We adopt a standard
discrete probability approach, considering $\Cantorfin$ as our basic
sample space and the
% \footnote{Usually denoted as $\Omega$ in probability
%   textbooks, a convention that we prefer to avoid here because of a
%   possible clash with $\lambda$-calculus notations.} and the
evaluation function as defining a (total) probability density function
on $\Cantorfin$. We also introduce an extension $\PPCFlab$ of $\PPCF$
where terms can be labeled by elements of a set $\Labels$ of labels,
making it possible to count the use of labeled subterms of a term $M$
(closed and of ground type) during a reduction of $M$. Evaluation for
this extended calculus gives rise to a random variable (r.v.) on
$\Cantorfin$ ranging in the set $\Mfin\cL$ of finite multisets of
elements of $\Labels$. The number of uses of terms labeled by a given
$l\in\Labels$ (which is a measure of the computation time) is then an
$\Nat$-valued r.v., the expectation of which we want to evaluate.  We
prove that, for a given labeled closed term $M$ of type $\Tnat$, this
expectation can be computed by taking a derivative of the
interpretation of this term in the model $\Kl\PCOH$ and provide a
concrete example of computation of such expectations. This result can
be considered as a probabilistic version
of~\cite{DeCarvalho09,DeCarvalho18}. The fact that derivatives can
become infinite on the border of the unit ball corresponds then to the
fact that this expectation of ``computation time'' can be infinite.

In the second application, we consider the contextual distance on
$\PPCF$ terms generalizing Morris equivalence as studied
in~\cite{CrubilleDalLago17} for instance.  The probabilistic features
of the language make this distance too discriminating, putting
\Eg~terms $\Dice 0$ and $\Dice\epsilon$ at distance $1$ for all
$\epsilon>0$ (\emph{probability amplification}). Any cone (and hence
any PCS) is equipped with a norm and hence a canonically defined
metric\footnote{See Remark~\ref{rk:cone-distance} for the definition
  of this distance for general cones.}. Using a \emph{locally defined}
notion of differential of morphisms in $\Kl\PCOH$, we prove that these
morphisms enjoy a Lipschitz property on all balls of radius $p<1$,
with a Lipschitz constant $1/(1-p)$ (thus tending towards $\infty$
when $p$ tends towards $1$). Modifying the definition of the
operational distance by not considering all possible contexts, but
only those which ``perturb'' the tested terms by allowing them to
diverge with probability $1-p$, we upper bound this $p$-tamed distance
by the distance of the model with a ratio $p/(1-p)$. Being in some
sense defined wrt.~\emph{linear} semantic contexts, the denotational
distance does not suffer from the probability amplification
phenomenon. This suggests that $p$-tamed distances might be more
suitable than ordinary contextual distances to reason on probabilistic
programs.

\paragraph*{Notations}
We use $\Realp$ for the set of real numbers $x$ such that $x\geq 0$,
and we set $\Realpc=\Realp\cup\{+\infty\}$.  Given two sets $S$ and
$I$ we use $S^I$ for the set of functions $I\to S$, often considered
as $I$-indexed families $\Vect s$ of elements of $S$. We use the
notation $\Vect s$ (with an arrow) when we want to stress the fact
that the considered object is considered as an indexed family, the
indexing set $I$ being usually easily derivable from the context. The
elements of such a family $\Vect s$ are denoted $s_i$ or $s(i)$
depending on the context (to avoid accumulations of subscripts).
Given $i\in I$ we use $\Base i$ for the function $I\to\Realp$ such
that $\Base i(i)=1$ and $\Base i(j)=0$ if $j\not=i$. In other words
$\Base i(j)=\Kronecker ij$, the Kronecker symbol.  We use $\Mfin I$
for the set of finite multisets of elements of $I$. A multiset is
a function $\mu:I\to\Nat$ such that
$\Supp\mu=\{i\in I\St\mu(i)\not=0\}$ is finite. We use additive
notations for operations on multisets ($0$ for the empty multiset,
$\mu+\nu$ for their pointwise sum). We use $\Mset{\List i1k}$ for the
multiset $\mu$ such that $\mu(i)=\Card{\{j\in\Nat\St i_j=i\}}$. If
$\mu,\nu\in\Mfin I$ with $\mu\leq\nu$ (pointwise order), we set
$\Binom\nu\mu=\prod_{i\in I}\Binom{\nu(i)}{\mu(i)}$ where
$\Binom nm=\frac{\Factor n}{\Factor m\Factor{(n-m)}}$ is the usual
binomial coefficient. Given $\mu\in\Mfin I$ and $i\in I$ we write
$i\in\mu$ if $\mu(i)\not=0$ and we set
$\Supp\mu=\Eset{i\in I\St i\in\mu}$.

We use $I^{<\omega}$ for the set of finite sequences
$\Tuple{\List i1k}$ of elements of $I$ and $\alpha\,\beta$ for the
concatenation of such sequences. We use $\Seqempty$ for the empty
sequence.

\section{Probabilistic coherence spaces (PCS)}

For the general theory of PCSs we refer
to~\cite{DanosEhrhard08,EhrhardPaganiTasson18} where the reader will
find a more detailed presentation, including motivating
examples. Here, we recall only the basic definitions and provide a
characterization of these objects. So this section should not be
considered as an introduction to PCSs: for such an introduction the
reader is advised to have a look at the articles mentioned above. PCSs
are particular \emph{positive cones}, a notion borrowed
from~\cite{Selinger04}) that we used in~\cite{EhrhardPaganiTasson18}
to extend the probabilistic semantics of PCS to continuous data-types
such as the real line.

\subsection{A few words about cones}

A (positive) \emph{pre-cone} is a cancellative\footnote{Meaning that
  $x+y=x'+y\Implies x=x'$.} commutative $\Realp$-semi-module $P$
equipped with a norm $\Normsp\_ P$, that is a map $P\to\Realp$, such
that $\Normsp {r\,x}P=r\,\Normsp xP$ for $r\in\Realp$,
$\Normsp{x+y}P\leq\Normsp xP+\Normsp yP$ and
$\Normsp xP=0\Implies x=0$. It is moreover assumed that
$\Normsp xP\leq\Normsp{x+y}P$, this condition expressing that the
elements of $P$ are positive. Given $x,y\in P$, one says that $x$ is
less than $y$ (notation $x\leq y$) if there exists $z\in P$ such that
$x+z=y$. By the cancellativeness property, if such a $z$ exists, it is
unique and we denote it as $y-x$. This subtraction obeys usual
algebraic laws (when it is defined).  Notice that if $x,y\in P$
satisfy $x+y=0$ then since $\Normsp xP\leq\Normsp{x+y}P$, we have
$x=0$ (and of course also $y=0$). Therefore, if $x\leq y$ and
$y\leq x$ then $x=y$ and so $\leq$ is an order relation.

A (positive) \emph{cone} is a positive pre-cone $P$ whose unit ball
$\Cuball P=\{x\in P\St\Normsp xP\leq 1\}$ is $\omega$-order-complete
in the sense that any increasing sequence of elements of $\Cuball P$
has a least upper bound in $\Cuball
P$. In~\cite{EhrhardPaganiTasson18} we show how a notion of
\emph{stable} function on cones can be defined, which gives rise to a
cartesian closed category and in~\cite{Ehrhard20} we explore the
category of cones and linear and Scott-continuous functions.

\subsection{Basic definitions on PCSs}\label{sec:basics-PCSs}

Given an at most countable set $I$ and $u,u'\in\Realpcto I$, we set
$\Eval u{u'}=\sum_{i\in I}u_iu'_i\in\Realpc$. Given
$P\subseteq\Realpcto I$, we define $\Orth P\subseteq\Realpcto I$ as
\begin{align*}
  \Orth P=\{u'\in\Realpcto I\St\forall u\in P\ \Eval u{u'}\leq 1\}\,.
\end{align*}
Observe that if $P$ satisfies
\( \forall a\in I\,\exists x\in P\ x_a>0 \) and
\( \forall a\in I\,\exists m\in\Realp \forall x\in P\ x_a\leq m \)
then $\Orth P\in\Realpto I$ and $\Orth P$ satisfies the same two
properties.

A probabilistic pre-coherence space (pre-PCS) is a pair
$X=(\Web X,\Pcoh X)$ where $\Web X$ is an at most countable
set\footnote{This restriction is not technically necessary, but very
  meaningful from a philosophic point of view; the non countable case
  should be handled via measurable spaces and then one has to consider
  more general objects as in~\cite{EhrhardPaganiTasson18} for
  instance.} and $\Pcoh X\subseteq\Realpcto{\Web X}$ satisfies
$\Biorth{\Pcoh X}=\Pcoh X$. A probabilistic coherence space (PCS) is a
pre-PCS $X$ such that
% $\Pcoh X\subseteq\Realpto{\Web X}$ and
   %    \begin{align*}
\(
\forall a\in\Web X\,\exists x\in\Pcoh X\ x_a>0
\) and
\(
\forall a\in\Web X\,\exists m\in\Realp \forall x\in\Pcoh X\ x_a\leq m
\)
% \end{align*}
or equivalently
\begin{align*}
  \forall a\in\Web X\quad0<\sup_{x\in\Pcoh X}x_a<\infty
\end{align*}
so that $\Pcoh X\subseteq\Realpto{\Web X}$.

Given any PCS $X$ we can define a cone $\Pcohc X$ as follows:
\begin{linenomath}
\begin{align*}
  \Pcohc X=\{x\in\Realpto{\Web X}\St
  \exists\epsilon>0\ \epsilon x\in\Pcoh X\}
\end{align*}
\end{linenomath}
that we equip with the following norm:
\( \Normsp x{\Pcohc X}=\inf\{r>0\St x\in r\,\Pcoh X\} \) and then it
is easy to check that $\Cuballp{\Pcohc X}=\Pcoh X$. We simply denote
this norm as $\Norm\__X$, so that
$\Norm x_X=\sup_{x'\in\Pcoh{\Orth X}}\Eval x{x'}$.

Given $t\in\Realpcto{I\times J}$ considered as a matrix (where $I$ and
$J$ are at most countable sets) and $u\in\Realpcto I$, we define
$\Matappa tu\in\Realpcto J$ by $(\Matappa tu)_j=\sum_{i\in I}t_{i,j}u_i$
(usual formula for applying a matrix to a vector), and if
$s\in\Realpcto{J\times K}$ we define the product
$\Matapp st\in\Realpcto{I\times K}$ of the matrix $s$ and $t$ as usual
by $(\Matapp st)_{i,k}=\sum_{j\in J}t_{i,j}s_{j,k}$. This is an
associative operation.

Let $X$ and $Y$ be PCSs, a morphism from $X$ to $Y$ is a matrix
$t\in\Realpto{\Web X\times\Web Y}$ such that
$\forall x\in\Pcoh X\ \Matappa tx\in\Pcoh Y$. It is clear that the
identity matrix is a morphism from $X$ to $X$ and that the matrix
product of two morphisms is a morphism and therefore, PCSs equipped
with this notion of morphism form a category $\PCOH$.

The condition $t\in\PCOH(X,Y)$ is equivalent to
% \begin{align*}
\(
\forall x\in\Pcoh X\,\forall y'\in\Pcoh{\Orth Y}\ \Eval{\Matappa tx}{y'}\leq 1
\)
%\end{align*}
but $\Eval{\Matappa tx}{y'}=\Eval t{\Tens x{y'}}$ where
$(\Tens x{y'})_{(a,b)}=x_ay'_b$. We define
$\Limpl XY=(\Web X\times\Web Y,\{t\in\Realpto{\Web{\Limpl
    XY}}\St\forall x\in\Pcoh X\ \Matappa tx\in\Pcoh Y\})$: this is a
pre-PCS by this observation, and checking that it is indeed a PCS is
easy.

We define then $\Tens XY=\Orthp{\Limpl X{\Orth Y}}$; this is a PCS which satisfies
% \begin{align*}
\(
\Pcohp{\Tens XZ}=\Biorth{\{\Tens xz\St x\in\Pcoh X\text{ and }z\in\Pcoh Z\}}
\)
%\end{align*}
where $(\Tens xz)_{(a,c)}=x_az_c$. Then it is easy to see that we have
equipped in that way the category $\PCOH$ with a symmetric monoidal
structure for which it is $\ast$-autonomous wrt.~the dualizing object
$\Bot=\One=(\{*\},[0,1])$ which is also the unit of $\ITens$. The
$\ast$-autonomy follows easily from the observation that
$(\Limpl X\Bot)\Isom\Orth P$.

The category $\PCOH$ is cartesian: if $(X_i)_{i\in I}$ is an at most
countable family of PCSs, then
$(\Bwith_{i\in I}X_i,(\Proj i)_{i\in I})$ is the cartesian product of
the $X_i$s, with
$\Web{\Bwith_{i\in I}X_i}=\Bunion_{i\in I}\{i\}\times\Web{X_i}$,
$(\Proj i)_{(j,a),a'}=1$ if $i=j$ and $a=a'$ and
$(\Proj i)_{(j,a),a'}=0$ otherwise, and
$x\in\Pcohp{\Bwith_{i\in I}X_i}$ if $\Matappa{\Proj i}x\in\Pcoh{X_i}$
for each $i\in I$ (for $x\in\Realpto{\Web{\Bwith_{i\in
      I}X_i}}$). Given $t_i\in\PCOH(Y,X_i)$, the unique morphism
$t=\Tuple{t_i}_{i\in I}\in\PCOH(Y,\Bwith_{i\in I}X_i)$ such that
$\Proj i\Compl t=t_i$ is simply defined by
$t_{b,(i,a)}=(t_i)_{a,b}$. The dual operation $\Bplus_{i\in I}X_i$,
which is a coproduct, is characterized by
$\Web{\Bplus_{i\in I}X_i}=\Bunion_{i\in I}\{i\}\times\Web{X_i}$ and
$x\in\Pcohp{\Bplus_{i\in I}X_i}$ and
$\sum_{i\in I}\Norm{\Proj i\Compl x}_{X_i}\leq 1$.

A particular case is $\Snat=\Bplus_{n\in\Nat}X_n$ where $X_n=\One$ for
each $n$. So that $\Web\Snat=\Nat$ and $x\in\Realpto\Nat$ belongs to
$\Pcoh\Snat$ if $\sum_{n\in\Nat}x_n\leq 1$ (that is, $x$ is a
sub-probability distribution on $\Nat$). For each $n\in\Nat$ we have
$\Base n\in\Pcoh\Snat$ which is the distribution concentrated
on the integer $n$.  There are successor and predecessor morphisms
$\Ssuc,\Spred\in\PCOH(\Snat,\Snat)$ given by
$\Ssuc_{n,n'}=\Kronecker{n+1}{n'}$ and $\Spred_{n,n'}=1$ if $n=n'=0$
or $n=n'+1$ (and $\Spred_{n,n'}=0$ in all other cases). An element of
$\PCOH(\Snat,\Snat)$ is a (sub)stochastic matrix and the very idea of
this model is to represent programs as transformations of this kind,
and their generalizations.

As to the exponentials, one sets $\Web{\Excl X}=\Mfin{\Web X}$ and
$\Pcohp{\Excl X}=\Biorth{\{\Prom x\St x\in\Pcoh X\}}$ where, given
$\mu\in\Mfin{\Web X}$, $\Prom x_\mu=x^\mu=\prod_{a\in\Web
  X}x_a^{\mu(a)}$. Then given $t\in\PCOH(X,Y)$, one defines $\Excl
t\in\PCOH(\Excl X,\Excl Y)$ in such a way that $\Matappa{\Excl t}{\Prom
  x}=\Prom{(\Matappa tx)}$ (the precise definition is not relevant
here; it is completely determined by this equation). We do not need
here to specify the monoidal comonad structure of this
exponential. The resulting cartesian closed category\footnote{This is
  the Kleisli category of ``$\oc$'' which has actually a comonad
  structure that we do not make explicit here, again we refer
  to~\cite{DanosEhrhard08,EhrhardPaganiTasson18}.}  $\Kl\PCOH$ can be
seen as a category of functions (actually, of stable functions as
proved in~\cite{Crubille18}). Indeed, a morphism
$t\in\Kl\PCOH(X,Y)=\PCOH(\Excl X,Y)=\Pcohp{\Limpl{\Excl X}{Y}}$ is
completely characterized by the associated function $\Fun t:\Pcoh
X\to\Pcoh Y$ such that $\Fun t(x)=\Matappa t{\Prom
  x}=\left(\sum_{\mu\in\Web{\Excl X}}t_{\mu,b}x^\mu\right)_{b\in\Web
  Y}$ so that we consider morphisms as power series (they are in
particular monotonic and Scott continuous functions $\Pcoh X\to\Pcoh
Y$). In this cartesian closed category, the product of a family
$(X_i)_{i\in I}$ is $\Bwith_{i\in I}X_i$ (written $X^I$ if $X_i=X$ for
all $i$), which is compatible with our viewpoint on morphisms as
functions since $\Pcohp{\Bwith_{i\in I}X_i}=\prod_{i\in I}\Pcoh{X_i}$
up to trivial iso. The object of morphisms from $X$ to $Y$ is
$\Limpl{\Excl X}{Y}$ with evaluation mapping
$(t,x)\in\Pcohp{\Limpl{\Excl X}{Y}}\times\Pcoh X$ to $\Fun t(x)$ that
we simply denote as $t(x)$ from now on. The well defined function
$\Pcohp{\Limpl{\Excl X}X}\to\Pcoh X$ which maps $t$ to
$\sup_{n\in\Nat}t^n(0)$ is a morphism of $\Kl\PCOH$ (and thus can be
described as a power series in the vector $t=(t_{m,a})_{m\in\Mfin{\Web
    X},a\in\Web X}$) by standard categorical considerations using
cartesian closeness: it provides us with fixed point operators at all
types.
% \footnote{It is quite a valuable miracle that the
%   operator which maps a power series in $\Pcohp{\Limpl{\Excl X}X}$ to
%   its least fixed point is itself a power series.}.

\section{Probabilistic PCF, time expectation and
  derivatives}\label{sec:PPCF-derivatives}
We introduce now the probabilistic functional programming language
considered in this paper. The operational semantics is presented using
elementary probability theoretic tools.

\subsection{The core language}\label{sec:PPCF-core}
The types and terms are given by
\begin{linenomath}
\begin{align*}
  \sigma,\tau,\dots
  &\Bnfeq \Tnat \Bnfor \Timpl\sigma\tau\\
  M,N,P\dots
  &\Bnfeq \Num n \Bnfor \Succ M
    \Bnfor \Pred M \Bnfor x \Bnfor \Dice r
    \Bnfor \Let xMN \Bnfor \If MNP\\
  &\quad\quad\Bnfor \App MN \Bnfor \Abst x\sigma M
    \Bnfor \Fix M% \Bnfor \Mark Ml
\end{align*}
\end{linenomath}

See Fig.~\ref{fig:pPCF-typing} for the typing rules, with typing
contexts $\Gamma=(x_1:\sigma_1,\dots,x_n:\sigma_n)$; notice that this
figures includes the typing rules for the stacks that we introduce
below. It is important to keep in mind that it would not make sense to
extend the construction $\Let zMN$ to terms $M$ which are not of type
$\Tnat$. This construction uses essentially the fact that the type
$\Tnat$ is a \emph{positive} formula of linear logic,
see~\cite{EhrhardTasson16}.
% \begin{center}
%   \AxiomC{}
%   \UnaryInfC{$\Tseq{\Gamma}{\Num n}{\Tnat}$}
%   \DisplayProof
%   \quad
%   \AxiomC{}
% \end{center}
%
\begin{figure}
\footnotesize{
\begin{center}
  \AxiomC{}
  \UnaryInfC{$\Tseq\Gamma{\Num n}\Tnat$}
  \DisplayProof
  \quad
  \AxiomC{}
  \UnaryInfC{$\Tseq{\Gamma,x:\sigma}{x}\sigma$}
  \DisplayProof
  \quad
  \AxiomC{$\Tseq\Gamma M\Tnat$}
  \UnaryInfC{$\Tseq\Gamma{\Succ M}\Tnat$}
  \DisplayProof
  \quad
  \AxiomC{$\Tseq\Gamma M\Tnat$}
  \UnaryInfC{$\Tseq\Gamma{\Pred M}\Tnat$}
  \DisplayProof
\end{center}
\begin{center}
  % \AxiomC{$\Tseq\Gamma M\sigma$}
  % \UnaryInfC{$\Tseq\Gamma{\Mark Ml}\sigma$}
  % \DisplayProof
  % \quad
  \AxiomC{$\Tseq\Gamma M\Tnat$}
  \AxiomC{$\Tseq\Gamma N\sigma$}
  \AxiomC{$\Tseq\Gamma P\sigma$}
  \TrinaryInfC{$\Tseq\Gamma{\If MNP}\sigma$}
  \DisplayProof
  \quad
  \AxiomC{$\Tseq\Gamma M\Tnat$}
  \AxiomC{$\Tseq{\Gamma,z:\Tnat}{N}\sigma$}
  \BinaryInfC{$\Tseq\Gamma{\Let zMN}\sigma$}
  \DisplayProof
\end{center}
\begin{center}
  \AxiomC{$\Tseq{\Gamma,x:\sigma}M\tau$}
  \UnaryInfC{$\Tseq\Gamma{\Abst x\sigma M}{\Timpl\sigma\tau}$}
  \DisplayProof
  \quad
  \AxiomC{$\Tseq\Gamma M{\Timpl\sigma\tau}$}
  \AxiomC{$\Tseq\Gamma N\sigma$}
  \BinaryInfC{$\Tseq\Gamma{\App MN}\tau$}
  \DisplayProof
  \quad
  \AxiomC{$\Tseq\Gamma M{\Timpl\sigma\sigma}$}
  \UnaryInfC{$\Tseq\Gamma{\Fix M}\sigma$}
  \DisplayProof
  \quad
  \AxiomC{$r\in[0,1]\cap\Rational$}
  \UnaryInfC{$\Tseq\Gamma{\Dice r}\Tnat$}
  \DisplayProof
\end{center}
\begin{center}
  \AxiomC{}
  \UnaryInfC{$\Tseqst\Tnat{\Stempty}$}
  \DisplayProof
  \quad
  \AxiomC{$\Tseq {}M\sigma$}
  \AxiomC{$\Tseqst\tau\pi$}
  \BinaryInfC{$\Tseqst{\Timpl\sigma\tau}{{\Starg M}\Stcons\pi}$}
  \DisplayProof
  \quad
  \AxiomC{$\Tseqst\Tnat\pi$}
  \UnaryInfC{$\Tseqst\Tnat{\Stsucc\Stcons\pi}$}
  \DisplayProof
  \quad
  \AxiomC{$\Tseqst\Tnat\pi$}
  \UnaryInfC{$\Tseqst\Tnat{\Stpred\Stcons\pi}$}
  \DisplayProof
\end{center}
\begin{center}
  \AxiomC{$\Tseq{}N\sigma$}
  \AxiomC{$\Tseq{}P\sigma$}
  \AxiomC{$\Tseqst\sigma\pi$}
  \TrinaryInfC{$\Tseqst\Tnat{\Stif NP\Stcons\pi}$}
  \DisplayProof
  % \quad
  % \AxiomC{$\Tseq{}N\sigma$}
  % \AxiomC{$\Tseqst\sigma\pi$}
  % \BinaryInfC{$\Tseqst\Tone{\Stoneif N\Stcons\pi}\phi$}
  % \DisplayProof
  \quad
  \AxiomC{$\Tseq{x:\Tnat}{N}{\sigma}$}
  \AxiomC{$\Tseqst\sigma\pi$}
  \BinaryInfC{$\Tseqst\Tnat{{\Stlet xN}\Stcons\pi}$}
  \DisplayProof
\end{center}}
  \caption{Typing rules for $\PPCF$ terms and stacks}
  \label{fig:pPCF-typing}
\end{figure}

\subsubsection{Denotational semantics}\label{sec:pPCF-den-sem-pcoh}
We survey briefly the interpretation of $\PPCF$ in PCSs thoroughly
described in~\cite{EhrhardPaganiTasson18}. Types are interpreted by
$\Tsem\Tnat=\Snat$ and
$\Tsem{\Timpl\sigma\tau}=\Limpl{\Excl{\Tsem\sigma}}{\Tsem\tau}$. Given
$M\in\PPCF$ such that $\Tseq\Gamma M\sigma$ (with
$\Gamma=(x_1:\sigma_1,\dots,x_k:\sigma_k)$) one defines
$\Psem M\Gamma\in\Kl\PCOH(\Bwith_{i=1}^k\Tsem{\sigma_i},\Tsem\sigma)$
(a ``Kleisli morphism'') that we see as a function
$\prod_{i=1}^k\Pcoh{\Tsem{\sigma_i}}\to\Pcoh{\Tsem\sigma}$ as
explained in Section~\ref{sec:basics-PCSs}.
% For instance $\Psem{x_i}\Gamma(\Vect u)=u_i$,
% $\Psem{\Num n}\Gamma(\Vect u)=\Snum n$ (remember that
% $\Snum n\in\Pcoh\Snat$ is defined by $\Snum n_i=\Kronecker ni$),
% $\Psem{\Succ M}\Gamma(\Vect u)=\Matappa\Ssuc{\Psem M\Gamma(\Vect u)}$
% and similarly for $\Pred M$, more importantly
These functions are given by
\begin{linenomath}
  \begin{align*}
    \Psem{\Num n}\Gamma(\Vect u)
    &=\Snum n\\
    \Psem{x_i}\Gamma(\Vect u)
    &=u_i\\
    \Psem{\Dice r}\Gamma(\Vect u)
    &=r\,\Snum 0+(1-r)\,\Snum 1\\
    \Psem{\Succ M}\Gamma(\Vect u)
    &=\Matappa\Ssuc{\Psem M\Gamma(\Vect u)}
      =\sum_{n\in\Nat}\Psem M\Gamma(\Vect u)_n\Snum{n+1}\\
    \Psem{\Pred M}\Gamma(\Vect u)
    &=\Matappa\Spred{\Psem M\Gamma(\Vect u)}
      =\Psem M\Gamma(\Vect u)_{0}\Snum{0}
      +\sum_{n\in\Nat}\Psem M\Gamma(\Vect u)_{n+1}\Snum{n}\\
    \Psem{\Let xMN}\Gamma(\Vect u)
    &=\sum_{n\in\Nat}\Psem M\Gamma(\Vect
      u)_n\,\Psem{\Subst N{\Num n}x}\Gamma(\Vect u)\\
    \Psem{\If MNP}\Gamma(\Vect u)
    &=\Psem M\Gamma(\Vect u)_0\,\Psem
      N\Gamma(\Vect u)+\left(\sum_{n\in\Nat}
      \Psem M\Gamma(\Vect u)_{n+1}\right)\Psem
      P\Gamma(\Vect u)\\
    \Psem{\App MN}\Gamma(\Vect u)
    &=(\Psem M\Gamma(\Vect u))(\Psem N\Gamma(\Vect u))\\
    \Psem{\Fix M}\Gamma(\Vect u)
    &=\sup_{n\in\Nat}(\Psem M\Gamma(\Vect u))^n(0)
\end{align*}
\end{linenomath}
and, assuming that $\Tseq{\Gamma,x:\sigma}M\tau$ and
$\Vect u\in\prod_{i=1}^k\Pcoh{\Tsem{\sigma_i}}$,
$\Psem{\Abst x\sigma M}\Gamma(\Vect u)$ is the element $t$ of
$\Pcoh{(\Limpl{\Excl{\Tsem\sigma}}{\Tsem\tau})}$ characterized by
$\forall u\in\Pcoh{\Tsem\sigma}\ \Fun t(u)=\Psem
M{\Gamma,x:\sigma}(\Vect u,u)$.

\subsubsection{Operational semantics}\label{sec:PCF-operational}
In former papers we have presented the operational semantics of
$\PPCF$ as a discrete Markov chain on states which are the closed
terms of $\PPCF$. This Markov chain implements the standard weak head
reduction strategy of PCF which is deterministic for ordinary PCF but
features branching in $\PPCF$ because of the $\Dice r$ construct
(see~\cite{EhrhardPaganiTasson18}). Here we prefer another, though
strictly equivalent, presentation of this operational semantics, based
on an environment-free Krivine Machine (thus handling states which are
pairs made of a closed term and a closed stack) further parameterized
by an element of $\{0,1\}^{<\omega}$ to be understood as a ``random
tape'' prescribing the values taken by the $\Dice r$ terms during the
execution of states. We present this machine as a partial function
taking a state $s$, a random tape $\alpha$ and returning an element of
$[0,1]$ to be understood as the probability that the sequence $\alpha$
of $0$/$1$ choices occurs during the execution of $s$. We allow only
execution of ground type states and accept $\Num 0$ as the only
terminating value: a completely arbitrary choice, sufficient for
our purpose in this paper. Also, we insist that a terminating
computation from $(s,\alpha)$ completely consumes the random tape
$\alpha$. These choices allow to fit within a completely standard
discrete probability setting.

Given an extension $\Lambda$ of $\PPCF$ (with the same format
for typing rules), we define the associated language of stacks (called
$\Lambda$-stacks).
% Intuitively, they are linear continuations which
% return integer values.
\begin{linenomath}
\begin{align*}
%\(
  \pi \Bnfeq \Stempty \Bnfor {\Starg M}\Stcons\pi
  \Bnfor {\Stsucc}\Stcons\pi \Bnfor {\Stpred}\Stcons\pi
  \Bnfor {\Stif NP}\Stcons\pi % \Bnfor {\Stoneif N}\Stcons\pi
  \Bnfor {\Stlet xN}\Stcons\pi
%\)
\end{align*}
\end{linenomath}
where $M$ and $N$ range over $\Lambda$.  A stack typing judgment is of
shape $\Tseqst{\sigma}{\pi}$ (meaning that the stack $\pi$ takes a
term of type $\sigma$ and returns an integer) and the typing rules are
given in Fig.~\ref{fig:pPCF-typing}.

A \emph{state} is a pair $\State{M}{\pi}$ (where we say that $M$ is
\emph{in head position}) such that $\Tseq{}M\sigma$ and
$\Tseqst\sigma\pi$ for some (uniquely determined) type $\sigma$, let
$\Pcfst$ be the set of states.  Let $\Cantorfin=\{0,1\}^{<\omega}$ be
the set of finite lists of booleans (random tapes), we define a
\emph{partial} function $\Evalst:\Pcfst\times\Cantorfin\to\Intercc01$ in
Fig.~\ref{fig:PPCF-Krivine}\footnote{Notice that all the equations
  defining $\Evalst$ in Fig.~\ref{fig:PPCF-Krivine} are well-typed in
  the sense that if the state of the LHS of an equation is well-typed,
  so is its RHS.} where we use the functions
\begin{linenomath}
\begin{align}\label{eq:Pneg-def}
  \Pneg 0r&=r\\
  \Pneg 1r&=1-r\,.
\end{align}%
\end{linenomath}
\begin{figure} {\footnotesize \begin{alignat*}{3} &\Evalst(\State{\Let
        xMN}\pi,\alpha)=\Evalst(\State M{\Stlet xN\Stcons\pi},\alpha)
      &\quad &\Evalst(\State{\App MN}{\pi},\alpha)
      =\Evalst(\State M{\Starg N\Stcons\pi},\alpha)\\
      &\Evalst(\State{\Num n}{\Stlet xN\Stcons\pi},\alpha)
      =\Evalst(\State{\Subst N{\Num n}x}{\pi},\alpha) &\quad
      &\Evalst(\State{\Abst x\sigma M}{\Starg N\Stcons\pi},\alpha)
      =\Evalst(\State{\Subst MNx}{\pi},\alpha)\\
      &\Evalst(\State{\If MNP}{\pi}) =\Evalst(\State{M}{\Stif
        NP\Stcons\pi},\alpha) &\quad &\Evalst(\State{\Fix
        M}{\pi},\alpha)
      =\Evalst(\State{M}{\Starg{\Fix M}\Stcons\pi},\alpha)\\
      &\Evalst(\State{\Num 0}{\Stif NP\Stcons\pi},\alpha)
      =\Evalst(\State{N}{\pi},\alpha) &\quad &\Evalst(\State{\Dice
        r}{\pi},\Ocons i\alpha)
      =\Evalst(\State{\Num i}{\pi},\alpha)\cdot\Pneg ir\\
      &\Evalst(\State{\Num{n+1}}{\Stif NP\Stcons\pi},\alpha)
      =\Evalst(\State{P}{\pi},\alpha) &\quad &\Evalst(\State{\Num
        0}{\Stempty},\Oempty)=1\\
      &\Evalst(\State{\Succ M}\pi,\alpha)
      =\Evalst(\State{M}{\Stsucc\Stcons\pi},\alpha)&\quad
      &\Evalst(\State{\Pred M}\pi,\alpha)
      =\Evalst(\State{M}{\Stpred\Stcons\pi},\alpha)\\
      &\Evalst(\State{\Num n}{\Stsucc\Stcons\pi},\alpha)
      =\Evalst(\State{\Num{n+1}}{\pi},\alpha)&\quad
      &\Evalst(\State{\Num 0}{\Stpred\Stcons\pi},\alpha)
      =\Evalst(\State{\Num 0}{\pi},\alpha)\\
      &&\quad
      &\Evalst(\State{\Num{n+1}}{\Stpred\Stcons\pi},\alpha)
      =\Evalst(\State{\Num n}{\pi},\alpha)
  % \Evalst(\State{\App MN}{\pi},\alpha)
%   &=\Evalst(\State{M}{\Starg N\Stcons\pi},\alpha)\\
%   \Evalst(\State{\Abst x\sigma M}{\Starg N\Stcons\pi},\alpha)
%   &=\Evalst(\State{\Subst MNx}{\pi},\alpha)\\
%   \Evalst(\State{\Fix M}{\pi},\alpha)
%   &=\Evalst(\State{M}{\Starg{\Fix M}\Stcons\pi},\alpha)\\
%   \Evalst(\State{\Dice r}{\pi},\Ocons 0\alpha)
%   &=\Evalst(\State{\Num 0}{\pi},\alpha)\cdot r\\
%   \Evalst(\State{\Dice r}{\pi},\Ocons 1\alpha)
%   &=\Evalst(\State{\Num 0}{\pi},\alpha)\cdot(1-r)\\
%   \Evalst(\State{\Num 0}{\Stempty},\Oempty)
%   &=1\,.
\end{alignat*}}  
  \caption{The $\PPCF$ Krivine Machine}
  \label{fig:PPCF-Krivine}
\end{figure}%
% DEJA DIT:
% The only possibility for a state to terminate is that it evaluates to
% $\Num 0$ (this is of course an arbitrary choice).
Let $\Evdom s$ be the set of all $\alpha\in\Cantorfin$ such that
$\Evalst(s,\alpha)$ is defined.
% Ceci est vrai mais ne sert plus à rien:
%
% The elements of $\Evdom s$ are pairwise incomparable for the prefix order.
When $\alpha\in\Evdom s$, the number $\Evalst(s,\alpha)\in[0,1]$ is
the probability that the random tape $\alpha$ occurs during the
execution. When all coins are fair (all the values of the parameters
$r$ are $1/2$), this probability is $2^{-\Len\alpha}$. The sum of
these (possibly infinitely many) probabilities is $\leq 1$. For
fitting within a standard probabilistic setting, we define a total
probability distribution $\Evalstf s:\Cantorfin\to[0,1]$ as follows
\begin{linenomath}
\begin{equation*}
  \Evalstf s(\alpha)=
  \begin{cases}
    \Evalst(s,\beta) & \text{if }\alpha=\Ocons 0\beta
    \text{ and }\beta\in\Evdom s\\
    1-\sum_{\beta\in\Evdom s}\Evalst(s,\beta) & \text{if }\alpha=\Tuple 1\\
    0 & \text{in all other cases}
  \end{cases}
\end{equation*}
\end{linenomath}
so that $\Tuple 1$ carries the weight of divergence. We prefer this
option rather than adding an error element to $\Cantorfin$ which would
be more natural from a programming language point of view, but less
standard from the viewpoint of probability theory. This choice is
arbitrary and has no impact on the result we prove because all the
events of interest for us will be subsets of
$\Ocons 0{\Cantorfin}\subset\Cantorfin$.

Let $\Proba s$ be the associated probability measure. We are in a discrete
setting so simply
\begin{align*}
  \Proba s(A)=\sum_{\alpha\in A}\Evalstf s(\alpha)
\end{align*}
for all $A\subseteq\Cantorfin$.
The event
\begin{linenomath}
\begin{align*}
  (\Eventconvz s)=\Ocons 0{\Evdom s}
\end{align*}
\end{linenomath}
is the set of all random tapes (up to $0$-prefixing) making $s$ reduce
to $\Num 0$. Its probability is
\begin{linenomath}
\begin{align*}
  \Proba s(\Eventconvz s)=\sum_{\beta\in\Evdom s}\Evalst(s,\beta)\,.
\end{align*}
\end{linenomath}
In the case $s=\Inistate M$ (with $\Tseq{}M\Tnat$) this probability is
\emph{exactly the same} as the probability of $M$ to reduce to
$\Num 0$ in the Markov chain setting of~\cite{EhrhardPaganiTasson18}
(see \Eg~\cite{BorgstromDalLagoGordonSzymczak16} for more details on
the connection between these two kinds of operational semantics).
% , that is the sum of all probabilistic weights of reduction paths
% from $M$ to $\Num 0$, since each element of $\Evdom s$ determines
% exactly one such path.
So the Adequacy Theorem of~\cite{EhrhardPaganiTasson18} can be
expressed as follows.
\begin{theorem}\label{th:pcoh-adequacy}
  Let $M\in\PPCF$ with $\Tseq{}M\Tnat$. Then
  ${\Psem M{}}_0=\Proba{\Inistate M}{(\Eventconvz{\Inistate M})}$.
\end{theorem}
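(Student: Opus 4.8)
The plan is to show that the denotational coefficient $\Psem M{}_0$, which by construction is a power series in no variables (a constant, since $M$ is closed), equals the total probability mass $\sum_{\beta\in\Evdom{\Inistate M}}\Evalst(\Inistate M,\beta)$ collected by the stack machine over all converging choice sequences. The natural strategy is to relate both sides to a common intermediate object: the probabilistic rewriting semantics of $\PPCF$ mentioned just before the statement. Since the paper explicitly points out (citing \cite{EhrhardPaganiTasson18,BorgstromDalLagoGordonSzymczak16}) that $\Proba{\Inistate M}{(\Eventconvz{\Inistate M})}$ coincides with the sum of weights of reduction paths from $M$ to $\Num 0$ in that system, the real content is to transport the already-proved Adequacy Theorem of \cite{EhrhardPaganiTasson18} across this identification. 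So the first step is to make precise the bijection between elements $\beta\in\Evdom{\Inistate M}$ and successful reduction paths $M\to^*\Num 0$, checking that the stack-machine weight $\Evalst(\Inistate M,\beta)$ matches the product of rewriting-step probabilities along the corresponding path.

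\emph{First} I would establish the correspondence between the stack machine and the small-step probabilistic reduction by induction on the length of evaluation. Reading the defining clauses of $\Evalst$, each transition is deterministic except at a $\Dice r$ redex, where the two clauses for $\Ocons 0\alpha$ and $\Ocons 1\alpha$ branch with weights $r$ and $1-r$; every other clause is weight-preserving and simply rearranges the state $\State M\pi$ according to the focus of head reduction. Thus a choice sequence $\alpha$ records exactly the outcomes of the coin flips encountered along the unique machine run it determines, and $\Evalst(\Inistate M,\beta)$ is precisely the product of the $r$'s and $(1-r)$'s met along that run — i.e. the probabilistic weight of one reduction path to $\Num 0$. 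Since distinct $\beta$ determine distinct paths and conversely, summing over $\Evdom{\Inistate M}$ reproduces $\sum_{\text{paths }M\to^*\Num 0}$ of their weights, which is by definition $\Proba{\Inistate M}{(\Eventconvz{\Inistate M})}$.

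\emph{Next}, with the operational side pinned down, I would invoke the Adequacy Theorem of \cite{EhrhardPaganiTasson18}, which asserts that for a closed term $M$ of type $\Tnat$ the denotational coefficient $\Psem M{}_0$ equals the probability that $M$ reduces to $\Num 0$ in the probabilistic rewriting system. The semantic clauses recalled in this section — in particular $\Psem{\Dice r}{}=r\,\Snum 0+(1-r)\,\Snum 1$, the $\Let{}{}{}$ and $\If{}{}{}$ clauses, and the fixpoint clause $\Psem{\Fix M}{}=\sup_n\Psem M{}^n(0)$ — are exactly the ones making the interpretation a sound and adequate model of that reduction, so no new denotational computation is needed. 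Combining the two identifications yields ${\Psem M{}}_0=\Proba{\Inistate M}{(\Eventconvz{\Inistate M})}$, as claimed.

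\emph{The hard part} is not the denotational side, which is black-boxed by citation, but the careful bookkeeping in the machine/rewriting correspondence: one must verify that the head-reduction discipline enforced by the stack $\pi$ computes the \emph{same} quantity as the (possibly non-deterministic-looking) small-step system, that the prefixing convention $\Ocons 0\alpha$ versus the overflow onto $\Tuple 1$ in $\Evalstf s$ does not alter the mass assigned to convergence (the footnote already isolates why the $\Tuple 1$ accumulation is harmless, since all events of interest live in $\Ocons 0{\Cantorfin}$), and crucially that $\sup_n\Psem M{}^n(0)$ on the semantic side corresponds to the supremum over finite reduction prefixes — i.e. that continuity of the fixpoint interpretation matches the $\omega$-completeness of $\Cuball{\Pcoh\Snat}$ used to take the limit of the increasing family of partial convergence probabilities. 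This monotone-limit matching is where the $\omega$-order-completeness of the unit ball, recalled in the cone discussion, does the essential work.
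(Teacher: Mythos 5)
Your proposal is correct and follows essentially the same route as the paper: the paper also treats this theorem as a reformulation of the Adequacy Theorem of~\cite{EhrhardPaganiTasson18}, obtained by observing that $\Proba{\Inistate M}(\Eventconvz{\Inistate M})=\sum_{\beta\in\Evdom{\Inistate M}}\Evalst(\Inistate M,\beta)$ and that each $\beta\in\Evdom{\Inistate M}$ determines exactly one reduction path to $\Num 0$ with matching weight (citing~\cite{BorgstromDalLagoGordonSzymczak16} for the details of that correspondence). Your additional care about the bijection of choice sequences with reduction paths and the monotone-limit matching for fixpoints is exactly the bookkeeping the paper delegates to the cited references.
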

We use sometimes $\Probared M{\Num 0}$ as an abbreviation for
$\Proba{\Inistate M}{(\Eventconvz{\Inistate M})}$.

We shall introduce several versions of PCF in the sequel, with
associated machines.
\begin{itemize}
\item In Section~\ref{sec:ppcf-labels} we introduce $\PPCFlab$ where
  terms can be labeled by elements of $\Labels$ and the machine
  $\Evalstlab$ which returns a multiset of elements of $\Labels$
  counting how many times labeled subterms arrive in head position
  during the evaluation. 
\item In Section~\ref{sec:PCF-lc} we introduce $\PPCFlc$ which
  includes a labeled version of the $\Dice\_$ construct, and the
  associated machine $\Evalstlc$ which returns an element of
  $\Realp$ (a probability actually).
\item In the same section we introduce as an auxiliary tool the
  machine $\Evalstlcsh$ which differs from the previous one by the
  fact that it returns an element of $\Cantorfin$.
\item We also introduce a machine $\Evalstlcshinv$ which is a kind of
  inverse of the previous one.
\end{itemize}
In the proofs these machines will often be used with an additional
integer parameter for indexing the execution steps.

It is important to notice that the $\PPCFlc$ language and the
associated machines are only an \emph{intermediate step} in the proof
of Theorem~\ref{th:expect-time-diff}, the main result of this section,
whose statement does not mention them at all.

% Avec les nouvelles notations \Evalstf etc, on devrait retomber dans
% un cadre standard de probabilités discrètes.
%
% \subsubsection{A more standard approach}
% \label{sec:ppcf-standard}
% Our use of $\Cantorfin$ as sample space is non standard, it would be
% better to take the Cantor space $\Cantor=\{0,1\}^\omega$ equiped with
% its standard measure $\lambda$ such that
% $\lambda(\Cantbopen\alpha)=2^{-\Len\alpha}$ (where, given
% $\alpha\in\Cantorfin$, the basic open $\Cantbopen\alpha$ is the set of
% all elements of $\Cantor$ which admit $\alpha$ as a prefix). In this
% setting we would introduce an open subset $\Evdom$
% the density function $d_s:\Cantor\to\Realp$
% defined by $d_s(\theta)=2^{\Len\alpha}\Evalst(s,\alpha)$ where
% $\alpha$ is the unique prefix of $\theta$ which belongs to $\Evdom s$
% and $d_s(\theta)=0$ if there is no such $\alpha$ (it is continuous
% even if we equip the target space with the discrete topology). Then we
% have
% $\Probared M{\Num 0}=\int_\Cantor d_{\Inistate
%   M}(\theta)\,\lambda(d\theta)$.  We stick to our more concrete
% setting in this paper.

\subsection{Probabilistic PCF with labels and the associated random
  variables}\label{sec:ppcf-labels}
In order to count the number of times a given subterm $N$ of a closed
term $M$ of type $\Tnat$ is used (that is, arrives in head position)
during the execution of $\State M\Stempty$ in the Krivine machine of
Section~\ref{sec:PCF-operational}, we extend $\PPCF$ into $\PPCFlab$
by adding a term labeling construct $\Mark Nl$ for $l$ belonging to a
fixed set $\Labels$ of labels.
% It allows counting,
% during the execution of a closed term $M$ of type $\Tnat$, how many
% occurrences of subterms labeled by $l$ arrive in head position.
% Typically, if
% $\Tseq{}{N}{\Timpl\Tnat\Tnat}$ with $N\in\PPCF$,
% $\App N{\Mark{\Num n}l}$ will allow to count how many copies of the
% value $\Num n$ are used in a successful run of $\App N{\Num n}$.
The typing rule for this new construct is simply
\begin{center}
  \AxiomC{$\Tseq\Gamma N\sigma$}
  \UnaryInfC{$\Tseq\Gamma{\Mark Nl}\sigma$}
  \DisplayProof
\end{center}
Of course $\PPCFlab$-stacks involve now such labeled terms but their
syntax is not extended otherwise; let $\Pcfstlab$ be the corresponding
set of states. Then we define a partial function
\[
  \Evalstlab:\Pcfstlab\times\Cantorfin\to\Mfin\Labels
\]
exactly as $\Evalst$ apart for the following cases,
\begin{linenomath}
\begin{align*}
  \Evalstlab(\State{\Mark Ml}{\pi},\alpha)
  &=\Evalstlab(\State{M}{\pi},\alpha)+\Mset l\\
   \Evalstlab(\State{\Dice r}{\pi},\Ocons i\alpha)
  &=\Evalstlab(\State{\Num i}{\pi},\alpha)\\
  % \Evalstlab(\State{\Dice r}{\pi},\Ocons 1\alpha)
  % &=\Evalstlab(\State{\Num 1}{\pi},\alpha)\\
  \Evalstlab(\State{\Num 0}{\Stempty},\Oempty)
  &=0\quad\text{the empty multiset}.
\end{align*}
\end{linenomath}
When applied to $\State M\Stempty$, this function counts how often
labeled subterms of $M$ arrive in head position during the reduction;
these numbers, represented altogether as a multiset of elements of
$\Labels$, depend of course on the random tape provided as argument
together with the state. % The result is a finite multiset of labels.

Let $\Evdomlab s$ be the set of $\alpha$'s such that
$\Evalstlab(s,\alpha)$ is defined.  Define $\Striplab s\in\Pcfst$ as
$s$ stripped from its labels.

\begin{lemma}\label{lemma:Evaldom-strip}
  For any $s\in\Pcfstlab$ we have
  $\Evdomlab{s}=\Evdom{\Striplab s}$.
\end{lemma}
\begin{proof}
  Simple inspection of the definition of the two involved
  functions. More precisely, an easy induction on $n$ shows that
  \begin{linenomath}
  \begin{align*}
    \forall n\in\Nat\quad
    \Evalst(\Striplab s,\alpha,n)\not=\Undef
    \Equiv
    \exists k\geq n\ \Evalstlab(s,\alpha,k)\not=\Undef\,.
  \end{align*}
  \end{linenomath}
\end{proof}

We define a
r.v.\footnote{That is, simply, a function since we are in a
  discrete probability setting.}
$\Evalstlabf s:\Cantorfin\to\Mfin\Labels$ by
\begin{linenomath}
\begin{equation*}
  \Evalstlabf s(\alpha)=
  \begin{cases}
    \Evalstlab(s,\beta) & \text{if }\alpha=\Ocons 0\beta
    \text{ and }\beta\in\Evdom{\Striplab s}\\
    0 & \text{in all other cases.}
  \end{cases}
\end{equation*}
\end{linenomath}
Let $l\in\Labels$ and let $\Evalstlabfp sl:\Cantorfin\to\Nat$ be the
\emph{integer} r.v.~defined by
$\Evalstlabfp sl(\alpha)=\Evalstlabf s(\alpha)(l)$. Its expectation is
\begin{linenomath}
\begin{align}\label{eq:esp-evalstlab-multiset}
  \begin{split}
  \Expect{\Evalstlabfp sl}
  &=\sum_{n\in\Nat}n\,\Proba s(\Evalstlabfp sl=n)\\
  &=\sum_{n\in\Nat}n\sum_{\Biind{\mu\in\Mfin\Labels}{\mu(l)=n}}
    \Proba s(\Evalstlabf s=\mu)\\
  &=\sum_{\mu\in\Mfin L}\mu(l)\Proba s(\Evalstlabf s=\mu)\,.
  \end{split}
\end{align}
\end{linenomath}
This is the expected number of occurrences of $l$-labeled
subterms of $s$ arriving in head position during successful executions
of $s$. It is more meaningful to condition this expectation under
convergence of the execution of $s$ (that is, under the event
$\Eventconvz{\Striplab s}$). We have
\begin{linenomath}
\begin{equation*}
%\(
  \Expect{\Evalstlabfp sl\mid\Eventconvz{\Striplab
      s}}=\frac{\Expect{\Evalstlabfp {s}l}}{\Proba {\Striplab
      s}(\Eventconvz{\Striplab s})}
%\)
\end{equation*}
\end{linenomath}
as the r.v.~$\Evalstlabfp sl$ vanishes outside the event
$\Eventconvz s$ since $\Evdomlab s=\Evdom{\Striplab s}$.

% Given $l\in\Labels$, the integer $\Evalstlab(\Inistate M,\alpha)(l)$
% is the number $\Nhead Ml$ (considered as a random variable) of times a
% subterm of $M$ labeled by $l$ arrives in head position during the
% reduction of $M$ to $\Num 0$, with expectation
% \begin{align*}
%   \Expect{\Nhead Ml}=\sum_{\alpha\in\Evdom{\Inistate{\Striplab M}}}
%   \Evalst(\Inistate{\Striplab M},\alpha)\,\Evalstlab(\Inistate M,\alpha)(l)\,.
% \end{align*}

% \subsubsection{More standard approach}
% \label{sec:ppcflab-standard}
% Following~\ref{sec:ppcf-standard}, we can define for each
% $s\in\Pcfstlab$ and $l\in\Labels$ a random variable
% $c_{s,l}:\Cantor\to\Nat$ which maps $\theta$ to
% $\Evalstlab(s,\alpha)(l)$ if $\alpha\in\Evdom{\Striplab s}$ is a
% prefix of $\theta$ (and then there is only one such $\alpha$) and to
% $0$ if $\theta$ has no prefix in $\Evdom{\Striplab s}$. Then our
% expectation is given by the more standard formula
% $\Expect{\Nhead Ml}=\int_\Cantor c_{s,l}(\theta)\,d_{\Striplab
%   s}(\theta)\,\lambda(d\theta)$.

\subsection{A bird's eye view of the proof}
Our goal now is to extract this expectation from the denotational
semantics of a term $M$ such that $\Tseq{}M\Tnat$, which contains
labeled subterms, or rather of a term suitably definable from $M$.

For this purpose we will replace in $M$ each $\Mark Nl$ (where $N$ has
type $\sigma$) with $\If{x_l}{N}{\Loopt\sigma}$ where
$\Vect x=(x_l)_{l\in L}$ (for some finite subset $L$ of $\Labels$
containing all the labels occurring in $M$) is a family of pairwise
distinct variables of type $\Tnat$ and
$\Loopt\sigma=\Fix{\Abst x\sigma x}$ (an ever-looping term). We will
obtain in that way in Section~\ref{sec:spy-term} a term
$\Spy{\Vect x}M$ such that
\begin{linenomath}
\begin{align*}
  \Psem{\Spy{\Vect x}M}{\Vect x}\in\Kl\PCOH(\Snat^L,\Snat)\,.
\end{align*}
\end{linenomath}
We will consider this function as an analytic function
$(\Pcoh\Snat)^L\to\Pcoh\Snat$ which therefore induces an analytic
function
\begin{linenomath}
\begin{align*}
f:[0,1]^L&\to[0,1]\\
\Vect r&\mapsto\Psem{\Spy{\Vect x}M}{}((r_l\Base 0)_{l\in L})_0
\end{align*}
\end{linenomath}
(where $\Vect r\,\Base 0=(r_l\,\Base 0)_{l\in L}\in\Pcoh\Snat^L$ for
$\Vect r\in[0,1]^L$). We will prove that the expectation of the
number of uses of subterms of $M$ labeled by $l$ is
\begin{linenomath}
  \begin{align*}
  \frac{\partial f(\Vect r)}{\partial r_l}(1,\dots,1)\,.
  \end{align*}
\end{linenomath}
Notice that in the partial derivative above, the $r_l$'s are bound by
the partial derivative itself and by the fact that it is evaluated at
$(1,\dots,1)$.

In order to reduce this problem to Theorem~\ref{th:pcoh-adequacy}, we
will introduce the intermediate language $\PPCFlc$ which will allow to see
each of these parameters $r_l$ as the probability of yielding $\Num 0$
for a biased coin construct $\Dicelab l{r_l}$. This calculus will be
executed in a further ``Krivine machine'' $\Evalstlc$ which has as
many random tapes as there are elements in $L$ (plus one for the plain
$\Dice\_$ constructs already occurring in $M$).

This intermediate language will be used as follows: given a closed
labeled term $M$ whose all labels belong to $L\subseteq\Labels$ and a
family of probabilities $\Vect r\in\Intercc 01^L$ we will define in
Section~\ref{sec:spy-term} a term $\Lcof{\Vec r}(M)$ of $\PPCFlc$
which is $M$ where each labeled subterm $\Mark Nl$ is replaced with
$\If{\Dicelab l{r_l}} {\Lcof{\Vect r}(N)}{\Loopt\sigma}$ where
$\Loopt\sigma$. The term $\Lcof{\Vect r}(M)$ defined in that way is
{closed}, the $r_l$'s being {parameters} and not {variables} in the
sense of the $\lambda$-calculus. The probability $p(\Vect r)$ of
convergence of $\Lcof{\Vect r}(M)$ depends on $(\mu(l))_{l\in I}$
where $\mu(l)\in\Nat$ is the number of times an $l$-labeled subterm of
$M$ arrives in head-position during the evaluation of $M$: this is the
meaning of Lemma~\ref{lemma:Lcof_evaldom_struct-3}. More precisely
$\mu(l)$ is the exponent of $r_l$'s in this probability. The main
feature of $\Spy{\Vect x}(M)$, exploited in
Section~\ref{sec:sp-trans}, is that $p(\Vect r)$ is obtained by
applying the semantics of $\Spy{\Vect x}(M)$ to $\Vect r$ (or more
precisely to $(r_l\Base 0)_{l\in L}\in\Pcoh\Snat^L$) ---~the proof of
this fact uses Theorem~\ref{th:pcoh-adequacy}. The last step will
consist in observing that, by taking the derivative of this
probability wrt.~the variables $x_l$, one obtains the expectation of
the number of times an $l$-labeled term arrives in head-position
during the evaluation of $M$; this is due to the fact that the
$\mu(l)$ are exponents in the expression of $p(\Vect r)$ and that these
exponents become coefficients by differentiation.

\subsection{Probabilistic PCF with labeled coins}\label{sec:PCF-lc}
Let $\PPCFlc$ be $\PPCF$ extended with a 
construct $\Dicelab lr$ typed as
\begin{center}

  \AxiomC{$r\in[0,1]\cap\Rational$ and $l\in\Labels$}
  \UnaryInfC{$\Tseq\Gamma{\Dicelab lr}\Tnat$}
  \DisplayProof
\end{center}
This language features the usual $\Dice r$ construct for probabilistic
choice as well as a supply of identical constructs labeled by
$\Labels$ that we will use to simulate the counting of
Section~\ref{sec:ppcf-labels}. Of course $\PPCFlc$-stacks involve now
terms with labeled coins but their syntax is not extended otherwise;
let $\Pcfstlc$ be the corresponding set of states. We use $\Labof M$
for the set of labels occurring in $M$ (and similarly $\Labof s$ for
$s\in\Pcfstlc$). Given a \emph{finite} subset $L$ of $\Labels$, we use
$\PPCFlc(L)$ for the set of terms $M$ such that $\Labof M\subseteq L$
and we define similarly $\Pcfstlc(L)$. We use the similar notations
$\PPCFlab(L)$ and $\Pcfstlab(L)$ for the sets of labeled terms and
stacks (see Section~\ref{sec:ppcf-labels}) whose all labels belong to
$L$.

The partial function
$\Evalstlc:\Pcfstlc(L)\times\Cantorfin\times\Cantorfin^L\to\Realp$ is
defined exactly as $\Evalst$ (for the unlabeled $\Dice r$, we use only
the first parameter in $\Cantorfin$), with the additional parameters
$\Vect\beta$ passed unchanged in the recursive calls, apart for the
the following new rules:
% \begin{align*}
%   \Evalstlc(\State{\Dicelab lr}{\pi},\alpha,\Vect\beta)
%   &=
%     \begin{cases}
%       \Evalstlc(\State{\Num 0}{\pi},\alpha,\Subst{\Vect\beta}\gamma
%       l)\cdot r
%       &\text{if }\beta(l)=\Ocons 0\gamma\\
%       \Evalstlc(\State{\Num 1}{\pi},\alpha,\Subst{\Vect\beta}\gamma
%       l)\cdot (1-r) &\text{if }\beta(l)=\Ocons 1\gamma
%     \end{cases}
% \end{align*}
\begin{linenomath}
  \begin{align*}
  \Evalstlc(\State{\Dicelab lr}{\pi},\alpha,\Vect\beta)
  = \Evalstlc(\State{\Num i}{\pi},\alpha,\Subst{\Vect\beta}\gamma
      l)\cdot \Pneg ir\text{\quad if }\beta(l)=\Ocons i\gamma
  \end{align*}
\end{linenomath}
where $\Vect \beta=(\beta(l))_{l\in L}$ stands for an $L$-indexed
family of elements of $\Cantorfin$ and $\Subst{\Vect\beta}\gamma l$ is
the family $\Vect\delta$ such that $\delta(l')=\beta(l')$ if $l'\not=l$
and $\delta(l)=\gamma$. We define
$\Evdomlc s\subseteq\Cantorfin\times\Cantorfin^L$ as the domain of the
partial function $\Evalstlc(s,\_,\_)$.

  We define a version $\Evalstlcsh(s,\_,\_)$ of the machine
  $\Evalstlc(s,\_,\_)$ which returns an element of $\Cantorfin$
  instead of an element of $\Realp$. The definition is the same up to
  the following rules:
  \begin{linenomath}
  \begin{align*}
    \Evalstlcsh(\State{\Dicelab lr}{\pi},\alpha,\Vect\beta)
    &= \Ocons i{\Evalstlcsh(\State{\Num i}{\pi},\alpha,\Subst{\Vect\beta}\gamma
      l)}\text{\quad if }\beta(l)=\Ocons i\gamma\\
    \Evalstlcsh(\State{\Dice r}{\pi},\Ocons i\alpha,\Vect\beta)
    &= \Ocons i{\Evalstlcsh(\State{\Num i}{\pi},\alpha,\Vect\beta)}\\
    \Evalstlcsh(\State{\Num 0}{\Stempty},\Oempty,\Vect\Oempty)
    &= \Oempty\,.
  \end{align*}
  \end{linenomath}
  When defined,
  $\Evalstlcsh(\State{\Dicelab lr}{\pi},\alpha,\Vect\beta)$ is a
  shuffle of the $\beta_l$'s and of $\alpha$ (in the order in which
  the corresponding elements of the tape are read during the
  execution).  Being defined by similar recursions, the functions
  $\Evalstlc(s,\_,\_)$ and $\Evalstlcsh(s,\_,\_)$ have the same
  domain. Then we have
  \begin{linenomath}
  \begin{align}\label{eq:evalstlc-evalstlcsh-comp}
    \Evalstlc(s,\alpha,\Vect\beta)
    =\Evalst(\Striplc s,\Evalstlcsh(s,\alpha,\Vect\beta))
  \end{align}
  \end{linenomath}
  where ``$=$'' should be understood as Kleene equality (either both
  sides are undefined or both sides are defined and equal). 

  To prove Equation~\Eqref{eq:evalstlc-evalstlcsh-comp}, one considers
  step-indexed versions of the involved machines, equipped with a
  further parameter in $\Nat$. For instance the modified
  $\Evalst(s,\alpha,n)$ will be a total function
  $\Pcfstlc(L)\times(\Cantorfin\cup\Eset\Undef)\times\Nat
  \to\Realp\cup\Eset\Undef$ where $\Undef$ stands for non-terminated
  computations. Here are a few cases of this modified definition, the
  others being similar:
  \begin{linenomath}
  \begin{align}\label{eq:evalst-indexed-def}
    \begin{split}
      \Evalst(s,\alpha,0)
      &= \Undef\\
      \Evalst(\State{\Let xMN}\pi,\alpha,n+1)
      &= \Evalst(\State M{\Stlet xN\Stcons\pi},\alpha,n)\\
      \Evalst(\State{\Dice r}{\pi},\Ocons 0\alpha,n+1)
      &= \Evalst(\State{\Num i}{\pi},\alpha,n)\cdot \Pneg ir\\
      \Evalst(\State{\Dice r}{\pi},\Oempty,n+1)
      &= \Undef\\
      \Evalst(\State{\Dice r}{\pi},\Undef,n+1)
      &= \Undef\\
      \Evalst(\State{\Num 0}{\Stempty},\Oempty,n+1)&= 1
    \end{split}
  \end{align}
  \end{linenomath}
  where of course multiplication is extended by
  $r\Undef=\Undef r=\Undef$ and similarly for concatenation.
  \begin{remark}
    One obvious and important feature of this definition, shared by
    all the forthcoming definitions based on similar step-indexing, is
    that if $\Evalst(s,\alpha,n)\not=\Undef$, we have
    $\Evalst(s,\alpha,k)=\Evalst(s,\alpha,n)\not=\Undef$ for all
    $k\geq n$. This property will be referred to as ``monotonicity of
    step-indexing''.
  \end{remark}
  Then $\Evalst(s,\alpha)$ is defined, and has value $r$, iff
  $\Evalst(s,\alpha,n)=r$ for some $n$ (and then the same will hold
  for any greater $n$).

  Thanks to this step-indexing, the proof of
  Equation~\Eqref{eq:evalstlc-evalstlcsh-comp} boils down to the
  following lemma.
  \begin{lemma}\label{lemma:evalst-evalstlcsh}
  For all $n\in\Nat$, one has
  \begin{linenomath}
    \begin{align*}
      \forall n\in\Nat\quad
      \Evalstlc(s,\alpha,\Vect\beta,n)
      =\Evalst(\Striplc s,\Evalstlcsh(s,\alpha,\Vect\beta,n),n)\,.
    \end{align*}
  \end{linenomath}    
  \end{lemma}
  \begin{proof}

  Assume that the property holds for all integer $p<n$ and let us
  prove it for $n$. One reasons by cases on the shape of $s$
  considering only a few cases, the others being similar. Notice that
  the equation is obvious if $n=0$ since then both hand-sides are
  $=\Undef$ so we can assume $n>0$.
  \begin{itemize}
  \item $s=\State{\Let xMN}\pi$. By definition of the machine
    $\Evalstlc$ we have
    $\Evalstlc(s,\alpha,\Vect\beta,n)=\Evalstlc(t,\alpha,\Vect\beta,n-1)$
    and
    $\Evalst(\Striplc s,\Evalstlcsh(s,\alpha,\Vect\beta,n),n)
    =\Evalst(\Striplc t,\Evalstlcsh(t,\alpha,\Vect\beta,n-1),n-1)$
    where $t=\State M{\Stlet xN\Stcons\pi}$ and the inductive
    hypothesis applies.
  \item $s=\State{\Dice r}{\pi}$. We have
    $\Evalstlc(s,\alpha,\Vect\beta,n)=\Undef=\Evalst(\Striplc
    s,\Evalstlcsh(s,\alpha,\Vect\beta,n),n)$ if $\alpha=\Oempty$ or
    $\alpha=\Undef$. We have
    $\Evalstlc(s,\Ocons i\alpha,\Vect\beta,n)
    =\Evalstlc(t,\alpha,\Vect\beta,n-1)\cdot\Pneg ir$ and, setting
    $t=\State{\Num i}\pi$,
    \begin{linenomath}
    \begin{align*}
      \Evalst(\Striplc s,\Evalstlcsh(s,\Ocons i\alpha,\Vect\beta,n),n)
      &=\Evalst(\Striplc s,\Ocons i{\Evalstlcsh(t,\alpha,\Vect\beta,n-1)},n)\\
      & =\Evalst(\Striplc t,{\Evalstlcsh(t,\alpha,\Vect\beta,n-1)},n-1)
        \cdot\Pneg ir 
    \end{align*}
    \end{linenomath}
    and the inductive hypothesis applies.
    % The case of
    % $\Evalstlc(s,\Ocons 1\alpha,\Vect\beta,n)$ is similar (replace $r$
    % with $1-r$).
  \item As a last example assume that $s=\State{\Dicelab lr}{\pi}$. We
    have
    $\Evalstlc(s,\alpha,\Vect\beta,n)=\Undef=\Evalst(\Striplc
    s,\Evalstlcsh(s,\alpha,\Vect\beta,n),n)$ if $\beta(l)=\Oempty$ or
    $\beta(l)=\Undef$. If $\beta(l)=\Ocons i\gamma$ we have
    $\Evalstlc(s,\alpha,\Vect\beta,n) =\Evalstlc(\State{\Num
      i}{\pi},\alpha,\Subst{\Vect\beta}\gamma l,n-1)\cdot\Pneg ir$ and
    \begin{linenomath}
    \begin{align*}
      \Evalst(\Striplc s,\Evalstlcsh(s,\alpha,\Vect\beta,n),n)
      &=\Evalst(\State{\Dice r}{\Striplc\pi}, \Ocons
        i{\Evalstlcsh(\State{\Num i}{\pi},\alpha, \Subst{\Vect\beta}\gamma
        l,n-1)},n)\\
      &=\Evalst(\State{\Num i}{\Striplc\pi},
        {\Evalstlcsh(\State{\Num i}{\pi},\alpha, \Subst{\Vect\beta}\gamma
        l,n-1)},n-1)\cdot\Pneg ir
    \end{align*}
    \end{linenomath}
    and the inductive hypothesis applies.
    % The case of
    % $\Evalstlc(s,\alpha,\Vect\beta,n)$ with $\beta(l)=\Ocons 1\gamma$
    % is similar (replace $r$ with $1-r$).
  \end{itemize}
  \end{proof}

Let $\Striplc s\in\Pcfst$ be obtained by stripping $s$ from its labels
(so that $\Striplc{\Dicelab lr}=\Dice r$).  And $\Striplc M\in\PPCF$
is defined similarly.

  Equation~\Eqref{eq:evalstlc-evalstlcsh-comp} shows in particular that
  \begin{linenomath}
    \begin{align*}
      \forall(\alpha,\Vect\beta)\in\Evdomlc s\quad
      \Evalstlcsh(s,\alpha,\Vect\beta)\in\Evdom{\Striplc s}\,.
    \end{align*}
  \end{linenomath}

  \begin{lemma}\label{lemma:Evalstlcsh-inv}
  The function
  \begin{linenomath}
  \begin{align*}
    \eta_s:\Evdomlc s & \to \Evdom{\Striplc s}=\Evdomlab s\\
    (\alpha,\Vect\beta) & \mapsto \Evalstlcsh(s,\alpha,\Vect\beta)
  \end{align*}
  \end{linenomath}
  is a bijection.
  \end{lemma}

  \begin{proof}
    We provide explicitly an
  inverse function, defined as another machine
  $\Evalstlcshinv(s,\alpha)$. Again we provide only a few cases
  \begin{linenomath}
  \begin{align*}
    \Evalstlcshinv(\State{\Let xMN}{\pi},\delta)
    &=\Evalstlcshinv(\State{M}{\Stlet xN\Stcons\pi},\delta)\\
    \Evalstlcshinv(\State{\Dicelab lr}{\pi},\Ocons i\delta)
    &= (\alpha,\Vect\gamma)
      \text{\quad if }\Evalstlcshinv(\State{\Num i}{\pi},\delta)
      =(\alpha,\Vect\beta),\\
      &\hspace{2cm}\gamma(l)=\Ocons i{\beta(l)}
      \text{ and }\gamma(k)=\beta(k)\text{ for }k\not=l\\
    \Evalstlcshinv(\State{\Dice r}{\pi},\Ocons i\delta)
    &= (\Ocons i\alpha,\Vect\beta)
      \text{\quad if }\Evalstlcshinv(\State{\Num i}{\pi},\delta)
      =(\alpha,\Vect\beta)\\
    \Evalstlcshinv(\State{\Num 0}{\Stempty},\Oempty)
    &= (\Oempty,\Vect\Oempty)\,.
  \end{align*}
  \end{linenomath}
  It is clear from this recursion that the partial function
  $\Evalstlcshinv(s,\_)$ has $\Evdom{\Striplc s}$ as domain.
  Let us prove that
  \begin{linenomath}
  \begin{align*}
    \forall(\alpha,\Vect\beta)\in\Evdomlc s
    \quad \Evalstlcshinv(s,\Evalstlcsh(s,\alpha,\Vect\beta))
    =(\alpha,\Vect\beta)\,.
  \end{align*}
  \end{linenomath}
  Considering a step-indexed version of $\Evalstlcshinv$ with an
  additional parameter in $n\in\Nat$ defined along the same lines
  as~\Eqref{eq:evalst-indexed-def}, it suffices to prove that
  \begin{linenomath}
  \begin{align}\label{eq:evalstlcshinv-evalstlcsh-step}
    \forall n\in\Nat\,\forall(\alpha,\Vect\beta)\in\Evdomlc s
    \quad
    \Evalstlcsh(s,\alpha,\Vect\beta,n)\not=\Undef
    \Implies
    \Evalstlcshinv(s,\Evalstlcsh(s,\alpha,\Vect\beta,n),n)
    =(\alpha,\Vect\beta)\,.
  \end{align}
  \end{linenomath}
  The proof is by induction on $n$.
  % The base case $n=0$ is obvious
  % since $\Evalstlcsh(s,\alpha,\Vect\beta,0)=\Undef$.
  So assume that the property holds for all integer $p<n$ and let us
  prove it for $n$. Assume that
  $\Evalstlcsh(s,\alpha,\Vect\beta,n)\not=\Undef$, which implies that
  $n>0$.  As usual we consider only a few interesting cases. All other
  cases are similar to the first one (and similarly trivial).
  \begin{itemize}
  \item Assume first that $s=\State{\Let xMN}\pi$. Setting
    $t=\State M{\Stlet xN\Stcons\pi}$ we have
    $\Evalstlcsh(t,\alpha,\Vect\beta,n-1)
    =\Evalstlcsh(s,\alpha,\Vect\beta,n)\not=\Undef$. Then
    \begin{linenomath}
    \begin{align*}
      \Evalstlcshinv(s,\Evalstlcsh(s,\alpha,\Vect\beta,n),n)
      &=\Evalstlcshinv(t,\Evalstlcsh(t,\alpha,\Vect\beta,n-1),n-1)\\
      &=(\alpha,\Vect\beta)
    \end{align*}
    \end{linenomath}
    by inductive hypothesis.
  \item Assume now that $s=\State{\Dice r}{\pi}$. Since
    $\Evalstlcsh(s,\alpha,\Vect\beta,n)\not=\Undef$ we must have
    $\alpha\not=\Oempty$. Let us write $\alpha=\Ocons i\gamma$, then
    we have
    $\Evalstlcsh(s,\alpha,\Vect\beta,n)=\Ocons
    i{\Evalstlcsh(t,\gamma,\Vect\beta,n-1)}$ where
    $t=\State{\Num i}\pi$ and we have
    $\Evalstlcsh(t,\gamma,\Vect\beta,n-1)\not=\Undef$. By inductive
    hypothesis it follows that
    \begin{linenomath}
    \begin{align*}
      \Evalstlcshinv(t,\Evalstlcsh(t,\gamma,\Vect\beta,n),n)
      =(\gamma,\Vect\beta)
    \end{align*}
    \end{linenomath}
    Then we have
    \begin{linenomath}
    \begin{align*}
      \Evalstlcshinv(s,\Evalstlcsh(s,\alpha,\Vect\beta,n),n)
      &=\Evalstlcshinv(s,\Ocons i{\Evalstlcsh(t,\gamma,\Vect\beta,n-1)},n)\\
      &=(\Ocons i\gamma,\Vect\beta)\\
      &=(\alpha,\Vect\beta)
    \end{align*}
    \end{linenomath}
    by definition of $\Evalstlcshinv$.
  \item The case $s=\State{\Dicelab lr}{\pi}$ is similar to the
    previous one, dealing with $\beta(l)$ instead of $\alpha$.
  \end{itemize}

  Now we prove that for all $\delta\in\Evdom{\Striplc s}$ one has
  $\Evalstlcshinv(s,\delta)\in\Evdomlc s$ and that
  \begin{linenomath}
  \begin{align*}
    \forall\delta\in\Evdom{\Striplc s}
    \quad \Evalstlcsh(s,\Evalstlcshinv(s,\delta))=\delta\,.
  \end{align*}
  \end{linenomath}
  It suffices to prove that
  \begin{linenomath}
  \begin{align*}
    \forall n\in\Nat\,\forall\alpha\in\Evdom{\Striplc s}
    \quad \Evalstlcshinv(s,\delta,n)\not=\Undef
    \Implies \Evalstlcsh(s,\Evalstlcshinv(s,\delta,n),n)=\delta
  \end{align*}
  \end{linenomath}
  using as usual the step-indexed versions of our machines. The proof
  is by induction on $n$ so assume that the property holds for all
  $p<n$ and let us prove it for $n$. Assume that
  $\Evalstlcshinv(s,\delta,n)\not=\Undef$, which implies $n>0$. We
  reason by cases on $s$, considering the same cases as above (the
  other cases, similar to the first one, are similarly trivial).
  \begin{itemize}
  \item Assume first that $s=\State{\Let xMN}\pi$. Setting
    $t=\State M{\Stlet xN\Stcons\pi}$ we know that
    $\Evalstlcshinv(t,\delta,n-1)\not=\Undef$ and hence by inductive
    hypothesis $\Evalstlcsh(t,\Evalstlcshinv(t,\delta,n-1),n-1)=\delta$,
    proving our contention.
  \item Assume next that $s=\State{\Dice r}{\pi}$. Since
    $\Evalstlcshinv(s,\delta,n)\not=\Undef$ we know that
    $\delta\not=\Oempty$. So we can write $\delta=\Ocons
    i\theta$. Then we know that
    $\Evalstlcshinv(s,\delta,n)=(\Ocons i\alpha,\Vect\beta)$ where
    $(\alpha,\Vect\beta)=\Evalstlcshinv(\State{\Num
      i}\pi,\delta,n-1)$. By inductive hypothesis we have
    $\Evalstlcsh(\State{\Num i}\pi,(\alpha,\Vect\beta))=\theta$ and therefore
    \begin{linenomath}
    \begin{align*}
      \Evalstlcsh(s,\Evalstlcshinv(s,\delta,n),n)
      &=\Evalstlcsh(s,(\Ocons i\alpha,\Vect\beta),n)\\
      &=\Ocons i\theta\\
      &=\delta\,.
    \end{align*}
    \end{linenomath}
  \item The case $s=\State{\Dicelab lr}{\pi}$ is similar to the
    previous one, dealing with $\beta(l)$ instead of $\alpha$.
  \end{itemize}
\end{proof}

\begin{lemma}\label{lemma:proba-conv-lc}
  For all $s\in\Pcfstlc(L)$
  \begin{linenomath}
    \begin{align*}
  \Proba{\Striplab s}(\Eventconvz{\Striplab s})
  =\sum_{(\alpha,\Vect\beta)\in\Evdomlc s}
  \Evalstlc(s,{\alpha,\Vect\beta})\,.
    \end{align*}
    \end{linenomath}
\end{lemma}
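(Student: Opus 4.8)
The plan is to reduce the statement to a weight-preserving bijection between terminating runs of the two evaluation machines. First I would invoke the identity recalled earlier, namely $\Proba{\Striplc s}(\Eventconvz{\Striplc s}) = \sum_{\beta \in \Evdom{\Striplc s}} \Evalst(\Striplc s, \beta)$ (which applies since $\Striplc s \in \Pcfst$), so that the left-hand side becomes a sum of exactly the same shape as the right-hand one. It then suffices to produce a bijection $\Phi: \Evdomlc s \to \Evdom{\Striplc s}$ with $\Evalstlc(s, \alpha, \Vect\alpha) = \Evalst(\Striplc s, \Phi(\alpha, \Vect\alpha))$. The guiding observation, which I would state at the outset, is that the defining equations of $\Evalstlc$ are literally those of $\Evalst$ except for the coin rules: since $\Striplc{\Dicelab lr} = \Dice r$ keeps the same parameter $r$ and sends bit $0$ to $\Num 0$ and bit $1$ to $\Num 1$ with the same factors $r$ and $1-r$, the two evaluations run through exactly the same reduction steps and accumulate the same numerical factors; the sole difference is that a flip of $\Dicelab lr$ reads its bit from the component $\alpha(l)$ of $\Vect\alpha$, whereas the stripped flip reads from the single sequence $\beta$.

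Next I would define $\Phi$ and its inverse operationally, through the (deterministic) execution of the machine. Given $(\alpha, \Vect\alpha)$, running the recursion for $\Evalstlc(s, \alpha, \Vect\alpha)$ consumes, at each $\Dicelab lr$ flip, the leading bit of $\alpha(l)$, and at each $\Dice r$ flip the leading bit of $\alpha$; concatenating the consumed bits in flip order produces $\beta = \Phi(\alpha, \Vect\alpha)$. Conversely, given $\beta \in \Evdom{\Striplc s}$, I would run $\Evalst(\Striplc s, \beta)$ and tag each consumed bit by the label $l$ of the coin flipped, or by $\ast$ if that coin is unlabeled; the bits tagged $\ast$, in consumption order, form $\alpha$, and those tagged $l$ form $\alpha(l)$, recovering $(\alpha, \Vect\alpha)$. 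I would take as the operational characterization of the two domains that $(\alpha, \Vect\alpha) \in \Evdomlc s$ (resp.\ $\beta \in \Evdom{\Striplc s}$) exactly when the run reaches the base state $\State{\Num 0}{\Stempty}$ with every driving sequence --- $\alpha$ and all the $\alpha(l)$ (resp.\ $\beta$) --- exactly exhausted.

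The core of the argument is then an induction on the number of recursion steps to the base case $\Evalst(\State{\Num 0}{\Stempty}, \Oempty) = 1$, establishing simultaneously that the two runs proceed in lock-step, that $\Phi$ and $\Phi^{-1}$ are mutually inverse, and that the values agree. For each non-coin rule both machines take the identical step, consume no bit, and I would conclude by the induction hypothesis on the common reduct. For a coin flip both machines read matching bits by construction of $\Phi$, branch to $\State{\Num 0}\pi$ or $\State{\Num 1}\pi$, multiply by the same factor $r$ or $1-r$, and again invoke the induction hypothesis. Finally I would observe that the common run hits $\State{\Num 0}{\Stempty}$ at the same moment and exhausts $\beta$ exactly when it exhausts $\alpha$ together with all the $\alpha(l)$, so that $\Phi$ restricts to a bijection between the two domains; summing $\Evalstlc(s, \alpha, \Vect\alpha) = \Evalst(\Striplc s, \Phi(\alpha, \Vect\alpha))$ over $\Evdomlc s$ yields the claim.

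The step I expect to be the main obstacle is the apparent circularity in the very definition of $\Phi$: the order in which coins of different labels are flipped, and hence the way the bits of $\beta$ are distributed among $\alpha$ and the $\alpha(l)$, is itself dictated by the run one is trying to describe. This is exactly why I would define $\Phi$ via the execution of the machine rather than by any static rearrangement of bits, and why the induction is carried out on the length of that execution --- the step-by-step matching is what makes the splitting canonical and the two directions genuinely inverse. A secondary point requiring care is pinning down the domains so that \emph{all} labeled sequences are exactly consumed (no leftover bits in any $\alpha(l)$); without this the right-hand sum would overcount, so I would make the exact-exhaustion condition explicit in the base case, consistently with reading $\Evalstlc$ as a partial function.
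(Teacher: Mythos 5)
Your proof is correct and follows essentially the same route as the paper: the paper's (sketched) argument also constructs a weight-preserving bijection $\eta_s$ from $\Evdomlc s$ to $\Evdom{\Striplc s}$, defined as the shuffle of $\alpha$ and the $\alpha(l)$'s determined by the run of the machine, and combines it with the identity $\Proba{\Striplab s}(\Eventconvz{\Striplab s})=\sum_{\beta\in\Evdom{\Striplab s}}\Evalst(\Striplab s,\beta)$. Your write-up merely fills in the details (the step-indexed induction and the exact-exhaustion bookkeeping) that the paper leaves implicit.
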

%++++ Donner la preuve complète 
\begin{proof}
  % (Sketch) With each $(\alpha,\Vect\beta)\in\Evdomlc s$ we can
  % associate a uniquely defined
  % $\eta_s(\alpha,\Vect\beta)\in\Evdom{\Striplc s}$ which is a shuffle
  % of $\alpha$ and of the $\beta(l)$'s (for $l\in L$) such that
  % $\Evalstlc(s,\alpha,\Vect\beta)=\Evalst(s,\eta_s(\alpha,\Vect\beta))$,
  % uniquely determined by the run of $(s,\alpha,\Vect\beta)$ in the
  % ``machine'' $\Evalstlab$. This mapping $\eta_s$ (which is defined
  % much like $\Evalstlc(s,\_,\_)$) is easily seen to be bijective.

  % Equation~\Eqref{eq:esp-evalstlab-multiset} shows in particular that
  % Lemma~\ref{lemma:evalst-evalstlcsh}
  We have
  \begin{linenomath}
  \begin{align*}
    \Proba{\Striplab s}(\Eventconvz{\Striplab s})
    &=\sum_{\delta\in\Evdom{\Striplc s}}\Evalst(\Striplc s,\delta)\\
    &=\sum_{(\alpha,\Vect\beta)\in\Evdomlc s}
      \Evalstlc(s,{\alpha,\Vect\beta})\,.
  \end{align*}
  \end{linenomath}
  By equation~\Eqref{eq:evalstlc-evalstlcsh-comp}
  and by the bijective correspondence of Lemma~\ref{lemma:Evalstlcsh-inv}.
\end{proof}

\subsection{Spying labeled terms in $\PPCF$}\label{sec:spy-term}
Given $\Vect r=(r_l)_{l\in L}\in(\Rational\cap[0,1])^L$, we define a
(type preserving) translation $\Lcof{\Vect r}:\PPCFlab(L)\to\PPCFlc$
by induction on terms. For all term constructs but labeled terms, the
transformation does nothing (for instance $\Lcof{\Vect r}(x)=x$,
$\Lcof{\Vect r}(\Abst x\sigma M)=\Abst x\sigma{\Lcof{\Vect r}(M)}$
etc), the only non-trivial case being
\begin{linenomath}
\begin{align*}
  \Lcof{\Vect r}(\Mark Ml)=\If{\Dicelab l{r_l}}
  {\Lcof{\Vect r}(M)}{\Loopt\sigma}
\end{align*}
\end{linenomath}
where $\sigma$ is the type\footnote{\emph{A priori} this type is known
  only if we know the type of the free variables of $M$, so to be more
  precise this translation should be specified in a given typing
  context; this can easily be fixed by adding a further parameter to
  $\Lcof{}$ at the price of heavier notations.} of $M$. 

In Section~\ref{sec:sp-trans}, we will turn a \emph{closed} labeled
term $M$ (with labels in the finite set $L$) into the term
$\Spy{\Vect x}(M)$, defined in such a way that
$\Psem{\Striplc{\Lcof{\Vect r}(M)}}{}$ has a simple expression in
terms of $\Spy{\Vect x}(M)$ (Lemma~\ref{lemma:spy-lc-expression}),
allowing to interpret the coefficients of the power series
interpreting $\Spy{\Vect x}(M)$ in terms of probability of reduction
of the machine $\Evalstlab$ with given resulting multisets of labels
(Equation~\Eqref{eq:Evsp-coeff-proba}). This in turn is the key to the
proof of Theorem~\ref{th:expect-time-diff}.

We write $\Osingle 0^k$ for
the sequence consisting of $k$ occurrences of $0$.

\begin{lemma}\label{lemma:Lcof_evaldom_struct-2}
  Let $s\in\Pcfstlab(L)$. Then
  \begin{align*}
    \Evdomlc{\Lcof{\Vect r}(s)}
    =\{(\alpha,(\Osingle
      0^{\Evalstlab(s,\alpha)(l)})_{l\in L})\St\alpha\in\Evdom{\Striplab
      s}\}
  \end{align*}
\end{lemma}
Remember from Lemma~\ref{lemma:Evaldom-strip} that
$\Evdomlab{s}=\Evdom{\Striplab s}$.
\begin{proof}
  We show that for any $n\in\Nat$ and any
  $(\alpha,\Vect\beta)\in\Cantorfin\times\Cantorfin^L$, one has
  % \begin{align}\label{eq:Lcof_evaldom_struct-2-step}
  %   \Evalstlc(\Lcof{\Vec r}(s),\alpha,\Vect\beta,n)\not=\Undef
  %   \Equiv \left(
  %   \Evalst(\Striplab s,\alpha,n)\not=\Undef
  %   \text{ and }
  %   \forall l\in L\ \beta_l=\Osingle 0^{\Evalstlab(s,\alpha,n)(l)}\right)
  % \end{align}
  % \begin{align}
  %   \begin{split}\label{eq:Lcof_evaldom_struct-2-step-1}
  %   &\forall n\in\Nat\
  %   \big(\Evalstlc(\Lcof{\Vec r}(s),\alpha,\Vect\beta,n)\not=\Undef\\
  %   &\hspace{10em}\Implies
  %   \exists k\in\Nat\ 
  %   \Evalst(\Striplab s,\alpha,k)\not=\Undef
  %   \text{ and }
  %   \forall l\in L\ \beta_l=\Osingle 0^{\Evalstlab(s,\alpha,k)(l)}\big)\\
  %   \end{split}\\
  %   \begin{split}\label{eq:Lcof_evaldom_struct-2-step-2}
  %   &\forall n\in\Nat\ 
  %   \big(\Evalst(\Striplab s,\alpha,n)\not=\Undef
  %   \text{ and }
  %     \forall l\in L\ \beta_l=\Osingle 0^{\Evalstlab(s,\alpha,n)(l)}\\
  %   &\hspace{10em}\Implies
  %   \exists k\in\Nat\
  %   \Evalstlc(\Lcof{\Vec r}(s),\alpha,\Vect\beta,k)\not=\Undef\big)
  %   \end{split}
        %     \end{align}
  \begin{linenomath}
  \begin{align}
    \begin{split}\label{eq:Lcof_evaldom_struct-2-step-1}
    &\forall n\in\Nat\
    \big(\Evalstlc(\Lcof{\Vec r}(s),\alpha,\Vect\beta,n)\not=\Undef\\
    &\hspace{8em}\Implies
    \exists k\in\Nat\ 
    \Evalstlab(s,\alpha,k)\not=\Undef
    \text{ and }
    \forall l\in L\ \beta_l=\Osingle 0^{\Evalstlab(s,\alpha,k)(l)}\big)\\
    \end{split}\\
    \begin{split}\label{eq:Lcof_evaldom_struct-2-step-2}
    &\forall n\in\Nat\ 
    \big(\Evalstlab(s,\alpha,n)\not=\Undef
    \text{ and }
      \forall l\in L\ \beta_l=\Osingle 0^{\Evalstlab(s,\alpha,n)(l)}\\
    &\hspace{8em}\Implies
    \exists k\in\Nat\
    \Evalstlc(\Lcof{\Vec r}(s),\alpha,\Vect\beta,k)\not=\Undef\big)
    \end{split}
  \end{align}
  \end{linenomath}
  using as before
  step-indexed versions of the various machines. But in the present
  situation we shall not have the same indexing on both sides of
  implications because the encoding $\Lcof{\Vect r}(s)$ requires
  additional execution steps.
  % Notice that the ``conjunction sides''
  % of these implications make sense because
  % \begin{align*}
  %   \forall n\in\Nat\quad
  %   \Evalst(\Striplab s,\alpha,n)\not=\Undef
  %   \Equiv
  %   \exists k\geq n\ \Evalstlab(s,\alpha,k)\not=\Undef
  % \end{align*}
  % as easily checked by induction on $n$.

  We prove~\Eqref{eq:Lcof_evaldom_struct-2-step-1} by induction on
  $n\in\Nat$.
  % For $n=0$ the implication is obvious since
  % $\Evalstlc(\Lcof{\Vect r}(s),\alpha,\Vect\beta,0)=\Undef$.
  Assume now that the implication holds for all integers $p<n$ and let
  us prove it for $n$, so assume that
  $\Evalstlc(\Lcof{\Vec r}(s),\alpha,\Vect\beta,n)\not=\Undef$ which
  implies $n>0$. We consider only three cases as to the shape of $s$,
  the other cases being completely similar to the first one. We use
  the following convention: if $M$ is a labeled term we use
  $M'$ for $\Lcof{\Vect r}(M)$ and similarly for stacks and states.
  \begin{itemize}
  \item Assume first that $s=\State{\Let xMN}\pi$ and let
    $t=\State M{\Stlet xN\Stcons\pi}$. We have
    $s'=\State{\Let x{M'}{N'}}{\pi'}$ and
    $t'=\State {M'}{\Stlet x{N'}\Stcons{\pi'}}$.  We have
    \begin{align*}
      \Evalstlc(t',\alpha,\Vect\beta,n-1)
      =\Evalstlc(s',\alpha,\Vect\beta,n)\not=\Undef
    \end{align*}
    % We know that $\Evalstlab(s,\alpha,n+1)=\Evalstlab(t,\alpha,n)$ and
    % hence $\Evalstlab(t,\alpha,n)\not=\Undef$
    and hence by inductive hypothesis there is $k\in\Nat$ such that
    \begin{linenomath}
      \begin{align*}
      \Evalstlab(t,\alpha,k)\not=\Undef
      \text{ and }
      \forall l\in L\ \beta_l=\Osingle 0^{\Evalstlab(t,\alpha,k)(l)}\,.
      \end{align*}
    \end{linenomath}
    It follows that
    $\Evalstlab(s,\alpha,k+1)
    =\Evalstlab(t,\alpha,k)\not=\Undef$ and
    $\Evalstlab(s,\alpha,k+1)(l)=\Evalstlab(t,\alpha,k)(l)$. Therefore
    we have
    \begin{linenomath}
    \begin{align*}
      \Evalstlab(s,\alpha,k+1)\not=\Undef
      \text{ and }
      \forall l\in L\ \beta_l=\Osingle 0^{\Evalstlab(s,\alpha,k+1)(l)}\,.
    \end{align*}
    \end{linenomath}
  \item Assume now that $s=\State{\Dice r}{\pi}$ so that
    $s'=\State{\Dice r}{\pi'}$. Since
    $\Evalstlc(s',\alpha,\Vect\beta,n)\not=\Undef$ we know that
    $\alpha=\Ocons i\gamma$ for some $i\in\Eset{0,1}$ and that
    $\Evalstlc(t',\gamma,\Vect\beta,n-1)\not=\Undef$ where
    $t=\State{\Num i}\pi$ (and hence $t'=\State{\Num i}{\pi'}$). By
    inductive hypothesis there is $k\in\Nat$ such that
    \begin{linenomath}
    \begin{align*}
      \Evalstlab(t,\gamma,k)\not=\Undef
      \text{ and }
      \forall l\in L\ \beta_l=\Osingle 0^{\Evalstlab(t,\gamma,k)(l)}\,.
    \end{align*}
    \end{linenomath}
    It follows that
    $\Evalstlab(s,\alpha,k+1)
    =\Evalstlab(t,\gamma,k)\not=\Undef$. Therefore
    we have
    \begin{linenomath}
    \begin{align*}
      \Evalstlab(s,\alpha,k+1)\not=\Undef
      \text{ and }
      \forall l\in L\ \beta_l=\Osingle 0^{\Evalstlab(s,\alpha,k+1)(l)}\,.
    \end{align*}
    \end{linenomath}
  \item The third case we consider is $s=\State{\Mark Ml}\pi$ for some
    $l\in L$ so that
    \begin{align*}
      s'=\State{\If{\Dicelab l{r_l}}{M'}{\Loopt\sigma}}{\pi'}
    \end{align*}
    where $\sigma$ is the type of $M$. Since
    $\Evalstlc(s',\alpha,\Vect\beta,n)\not=\Undef$ we must have
    $n\geq 2$ and $\beta_l=\Ocons i\gamma$; indeed, setting
    $\Vect{\beta'}=\Subst{\Vect\beta}{\gamma}{l}$,
    \begin{linenomath}
    \begin{align*}
      \Evalstlc(s',\alpha,\Vect\beta,n)
      &= \Evalstlc(\State{\Dicelab l{r_l}}
        {\Stif{M'}{\Loopt\sigma}\Stcons\pi'},\alpha,\Vect\beta,n-1)\\
      &= \Evalstlc(\State{\Num i}{\Stif{M'}{\Loopt\sigma}\Stcons\pi'},
          \alpha,\Vect{\beta'},n-2)\\
      &=
        \begin{cases}
          \Evalstlc(\State{M'}{\pi'},\alpha,\Vect{\beta'},n-2)
          &\text{if }i=0\\
          \Evalstlc(\State{\Loopt\sigma}{\pi'},\alpha,\Vect{\beta'},n-2)
          &\text{if }i=1\,.
        \end{cases}
    \end{align*}
    \end{linenomath}
    But whatever is the value of $n\geq 2$ we have
    $\Evalstlc(\State{\Loopt\sigma}{\pi'},\alpha,\Vect{\beta'},n-2)=\Undef$
    by definition of $\Loopt\sigma$. It follows that we must have
    $i=0$ and
    $\Evalstlc(\State{M'}{\pi'},\alpha,\Vect{\beta'},n-2)\not=\Undef$.
    By inductive hypothesis there exists $k\in\Nat$ such that, setting
    $t=\State M\pi$
    \begin{linenomath}
    \begin{align*}
      \Evalstlab(t,\alpha,k)\not=\Undef
      \text{ and }
      \forall m\in L\ \beta'_m=\Osingle 0^{\Evalstlab(t,\alpha,k)(m)}\,.
    \end{align*}
    \end{linenomath}
    % $\Striplab s=\State{\Striplab M}{\Striplab\pi}=\Striplab t$ hence
    % $\Evalst(\Striplab s,\alpha,k)\not=\Undef$ and therefore also
    % $\Evalst(\Striplab s,\alpha,k+1)\not=\Undef$.
    We have
    $\Evalstlab(s,\alpha,k+1)=\Evalstlab(s,\alpha,k)+\Mset
    l\not=\Undef$. It follows that if $m\in L\setminus\Eset l$ one has
    $\beta_m=\beta'_m=\Osingle 0^{\Evalstlab(t,\alpha,k)(m)}=\Osingle
    0^{\Evalstlab(s,\alpha,k+1)(m)}$, and
    $\beta_l=\Ocons 0{\beta'_l}=\Ocons 0{\Osingle
      0^{\Evalstlab(t,\alpha,k)(l)}}=\Ocons
    0^{\Evalstlab(s,\alpha,k+1)(l)}$ proving our contention.
  \end{itemize}
  This ends the proof of~\Eqref{eq:Lcof_evaldom_struct-2-step-1}, we
  prove now~\Eqref{eq:Lcof_evaldom_struct-2-step-2}, also by induction
  on $n$. Assume that the implication holds for all $p<n$ and let us
  prove it for $n$ so assume that
  \begin{linenomath}
    \begin{align*}
      \Evalstlab(s,\alpha,n)\not=\Undef
      \text{ and }
      \forall l\in L\ \beta_l=\Osingle 0^{\Evalstlab(s,\alpha,n)(l)}\,.
    \end{align*}
  \end{linenomath}
  As usual this implies that $n>0$.  We deal with the same three cases
  as in the proof of~\Eqref{eq:Lcof_evaldom_struct-2-step-1}.
  \begin{itemize}
  \item Assume first that $s=\State{\Let xMN}\pi$ and let
    $t=\State M{\Stlet xN\Stcons\pi}$. We have
    $s'=\State{\Let x{M'}{N'}}{\pi'}$ and
    $t'=\State {M'}{\Stlet x{N'}\Stcons{\pi'}}$. We have
    $\Evalstlab(t,\alpha,n-1)=\Evalstlab(s,\alpha,n)\not=\Undef$. Also we
    have, for all $l\in L$,
    $\beta_l=\Osingle 0^{\Evalstlab(s,\alpha,n)(l)}=\Osingle
    0^{\Evalstlab(t,\alpha,n-1)(l)}$. Hence by ind.~hypothesis
    there is $k\in\Nat$ such that
    $\Evalstlc(t',\alpha,\Vect\beta,k)\not=\Undef$ and hence
    $\Evalstlc(s',\alpha,\Vect\beta,k+1)
    =\Evalstlc(t',\alpha,\Vect\beta,k)\not=\Undef$.
  \item Assume now that $s=\State{\Dice r}{\pi}$ so that
    $s'=\State{\Dice r}{\pi'}$. Since
    $\Evalstlab(s,\alpha,n)\not=\Undef$ we must have 
    $\alpha=\Ocons i\gamma$ for some $i\in\Eset{0,1}$ and we have
    $\Evalstlab(t,\gamma,n-1)=\Evalstlab(s,\alpha,n)\not=\Undef$ where
    $t=\State{\Num i}{\pi}$. Moreover
    $\Evalstlab(s,\alpha,n)=\Evalstlab(t,\gamma,n-1)$ and hence for
    each $l\in L$ we have
    $\beta_l=\Osingle 0^{\Evalstlab(s,\alpha,n)(l)}=\Osingle
    0^{\Evalstlab(t,\gamma,n-1)(l)}$. By inductive hypothesis there
    exists $k$ such that
    $\Evalstlc(t',\gamma,\Vect\beta,k)\not=\Undef$. We have
    $\Evalstlc(s',\alpha,\Vect\beta,k+1)
    =\Evalstlc(t',\gamma,\Vect\beta,k)\not=\Undef$ as expected.
  \item Assume last that $s=\State{\Mark Ml}\pi$ for some $l\in L$ so
    that $s'=\State{\If{\Dicelab l{r_l}}{M'}{\Loopt\sigma}}{\pi'}$
    where $\sigma$ is the type of $M$.  Let $t=\State M\pi$.  Since
    $\Evalstlab(s,\alpha,n)\not=\Undef$ we have
    $\Evalstlab(t,\alpha,n-1)\not=\Undef$ and
    $\Evalstlab(s,\alpha,n)=\Evalstlab(t,\alpha,n-1)+\Mset l$.
    % that is
    % \begin{linenomath}
    %   \begin{align*}
    %     \Evalstlab(s,\alpha,n)(m)
    %     =
    %     \begin{cases}
    %       \Evalstlab(t,\alpha,n-1)(m)+1 & \text{if }m=l\\
    %       \Evalstlab(t,\alpha,n-1)(m) & \text{otherwise}\,.
    %     \end{cases}
    %   \end{align*}
    % \end{linenomath}
    We also know that for all $m\in L$
    \begin{linenomath}
    \begin{align*}
      \beta_l=\Osingle 0^{\Evalstlab(s,\alpha,n)(m)}
    \end{align*}
    \end{linenomath}
    In particular $\beta_l=\Ocons 0{\beta'_l}$. Setting
    $\beta'_m=\beta_m$ for $m\not=l$, we have therefore
    \begin{linenomath}
    \begin{align*}
      \forall m\in L\quad \beta'_m=\Osingle 0^{\Evalstlab(t,\alpha,n)(m)}\,.
    \end{align*}
    \end{linenomath}
    By inductive
    hypothesis there is $k\in\Nat$ such that
    $\Evalstlc(\Lcof{\Vect
      r}(t),\alpha,\Vect{\beta'},k)\not=\Undef$. Since
    $s'=\Lcof{\Vect r}(s)=\If{\Dicelab l{r_l}}{M'}{\Loopt\sigma}$,
    setting $\Vect\beta=\Subst{\Vect{\beta'}}{\Ocons 0{\beta'_l}}l$ we
    have
    \begin{linenomath}
    \begin{align*}
      \Evalstlc(s',\alpha,\Vect\beta,k+3)
      &= \Evalstlc(\State{\Dicelab l{r_l}}{\Stif{M'}{\Loopt\sigma}\Stcons\pi'},
        \alpha,\Vect{\beta},k+2)\\
      &= \Evalstlc(\State{\Num 0}{\Stif{M'}{\Loopt\sigma}\Stcons\pi'},
        \alpha,\Vect{\beta'},k+1)\cdot r_l
        \text{\quad by definition of }\Vect{\beta'}\\
      &= \Evalstlc(\State{M'}{\pi'},\alpha,\Vect{\beta'},k)\cdot r_l\\
      &= \Evalstlc(t',\alpha,\Vect{\beta'},k)\cdot r_l\\
      &\not=\Undef
    \end{align*}
    \end{linenomath}
  \end{itemize}
  This ends the proof of~\Eqref{eq:Lcof_evaldom_struct-2-step-2}.
\end{proof}

To understand the next lemma, remember that
$\Evalstlab(s,\alpha)\in\Mfin L$ so that
$(\Vect r)^{\Evalstlab(s,\alpha)}=\prod_{l\in
  L}r_l^{\Evalstlab(s,\alpha)(l)}$.

\begin{lemma}\label{lemma:Lcof_evaldom_struct-3}
  Let $s\in\Pcfstlab(L)$. Then
  \begin{align*}
    \Evalstlc(\Lcof{\Vect r}(s),\alpha,
    (\Osingle 0^{\Evalstlab(s,\alpha)(l)})_{l\in L})
    =\Proba{\Striplab s}
      (\{\Ocons 0\alpha\})(\Vect r)^{\Evalstlab(s,\alpha)}
  \end{align*}
  for all $\alpha\in\Evdom{\Striplab s}=\Evdomlab{s}$.
\end{lemma}
\begin{proof}
  By definition
  \begin{linenomath}
  \begin{align*}
    \Proba{\Striplab s}(\{\Ocons 0\alpha\})
    =\Evalst(\Striplab s,\alpha)
    \text{\quad and\quad}
    (\Vect r)^{\Evalstlab(s,\alpha)}
    =\prod_{l\in L}r_l^{\Evalstlab(s,\alpha)(l)}
  \end{align*}
  \end{linenomath}
  so we have to prove that
  \begin{linenomath}
  \begin{align*}
    \Evalstlc(\Lcof{\Vect r}(s),\alpha,
    (\Osingle 0^{\Evalstlab(s,\alpha)(l)})_{l\in L})
    =\Evalst(\Striplab s,\alpha)\prod_{l\in L}r_l^{\Evalstlab(s,\alpha)(l)}\,.
  \end{align*}
  \end{linenomath}
  By Lemmas~\ref{lemma:Evaldom-strip}
  and~\ref{lemma:Lcof_evaldom_struct-2} we know that these two
  expressions are defined (that is, all subexpressions are defined) if
  and only if $\alpha\in\Evdomlab s$.
  
  By induction on $n$, we prove that if both expressions
  \begin{linenomath}
  \begin{align*}
    \Evalstlc(\Lcof{\Vect r}(s),\alpha,
    (\Osingle 0^{\Evalstlab(s,\alpha,n)(l)})_{l\in L},n)
    \text{\quad and\quad}
    \Evalst(\Striplab s,\alpha,n)\prod_{l\in L}r_l^{\Evalstlab(s,\alpha,n)(l)}
  \end{align*}
  \end{linenomath}
  are $\not=\Undef$, then they are equal. Assume that the property
  holds for all $p<n$ and let us prove it for $n$, so assume that both
  \begin{align*}
  \Evalstlc(\Lcof{\Vect r}(s),\alpha, (\Osingle
  0^{\Evalstlab(s,\alpha,k)(l)})_{l\in L},k) \text{\quad and\quad}
  \Evalst(\Striplab s,\alpha,k)\prod_{l\in
    L}r_l^{\Evalstlab(s,\alpha,k)(l)}   
  \end{align*}
 are defined, which implies
  $n>0$. We consider the three usual cases as to $s$.
  \begin{itemize}
  \item Assume first that $s=\State{\Let xMN}\pi$ and let
    $t=\State M{\Stlet xN\Stcons\pi}$. We have, using as usual the
    notation $u'=\Lcof{\Vect r}(u)$,
    \begin{linenomath}
    \begin{align*}
      \Evalstlc(s',\alpha, (\Osingle
      0^{\Evalstlab(s,\alpha,n)(l)})_{l\in L},n)
      &= \Evalstlc(t',\alpha, (\Osingle
        0^{\Evalstlab(t,\alpha,n-1)(l)})_{l\in L},n-1)\\
      &= \Evalst(\Striplab t,\alpha,n-1)\prod_{l\in
        L}r_l^{\Evalstlab(t,\alpha,n-1)(l)}\text{\quad by ind.~hyp.}\\
      &= \Evalst(\Striplab s,\alpha,n)\prod_{l\in
        L}r_l^{\Evalstlab(s,\alpha,n)(l)}\,.
    \end{align*}
    \end{linenomath}
  \item Assume now that $s=\State{\Dice r}{\pi}$ so that
    $s'=\State{\Dice r}{\pi'}$. Since
    $\Evalstlab(s,\alpha,n)\not=\Undef$ we must have
    $\alpha=\Ocons i\gamma$ for some $i\in\Eset{0,1}$, and we have
    $\Evalstlab(t,\gamma,n-1)=\Evalstlab(s,\alpha,n)\not=\Undef$ where
    $t=\State{\Num i}{\pi}$. Moreover
    $\Evalstlab(s,\alpha,n)=\Evalstlab(t,\gamma,n-1)$ and hence for
    each $l\in L$ we have
    $\beta_l=\Osingle 0^{\Evalstlab(s,\alpha,n)(l)}=\Osingle
    0^{\Evalstlab(t,\gamma,n-1)(l)}$. We have
    \begin{linenomath}
    \begin{align*}
      &\Evalstlc(s',\alpha, (\Osingle
      0^{\Evalstlab(s,\alpha,n)(l)})_{l\in L},n)\\
      &\hspace{6em}= \Evalstlc(t',\gamma, (\Osingle
        0^{\Evalstlab(t,\alpha,n-1)(l)})_{l\in L},n-1)\cdot\Pneg ir\\
      &\hspace{6em}= \Evalst(\Striplab t,\gamma,n-1)\cdot\Pneg ir
        \prod_{l\in
        L}r_l^{\Evalstlab(t,\gamma,n-1)(l)}\text{\quad by ind.~hyp.}\\
      &\hspace{6em}= \Evalst(\Striplab s,\alpha,n)\prod_{l\in
        L}r_l^{\Evalstlab(s,\alpha,n)(l)}\,.
    \end{align*}
    \end{linenomath}
  \item Assume last that $s=\State{\Mark Ml}\pi$ for some $l\in L$ so
    that $s'=\State{\If{\Dicelab l{r_l}}{M'}{\Loopt\sigma}}{\pi'}$
    where $\sigma$ is the type of $M$.  Let $t=\State M\pi$.  We have
    already noticed that $n>0$; actually, since
    $\Evalstlc(s',\alpha,n)\not=\Undef$, we have $n\geq 3$, see below.
    Moreover $\Evalstlab(t,\alpha,n-1)\not=\Undef$ and
    $\Evalstlab(s,\alpha,n)=\Evalstlab(t,\alpha,n-1)+\Mset l$.
    % , that is
    % \begin{linenomath}
    %   \begin{align*}
    %     \Evalstlab(s,\alpha,n)(m)
    %     =
    %     \begin{cases}
    %       \Evalstlab(t,\alpha,n-1)(m)+1 & \text{if }m=l\\
    %       \Evalstlab(t,\alpha,n-1)(m) & \text{otherwise}\,.
    %     \end{cases}
    %   \end{align*}
    % \end{linenomath}
    % We also know that for all $m\in L$
    % \begin{align*}
    %   \beta_l=\Osingle 0^{\Evalstlab(s,\alpha,k)}
    % \end{align*}
    % In particular $\beta_l=\Ocons 0{\beta'_l}$. Setting
    % $\beta'_m=\beta_m$ for $m\not=l$, we have therefore
    % \begin{align*}
    %   \forall m\in L\quad \beta'_m=\Osingle 0^{\Evalstlab(t,\alpha,k-1)}\,.
    % \end{align*}
    Let
    $\Vect\beta=(\Osingle 0^{\Evalstlab(s,\alpha,n)(m)})_{m\in L}$, so
    that $\beta_l=\Ocons 0\beta'_l$ and $\beta_m=\beta'_m$ for
    $m\not=l$, where
    $\Vect{\beta'}=(\Osingle 0^{\Evalstlab(t,\alpha,n-1)(m)})_{m\in
      L}$. We have
    \begin{linenomath}
    \begin{align*}
      \Evalstlc(s',\alpha, \Vect\beta,n)
      &=\Evalstlc(\State{\Dicelab l{r_l}}{\Stif{M'}{\Loopt\sigma}\Stcons\pi'},
        \alpha,\Vect\beta,n-1)\\
      &=\Evalstlc(\State{\Num 0}{\Stif{M'}{\Loopt\sigma}\Stcons\pi'},
        \alpha,\Vect{\beta'},n-2)\cdot r_l\\
      &= \Evalstlc(t',\alpha, (\Osingle
        0^{\Evalstlab(t,\alpha,n-1)(m)})_{m\in L},n-3)\cdot r_l\\
      &= \Evalstlc(t',\alpha, (\Osingle
        0^{\Evalstlab(t,\alpha,n-1)(m)})_{m\in L},n-1)\cdot r_l\\
      &\hspace{12em}\text{by monotonicity of step-indexing}\\
      &= \Evalst(\Striplab t,\alpha,n-1)\cdot r_l
        \prod_{m\in
        L}r_m^{\Evalstlab(t,\gamma,n-1)(m)}\text{\quad by ind.~hyp.}\\
      &= \Evalst(\Striplab s,\alpha,n)\prod_{m\in
        L}r_m^{\Evalstlab(s,\alpha,n)(m)}
    \end{align*}
    \end{linenomath}
    by monotonicity of step-indexing, since
    $\Striplab s=\Striplab t$.
  \end{itemize}
  
\end{proof}

% \begin{proof}
% %++++ donner la preuve? Ou dire que c'est trivial (ce qu'il semble)
%   Equation~\Eqref{lemma:Lcof_evaldom_struct-1} is just
%   Lemma~\ref{lemma:Evaldom-strip}.  We prove
%   Equation~\Eqref{lemma:Lcof_evaldom_struct-2} 
%  \end{proof}

 \subsection{The spying translation}\label{sec:sp-trans}
 We consider a last translation, from $\PPCFlab(L)$ to $\PPCF$: let
 $\Vect x$ be an $L$-indexed family of pairwise distinct variables
 (that we identify with the typing context $(x_l:\Tnat)_{l\in L}$). If
 $M\in\PPCFlab(L)$ with $\Tseq{\Gamma}M\sigma$ (assuming that no free
 variable of $M$ occurs in $\Vect x$) we define $\Spy{\Vect x}(M)$
 with $\Tseq{\Gamma,\Vect x}{\Spy{\Vect x}(M)}\sigma$ by induction on
 $M$.  The unique non trivial case is
 $\Spy{\Vect x}(\Mark Ml)=\If{x_l}{\Spy{\Vect x}(M)}{\Loopt\sigma}$
 where $\sigma$ is the type of $M$. As another example, we set
 $\Spy{\Vect x}(\Abst y\tau M)=\Abst y\tau{\Spy{\Vect x}(M)}$ assuming
 of course that $y$ is distinct from all $x_l$'s.

% Ceci ne sert à rien
%
% This translations extends
% straightforwardly to stacks and states. 

% J'ai l'impression que ce lemme ne sert à rien.
% C'est peut-être juste bon à noter

 The following lemma is not technically essential, it is simply an
 observation useful to understand better what follows.
\begin{lemma}\label{lemma:spy-sem-zero}
  Let $M\in\PPCFlab(L)$ with $\Tseq{}M\sigma$. If
  $\Vect\rho\in\Mfin\Nat^L=\Mfin{L\times\Nat}$ and
  $a\in\Web{\Tsem\sigma}$ satisfy
  $(\Psem{\Spy{\Vect x}(M)}{\Vect x})_{(\Vect\rho,a)}\not=0$ then
  $\forall l\in L\ \Supp{\rho_l}\subseteq\Eset 0$.
\end{lemma}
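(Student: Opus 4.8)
The plan is to prove, by structural induction on $M$, a stronger semantic statement from which the lemma follows immediately: for every $M\in\PPCFlab(L)$ with $\Tseq\Gamma M\sigma$ (where $\Gamma$ is disjoint from $\Vect x$), the function $(\Vect v,\Vect u)\mapsto\Psem{\Spy{\Vect x}(M)}{\Gamma,\Vect x}(\Vect v,\Vect u)$, with $\Vect v\in\prod_i\Pcoh{\Tsem{\sigma_i}}$ interpreting $\Gamma$ and $\Vect u=(u_l)_{l\in L}\in\prod_{l\in L}\Pcoh\Snat$ interpreting $\Vect x$, depends on each $u_l$ only through its $0$-component $(u_l)_0$. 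Granting this, the lemma is a one-line consequence of the fact that a morphism of $\Kl\PCOH$ is completely determined by its associated function, which is a power series with non-negative coefficients in the variables $(u_l)_n$: if that function does not depend on a given variable $(u_l)_n$ with $n>0$, then freezing the remaining variables and reading it as an ordinary convergent power series in the single real variable $(u_l)_n$, its constancy forces every coefficient of a positive power of $(u_l)_n$ to vanish. Applying this for every $l$ and every $n>0$ shows that the coefficient $(\Psem{\Spy{\Vect x}(M)}{\Vect x})_{(\Vect\rho,a)}$ (here $\Gamma$ is empty) can be nonzero only when $\rho_l(n)=0$ for all $n>0$, which is exactly the claim.

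For the induction I would first recall that $\Spy{\Vect x}$ introduces the fresh variables $x_l$ only through the clause $\Spy{\Vect x}(\Mark Ml)=\If{x_l}{\Spy{\Vect x}(M)}{\Loopt\sigma}$ and acts as a homomorphism on every other construct. The base cases ($\Num n$, a variable of $\Gamma$, and $\Dice r$) are trivial, since their interpretation does not depend on $\Vect u$ at all. In each homomorphic case the interpretation of $\Spy{\Vect x}(M)$ is built from the interpretations of the $\Spy{\Vect x}$-images of the immediate subterms by one of the fixed operations of the model: post-composition with $\Ssuc$ or $\Spred$, the bilinear combination defining $\If{}{}{}$, the sum defining $\Let{}{}{}$, application, currying for $\Abst{}{}{}$, and the supremum $\sup_n(\cdot)^n(0)$ for $\Fix{}$. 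Since each of these is computed componentwise in $\Vect u$ and preserves the property of factoring through the projection onto the $0$-components, the invariant is inherited from the induction hypothesis, applied to the subterm in the suitably extended context whenever a binder is crossed.

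The only case that uses the specific shape of the translation is $M=\Mark{M'}l$. Here $\Loopt\sigma=\Fix{\Abst x\sigma x}$ is interpreted as the zero element, so the interpretation clause for $\If{}{}{}$ given above yields
\[
  \Psem{\Spy{\Vect x}(\Mark{M'}l)}{\Gamma,\Vect x}(\Vect v,\Vect u)
  =(u_l)_0\,\Psem{\Spy{\Vect x}(M')}{\Gamma,\Vect x}(\Vect v,\Vect u)
  +\Bigl(\sum_{n\in\Nat}(u_l)_{n+1}\Bigr)\cdot 0\,.
\]
The second summand vanishes, and by the induction hypothesis the remaining factor depends on every $u_{l'}$ only through $(u_{l'})_0$; multiplying by the scalar $(u_l)_0$ keeps the dependence on $u_l$ confined to its $0$-component, so the invariant holds for $\Mark{M'}l$ as well.

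I expect the main difficulty to lie not in any single case but in setting up the induction so that it survives the binders: one must phrase the invariant for open terms in an arbitrary context $\Gamma$ disjoint from $\Vect x$, and then check that application, abstraction, and especially $\Fix{}$ preserve factorization through the $0$-component projection. The last point is where care is needed, but it is harmless: the $0$-component projection is linear and Scott continuous, so the pointwise suprema computing $\Fix{}$ commute with it and do not reintroduce any dependence on the higher components $(u_l)_n$ with $n>0$.
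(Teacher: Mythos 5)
Your proof is correct, and it reaches the conclusion by a slightly different route than the paper. The paper's own argument is only a one-line sketch --- a ``simple induction on $M$'' over open terms using $\Psem{\Loopt\sigma}{}=0$ --- and the most direct reading of it is an induction carried out at the level of the matrix coefficients, i.e.\ on the support of $\Psem{\Spy{\Vect x}(M)}{\Gamma,\Vect x}$ viewed as a subset of the web. You instead establish the extensional invariant that the interpretation factors through the projection of each $u_l$ onto its $0$-component, and then recover the coefficient statement from the fact that a power series with non-negative coefficients that is constant in one variable has no monomial mentioning that variable. Both routes hinge on exactly the same two facts ($\Psem{\Loopt\sigma}{}=0$ and the shape of the conditional's interpretation in the $\Mark Ml$ case), so the difference is one of packaging: your invariant is easier to push through application, abstraction, $\Let{}{}{}$ and the fixpoint construct, since no bookkeeping of multiset indices is needed, at the price of the extra uniqueness argument at the end. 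For that last step you should say explicitly that the remaining variables are frozen at a point of $\Pcoh\Snat^L$ with all coordinates strictly positive (such a point exists, e.g.\ $(u_l)_n=2^{-n-1}$): vanishing of the one-variable coefficients only yields vanishing of each individual $(\Psem{\Spy{\Vect x}(M)}{\Vect x})_{(\Vect\rho,a)}$ because these are non-negative and are weighted by strictly positive monomials in the frozen coordinates. With that sentence added, the argument is complete.
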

Let $f:\Pcoh\Snat^L\to\Pcoh{\Tsem\sigma}$ be the analytic function
induced by $\Psem{\Spy{\Vect x}(M)}{\Vect x}$. This lemma says that
$f(\Vect u)$ (where $\Vect u\in\Pcoh\Snat^L$) depends only on
$(u(l)_0)_{l\in L}\in\Intercc 01^L$, that is, on the $0$-component of
the $u(l)$'s which are probability sub-distributions on $\Nat$.
% ++++ donner la preuve! Notamment comment on traite les termes ouverts
% NB: Il faut aussi définir et typer \Spy{\Vect x}{M} pour M ouvert
\begin{proof}
  (Sketch) Simple induction on $M$ considering also open terms: we
  prove that, if $\Tseq{\Gamma}{M}{\sigma}$ with
  $\Gamma=(y_1:\tau_1,\dots,y_k:\tau_k)$, so that
  $\Tseq{\Gamma,\Vect x}{\Spy{\Vect x}(M)}\sigma$, then given
  $\mu_i\in\Mfin{\Web{\Tsem{\tau_i}}}$ for $i=1,\dots,k$,
  $a\in\Web{\Tsem\sigma}$ and $\Vect\rho\in\Mfin\Nat^L$, if
  $(\Psem{\Spy{\Vect x}(M)}{\Gamma,\Vect
    x})_{(\Vect\mu,\Vect\rho,a)}\not=0$ then
  $\forall l\in L\,\forall n\in\Nat\ \rho_l(n)\not=0\Implies
  n=0$. Considering $\Psem{\Spy{\Vect x}(M)}{\Gamma,\Vect x}$ as a
  function
  \begin{linenomath}
  \begin{align*}
    \Psem{\Spy{\Vect x}(M)}{\Gamma,\Vect x}:
    \prod_{i=1}^k\Pcoh{\Tsem{\tau_i}}\times\Pcoh{\Snat}^L\to\Pcoh{\Tsem\sigma}
  \end{align*}
  \end{linenomath}
  (see Section~\ref{sec:pPCF-den-sem-pcoh}) this amounts to showing
  that given $\Vect v\in\prod_{i=1}^k\Pcoh{\Tsem{\tau_i}}$ and
  $\Vect u,\Vect{u'}\in\Pcoh{\Snat}^L$ then
  \begin{linenomath}
  \begin{align*}
    (\forall l\in L
    \ u(l)_0=u'(l)_0)\Implies
    \Psem{\Spy{\Vect x}(M)}{\Gamma,\Vect x}(\Vect v,\Vect u)
    =\Psem{\Spy{\Vect x}(M)}{\Gamma,\Vect x}(\Vect v,\Vect{u'})\,.
  \end{align*}
  \end{linenomath}
  In other words the function
  $\Psem{\Spy{\Vect x}(M)}{\Gamma,\Vect x}(\Vect v,\Vect u)$ of
  $\Vect u$ depends only on the values taken by the $u(l)$'s (the
  components of $\Vect u$) on $0\in\Web{\Snat}=\Nat$. This follows by
  a straightforward induction on $M$, the only ``interesting'' (though
  obvious) case being when $M$ is of shape $\Mark Nl$: in this case we
  have
  \begin{linenomath}
  \begin{align*}
    \Psem{\Spy{\Vect x}(M)}{\Gamma,\Vect x}(\Vect v,\Vect u)
    =u(l)_0\Psem{\Spy{\Vect x}(N)}{\Gamma,\Vect x}(\Vect v,\Vect u)
  \end{align*}
  \end{linenomath}
  since $\Psem{\Loopt\sigma}{}=0$. 
\end{proof}

\begin{lemma}\label{lemma:spy-lc-expression}
  Let $\Vect r\in(\Rational\cap[0,1])^L$ and $M\in\PPCFlab(L)$
  with $\Tseq{}M\tau$. Then
  \begin{align*}
    \Psem{\Spy{\Vect x}(M)}{\Vect x}(\Vect r\,\Base
    0)=\Psem{\Striplc{\Lcof{\Vect r}(M)}}{}\,.
  \end{align*}
\end{lemma}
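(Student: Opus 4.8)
The plan is to prove the identity by structural induction on $M$, after strengthening it to open terms. Concretely, I would show that for every $M\in\PPCFlab(L)$ with $\Tseq\Gamma M\sigma$ (the variables of $\Gamma$ being disjoint from $\Vect x$) and every $\Vect u\in\prod_i\Pcoh{\Tsem{\sigma_i}}$,
\[
  \Psem{\Spy{\Vect x}(M)}{\Gamma,\Vect x}(\Vect u,\Vect r\,\Base 0)
  =\Psem{\Striplc{\Lcof{\Vect r}(M)}}{\Gamma}(\Vect u)\,,
\]
the stated lemma being the instance $\Gamma=()$. Both translations $\Spy{\Vect x}(\cdot)$ and $\Striplc{\Lcof{\Vect r}(\cdot)}$ are compositional and commute with every term construct except the labeling construct $\Mark{\cdot}l$. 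Since the interpretation $\Psem\cdot\Gamma$ is itself compositional (with the cartesian closed structure and the fixpoint operator handling abstraction, application and $\Fix$), for all these cases the required equality follows immediately from the induction hypotheses applied to the immediate subterms. The spy variables $\Vect x$ simply ride along unchanged in the context, so the only care needed is the routine bookkeeping of contexts and the freshness assumption on $\Vect x$.

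The single interesting case is $M=\Mark Nl$, where the two translations yield syntactically different terms with equal interpretations. On the spy side $\Spy{\Vect x}(\Mark Nl)=\If{x_l}{\Spy{\Vect x}(N)}{\Loopt\sigma}$; evaluating the conditional with $x_l$ interpreted by $r_l\,\Base 0$, the clause for $\If\cdot\cdot\cdot$ gives $(r_l\,\Base 0)_0\,\Psem{\Spy{\Vect x}(N)}{\Gamma,\Vect x}(\Vect u,\Vect r\,\Base 0)+\bigl(\sum_n(r_l\,\Base 0)_{n+1}\bigr)\,\Psem{\Loopt\sigma}{}$, and since $(r_l\,\Base 0)_0=r_l$ while $(r_l\,\Base 0)_{n+1}=0$ for all $n$, this equals $r_l\,\Psem{\Spy{\Vect x}(N)}{\Gamma,\Vect x}(\Vect u,\Vect r\,\Base 0)$. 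On the other side $\Striplc{\Lcof{\Vect r}(\Mark Nl)}=\If{\Dice{r_l}}{\Striplc{\Lcof{\Vect r}(N)}}{\Loopt\sigma}$; here $\Psem{\Dice{r_l}}{\Gamma}(\Vect u)=r_l\,\Snum 0+(1-r_l)\,\Snum 1$, so the same clause gives $r_l\,\Psem{\Striplc{\Lcof{\Vect r}(N)}}{\Gamma}(\Vect u)+(1-r_l)\,\Psem{\Loopt\sigma}{}$. Using $\Psem{\Loopt\sigma}{}=0$ both expressions collapse to $r_l$ times the interpretation of the translated subterm, and the induction hypothesis on $N$ identifies these, closing the case.

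The argument presents no genuine obstacle; the one point deserving attention is precisely this labeling case, where one must recognise \emph{why} two different encodings coincide semantically. The spy variable carries the point mass $r_l\,\Base 0$, which places no weight on integers $\geq 1$, so the else-branch is simply never selected; the labeled coin $\Dice{r_l}$ does put weight $1-r_l$ on $\Num 1$ and hence does select the else-branch, but that branch is $\Loopt\sigma$ with null interpretation. Thus in both encodings the factor $r_l$ emerges on the then-branch while the divergent contribution is discarded. That is the conceptual content making the identity hold; everything else is a direct compositionality argument.
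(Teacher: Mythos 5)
Your proof is correct and matches the paper's approach: the paper states only that this is an ``easy induction on $M$'' using $\Psem{\Dice r}{}=r\Base 0+(1-r)\Base 1$ and the need for a strengthened statement on open terms, and your write-up supplies exactly that induction, with the labeling case worked out correctly via $\Psem{\Loopt\sigma}{}=0$.
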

\begin{proof}(Sketch)
  One proves more generally that given $M$ such that
  $\Tseq\Gamma M\tau$ with
  $\Gamma=(y_1:\sigma_1,\dots,y_k:\sigma_k)$, so that
  $\Tseq{\Gamma,\Vect x}{\Spy{\Vect x}(M)}\tau$, and
  $\Vect v\in\prod_{i=1}^k\Pcoh{\Tsem{\sigma_i}}$, one has
  \begin{linenomath}
  \begin{align*}
    \Psem{\Spy{\Vect x}(M)}{\Gamma,\Vect x}(\Vect u,\Vect r\,\Base
  0)=\Psem{\Striplc{\Lcof{\Vect r}(M)}}{\Gamma}(\Vect u)
  \end{align*}
  \end{linenomath}
  by a simple induction on $M$. The only interesting case is when
  $M=\Mark Nl$:
  \begin{linenomath}
  \begin{align*}
    \Psem{\Spy{\Vect x}(M)}{\Gamma,\Vect x}(\Vect u,\Vect r\,\Base
    0)
    &=(\Vect r\,\Base 0)(l)_0
      \Psem{\Spy{\Vect x}(N)}{\Gamma,\Vect x}(\Vect u,\Vect r\,\Base
    0)\\
    &=r(l)_0
      \Psem{\Spy{\Vect x}(N)}{\Gamma,\Vect x}(\Vect u,\Vect r\,\Base
      0)\\
    &=\Psem{\Striplc{\Lcof{\Vect r}(M)}}{\Gamma}(\Vect u)
  \end{align*}
  \end{linenomath}
  where $\Vect r\,\Base 0=(r_l\Base 0)_{l\in L}\in\Pcoh\Snat^L$, using
  twice the fact that $\Psem{\Loopt\sigma}{}=0$.
\end{proof}

With the constructions and observations accumulated so far, we can
prove the main result of this section. Let $M\in\PPCFlab(L)$ and
$l\in L$. Remember from Section~\ref{sec:ppcf-labels} that
$\Evalstlabfp sl:\Cantorfin\to\Nat$ is the integer r.v.~defined by
$\Evalstlabfp sl(\alpha)=\Evalstlabf s(\alpha)(l)$, the number of
times an $l$-labeled subterm of $M$ has arrived in head position
during the execution of $M$ induced by $\alpha\in\Cantorfin$. So this
r.v.~allows to evaluate the number of execution steps in this
evaluation. For instance if $M$ is obtained by labeling all sub-terms
of a given closed term $N$ of $\PPCF$ of type $\Tnat$ with the same
label $l\in L$, we get an $\Nat$-valued r.v.~which evaluates the
number of execution steps in the evaluation of $N$ that is, the number
of times a subterm of $N$ arrives in head position during this
evaluation (notice that if $N$ is a constant $\Num n$, this number is
$1$).

Given $\mu\in\Mfin L$, we use $\mu\,\Mset 0$ in the proof of the next
result for the element $\rho$ of $\Mfin\Nat^L$ such that
$\rho_l(n)=\mu(l)$ if $n=0$ and $\rho_l(n)=0$ otherwise.

\begin{theorem}\label{th:expect-time-diff}
  Let $M\in\PPCFlab(L)$ with $\Tseq{}M\Tnat$. Then for all $l\in L$
  \begin{linenomath}
    \begin{align*}
    \Expect{\Evalstlabfp{\Inistate M}l\mid\Eventconvz{\Inistate{\Striplab M}}}
    =\frac{\partial\Psem{\Spy{\Vect x}M}{}(\Vect r\Base 0)}{\partial r_l}
    (1,\dots,1)/{\Psem{\Striplab M}{}}_0\in\Realpc\,.
    \end{align*}
  \end{linenomath}
\end{theorem}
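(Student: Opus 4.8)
The plan is to assemble three ingredients that are all in place by the time the statement is reached, so that the proof is a short computation rather than a new construction. The first is the conditioning identity recorded in Section~\ref{sec:ppcf-labels}, which for $s=\Inistate M$ reads $\Expect{\Evalstlabfp{\Inistate M}l\mid\Eventconvz{\Striplab{\Inistate M}}}=\Expect{\Evalstlabfp{\Inistate M}l}/\Proba{\Striplab{\Inistate M}}(\Eventconvz{\Striplab{\Inistate M}})$; it is legitimate because the random variable $\Evalstlabfp{\Inistate M}l$ vanishes outside the convergence event, a consequence of $\Evdomlab{\Inistate M}=\Evdom{\Striplab{\Inistate M}}$. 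Since stripping labels commutes with forming the initial state, $\Striplab{\Inistate M}$ and $\Inistate{\Striplab M}$ denote the same state, so the conditioning event coincides with the event $\Eventconvz{\Inistate{\Striplab M}}$ appearing in the statement.

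First I would identify the denominator: by the Adequacy Theorem (Theorem~\ref{th:pcoh-adequacy}) applied to the unlabeled term $\Striplab M\in\PPCF$ of type $\Tnat$, the convergence probability $\Proba{\Inistate{\Striplab M}}(\Eventconvz{\Inistate{\Striplab M}})$ equals ${\Psem{\Striplab M}{}}_0$, producing the factor $1/{\Psem{\Striplab M}{}}_0$ of the statement. Next I would identify the numerator with the partial derivative: this is precisely the displayed chain of equalities immediately preceding the statement, where Equation~\Eqref{eq:esp-evalstlab-multiset} gives $\Expect{\Evalstlabfp{\Inistate M}l}=\sum_{\mu\in\Mfin L}\mu(l)\,(\Psem{\Spy{\Vect x}(M)}{\Vect x})_{(\mu\,\Mset 0,0)}$, which is read off as the value at $(1,\dots,1)$ of $\partial/\partial r_l$ applied to the power series $\Vect r\mapsto\Psem{\Spy{\Vect x}M}{}(\Vect r\,\Base 0)$. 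Substituting these two identifications into the conditioning identity yields exactly the claimed equality.

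The one point deserving care, and which I expect to be the main obstacle, is the boundary differentiation in the numerator step. Differentiating $\sum_{\mu\in\Mfin L}(\Psem{\Spy{\Vect x}(M)}{\Vect x})_{(\mu\,\Mset 0,0)}(\Vect r)^\mu$ term by term with respect to $r_l$ yields a series with non-negative coefficients, and one must justify that its value at the boundary point $(1,\dots,1)$ of the domain equals $\sum_{\mu}\mu(l)\,(\Psem{\Spy{\Vect x}(M)}{\Vect x})_{(\mu\,\Mset 0,0)}$. Because all coefficients are non-negative, this is an instance of monotone convergence (an Abel-type argument for non-negative power series): the one-sided limit as $\Vect r\uparrow(1,\dots,1)$ exists in $[0,+\infty]$ and equals the sum. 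That this value may be $+\infty$ is not a defect but the semantic counterpart of an infinite expected computation time, in line with the remark that derivatives can blow up on the border of the unit ball.
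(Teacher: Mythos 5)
Your proposal is correct and follows essentially the same route as the paper, whose ``proof'' of this theorem is exactly the chain of identities developed just before the statement: the conditioning identity and Equation~\Eqref{eq:esp-evalstlab-multiset} from Section~\ref{sec:ppcf-labels}, the identification $(\Psem{\Spy{\Vect x}(M)}{\Vect x})_{(\mu\,\Mset 0,0)}=\Proba{\Striplab{\Inistate M}}(\Evalstlabf{\Inistate M}=\mu)$ obtained via Lemmas~\ref{lemma:proba-conv-lc} and~\ref{lemma:Lcof_evaldom_struct} and coefficient matching, and Theorem~\ref{th:pcoh-adequacy} for the denominator. Your explicit monotone-convergence justification of the term-by-term derivative at the boundary point $(1,\dots,1)$, with value possibly $+\infty$, is a point the paper leaves implicit and is a welcome addition rather than a deviation.
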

\begin{proof}
By Lemma~\ref{lemma:spy-lc-expression},
\begin{align}\label{eq:Psem-Lcof-expr}
  {\Psem{\Striplc{\Lcof{\Vect r}(M)}}{}}_0
  =\sum_{\mu\in\Mfin
  L}(\Psem{\Spy{\Vect x}(M)}{\Vect x})_{(\mu\,\Mset 0,0)}(\Vect
  r)^\mu
\end{align}
By Theorem~\ref{th:pcoh-adequacy}, we have
\begin{linenomath}
  \begin{align*}
  {\Psem{\Striplc{\Lcof{\Vect r}(M)}}{}}_0
  &=\Proba{\Striplc{\Lcof{\Vect r}({\Inistate M})}}
    (\Eventconvz{\Striplc{\Lcof{\Vect r}({\Inistate M})}})\\
  % &=\sum_{\alpha\in\Evdom{\Striplc{\Lcof{\Vect r}({\Inistate M})}}}
  %   \Evalst(\Striplc{\Lcof{\Vect r}({\Inistate M})},\alpha)\\
  &=\sum_{(\alpha,\Vect\beta)\in\Evdomlc{\Lcof{\Vect r}({\Inistate M})}}
    \Evalstlc(\Lcof{\Vect r}({\Inistate M}),\alpha,\Vect\beta)\quad\text{by Lemma~\ref{lemma:proba-conv-lc}}\\
  &=\sum_{\alpha\in\Evdom{\Striplab {\Inistate M}}}
    \Evalst(\Striplab {\Inistate M},\alpha)\prod_{l\in L}r_l^{\Evalstlab({\Inistate M},\alpha)(l)}
\quad\text{by Lemma~\ref{lemma:Lcof_evaldom_struct-3}}\\
  &=\sum_{\mu\in\Mfin L}\left(\sum_{\Biind{\alpha\in\Ocons 0{\Cantorfin}}{\Evalstlabf{\Inistate M}(\alpha)=\mu}}\Evalstf{\Striplab{\Inistate M}}(\alpha)\right)(\Vect r)^\mu
  \end{align*}
  \end{linenomath}
  and since this holds for all $\Vect r\in(\Rational\cap[0,1])^L$, we
  must have by Equation~\Eqref{eq:Psem-Lcof-expr}, for all
  $\mu\in\Mfin L$,
\begin{linenomath}    
\begin{equation}\label{eq:Evsp-coeff-proba}
  \begin{split}
    (\Psem{\Spy{\Vect x}(M)}{\Vect x})_{(\mu\,\Mset 0,0)}
    &=\sum_{\Biind{\alpha\in\Ocons 0{\Cantorfin}}
      {\Evalstlabf{\Inistate M}(\alpha)=\mu}}
    \Evalstf{\Striplab{\Inistate M}}(\alpha)\\
    &=\Proba{{\Striplab{\Inistate M}}}(\Evalstlabf{\Inistate M}=\mu)
  \end{split}
\end{equation}
\end{linenomath}

Let $l\in L$, we have
\begin{linenomath}
\begin{align*}
  \Expect{\Evalstlabfp{\Inistate M}l}
  &=\sum_{\mu\in\Mfin L}\mu(l)\Proba {\Inistate M}(\Evalstlabf{\Inistate M}=\mu)
    \quad\text{by Equation~\Eqref{eq:esp-evalstlab-multiset}}\\
  &=\sum_{\mu\in\Mfin L}
    \mu(l)(\Psem{\Spy{\Vect x}(M)}{\Vect x})_{(\mu\,\Mset 0,0)}
    \quad\text{by Equation~\Eqref{eq:Evsp-coeff-proba}}\\
  &=\frac{\partial\Psem{\Spy{\Vect x}M}{\Vect x}(\Vect r\Base 0)_0}{\partial r_l}(1,\dots,1)\,.
\end{align*}
\end{linenomath}
Indeed, given $\Vect r\in[0,1]^L$ one has
\begin{linenomath}
\begin{align*}
  \Psem{\Spy{\Vect x}M}{\Vect x}(\Vect r\Base 0)_0=\sum_{\mu\in\Mfin L}
  (\Psem{\Spy{\Vect x}(M)}{\Vect x})_{(\mu\,\Mset 0,0)}\Vect r^\mu 
\end{align*}
\end{linenomath}
and $\frac{\partial \Vect r^\mu}{\partial r_l}(1,...,1)=\mu(l)$,
whence the last equation.
\end{proof}

\begin{example}\label{ex:expect-computation}
  The point of this formula is that we can apply it to algebraic
  expressions of the semantics of the program.  Consider the following
  term $M_q$ (for $q\in\Rational\cap[0,1]$) such that
  $\Tseq{}{M_q}{\Timpl\Tnat\Tnat}$:
  \begin{linenomath}
    \begin{align*}
      M_q=\Fix{\Abst f{\Timpl\Tnat\Tnat}{\Abst x\Tnat{\If{\Dice q}{
      \If{\App fx}{\If{\App fx}{\Num 0}{\Loopt\Tnat}}{\Loopt\Tnat}}
      {\If x{\If x{\Num 0}{\Loopt\Tnat}}{\Loopt\Tnat}}}}}\,,
    \end{align*}
  \end{linenomath}
  we study $\App{M_q}{\Mark{\Num 0}l}$ (for a fixed label
  $l\in\Labels$).  So in this example, ``execution time'' means
  ``number of uses of the parameter $\Num 0$''.  For all
  $v\in\Pcoh\Snat$, we have $\Psem{M_q}{}(v)=\phi_q(v_0)\,\Snum 0$
  where $\phi_q:[0,1]\to[0,1]$ is such that $\phi_q(u)$ is the least
  element of $[0,1]$ which satisfies
  \begin{align*}
    \phi_q(u)=(1-q)\,u^2+q\,\phi_q(u)^2\,.
  \end{align*}
  So
  \begin{linenomath}
    \begin{align*}
      \phi_q(u)=
      \begin{cases}
        \frac{1-\sqrt{1-4q(1-q)u^2}}{2q} & \text{if } q>0\\
        u^2 & \text{if }q=0
      \end{cases}
    \end{align*}
  \end{linenomath}
  the choice between the two solutions of the quadratic equation being
  determined by the fact that the resulting function $\phi_q$ must be
  monotonic in $u$. So by Theorem~\ref{th:pcoh-adequacy} (for
  $q\in(0,1]$)
  \begin{linenomath}
    \begin{align}\label{eq:example-probared}
      \Probared{\App{M_q}{\Num 0}}{\Num 0}=\phi_q(1)=\frac{1-\Absval{2q-1}}{2q}
      =
      \begin{cases}
        1 & \text{if }q\leq 1/2\\
        \frac{1-q}q &\text{if }q>1/2\,.
      \end{cases}
    \end{align}
  \end{linenomath}
  Observe that we have also $\Probared{M_0}{\Num 0}=\phi_0(1)=1$ so
  that Equation~\Eqref{eq:example-probared} holds for all $q\in[0,1]$
  (the corresponding curve is the second one in
  Fig.~\ref{fig:example-expect}).

  Then by Theorem~\ref{th:expect-time-diff} we have
  \begin{linenomath}
    \begin{align*}
      \Expect{\Evalstlabfp{\Inistate{\App{M_q}{\Mark{\Num 0}l}}}l\mid
      \Eventconvz{\Inistate{\App{M_q}{\Num
      0}}}}=\phi'_q(1)/\phi_q(1)\,.
    \end{align*}
  \end{linenomath}
  Since $\phi_q(u)=(1-q)\,u^2+q\,\phi_q(u)^2$ we have
  $\phi'_q(u)=2(1-q)u+2q\phi'_q(u)\phi_q(u)$ and hence
  \begin{align*}
    \phi'_q(1)=2(1-q)/(1-2q\phi_q(1))
  \end{align*}
  so that
  \begin{linenomath}
    \begin{align*}
      \phi'_q(1)/\phi_q(1)=
      \begin{cases}
        2(1-q)/(1-2q) & \text{ if } q<1/2\\
        \infty & \text{ if } q=1/2\\
        2(1-q)/(2q-1) & \text{ if } q>1/2
      \end{cases}
    \end{align*}
  \end{linenomath}
  (using the expression of $\phi_q(1)$ given by
  Equation~\Eqref{eq:example-probared}), see the third curve in
  Fig.~\ref{fig:example-expect}. For $q>1/2$ notice that the
  conditional time expectation \emph{and} the probability of
  convergence decrease when $q$ tends to $1$. When $q$ is very close
  to $1$, $\App{M_q}{\Num 0}$ has a very low probability to terminate,
  but when it does, it uses its argument typically twice. For $q=1/2$
  we have almost sure termination with an infinite expected
  computation time.

  Of course such explicit computations are not always possible. For
  instance, using more occurrences of $\App fx$ we can modify the
  definition of $M_q$ in such a way that computing $\phi_q(u)$ would
  require solving a quintic. Or even we could set
  \begin{linenomath}
    \begin{align*}
      M_q=\Fix{\Abst f{\Timpl\Tnat\Tnat}{\Abst x\Tnat{\If{\Dice q}{
      \If{\App fx}{\If{\App f{\App fx}}{\Num 0}{\Loopt\Tnat}}{\Loopt\Tnat}}
      {\If x{\If x{\Num 0}{\Loopt\Tnat}}{\Loopt\Tnat}}}}}\,,
    \end{align*}
  \end{linenomath}
  and then our solution function $\phi_q$ satisfies
  $\phi_q(u)=(1-q)\,u^2+q\,\phi_q(u)\phi_q(\phi_q(u))$; in such a case
  we cannot expect to have an explicit expression for
  $\phi_q(u)$. Approximating the value of $\phi'_q(1)$ from below is
  possible by performing a finite number of iterations of the fixpoint
  and approximating it from above is a more subtle problem. We could
  also expect to use more efficient approaches typically based on
  Newton's method.
% Figure de l'exemple
%
\begin{figure}
  \centering
    \begin{tikzpicture}[scale=0.6]
      \begin{axis}
        \addplot[thick, blue,domain=0:1, no markers,samples=500]
        {(1-sqrt(1-x*x)};
        %\addplot +[sharp plot] coordinates {(0,0) (0.5,0) (1,1)};
      \end{axis}
    \end{tikzpicture}\ 
    \begin{tikzpicture}[scale=0.6]
      \begin{axis}[xtick distance=0.25]
        \addplot[thick, blue,domain=0:0.5, no markers,samples=100]
        {1};
        \addplot[thick, blue,domain=0.5:1, no markers,samples=100]
        {1/x-1};
        \addplot [red, dashed] coordinates {(0.5,0) (0.5,1)};
        %\addplot +[sharp plot] coordinates {(0,0) (0.5,0) (1,1)};
      \end{axis}
    \end{tikzpicture}\ 
      \begin{tikzpicture}[scale=0.6]
      \begin{axis}[ extra y ticks={2}, xtick distance=0.25]
        \addplot[thick, blue,domain=0:0.45, no markers,samples=100]
        {2*(1-x)/(1-2*x)};
        \addplot[thick, blue,domain=0.55:1, no markers,samples=100]
        {2*x/(2*x-1)};
        \addplot [red, dashed] coordinates {(0.5,0) (0.5,12)};
        %\addplot +[sharp plot] coordinates {(0,0) (0.5,0) (1,1)};
      \end{axis}
    \end{tikzpicture}
    \caption{Plot of $\phi_{0.5}(u)$ with $u$ on the x-axis (vertical
      slope at $u=1$). Plots of $\phi_q(1)$ and
      $\Expect{\Evalstlabfp{\Inistate{\App{M_q}{\Mark{\Num 0}l}}}l\mid
        \Eventconvz{\Inistate{\App{M_q}{\Num 0}}}}$ with $q$ on the
      x-axis. See Example~\ref{ex:expect-computation}.}
  \label{fig:example-expect}
\end{figure}
\end{example}

\begin{remark} (Connection with relational and coherence semantics.)
  It is possible to interpret terms of $\PPCF$ in $\REL$, the
  relational model of Linear Logic (see for
  instance~\cite{BucciarelliEhrhard99}). In this model each type
  $\sigma$ is interpreted as a set $\Tsemrel\sigma$:
  \begin{linenomath}
  \begin{align*}
    \Tsemrel\Snat&=\Nat\\
    \Tsemrel{\Timpl\sigma\tau}&=\Mfin{\Tsemrel\sigma}\times\Tsemrel\tau
  \end{align*}
  \end{linenomath}
  so that for each type $\sigma$ we have
  \begin{linenomath}
  \( %begin{align*}
    \Tsemrel\sigma=\Web{\Tsem{\sigma}}\,.
    \) %end{align*}
    \end{linenomath}
  If $\Tseq\Gamma M\tau$ with
  $\Gamma=(x_1:\sigma_1,\dots,x_k:\sigma_k)$ then
  \begin{linenomath}
  \begin{align*}
    \Psem M\Gamma\in\REL\left(\prod_{i=1}^k\Mfin{\Tsemrel{\sigma_i}},
    \Tsemrel\tau\right)
    =\Part{\prod_{i=1}^k\Mfin{\Tsemrel{\sigma_i}}\times
    \Tsemrel\tau}\,.
  \end{align*}
\end{linenomath}
This semantics is ``qualitative'' in the sense that a point can
only belong or not belong to the interpretation of a term whereas the
$\PCOH$ semantics is quantitative in the sense that the interpretation
of the same term also provides a coefficient $\in\Realp$ for this
point. We explain shortly the connection between the two models.
To this end we describe first the relational model. One of the
shortest ways to do so is by means of the ``intersection typing
system'' given in Fig.~\ref{fig:PPCF-rel-intertypes} where we use the
following conventions:
  \begin{itemize}
  \item $\Phi,\Phi_0\cdots$ are \emph{semantic contexts} of shape
    $\Phi=(x_1:\mu_1:\sigma_1,\dots,x_k:\mu_k:\sigma_k)$ where the
    $x_i$'s are pairwise distinct variables and
    $\mu_i\in\Mfin{\Tsemrel{\sigma_i}}$;
  \item if $\Phi=(x_1:\mu_1:\sigma_1,\dots,x_k:\mu_k:\sigma_k)$ is
    such a semantic context then
    $\Underc\Phi=(x_1:\sigma_1,\dots,x_k:\sigma_k)$ is the
    underlying typing context;
  \item given a typing context
    $\Gamma=(x_1:\sigma_1,\dots,x_k:\sigma_k)$, $\Zeroc\Gamma$
    stands for the semantic context
    $\Zeroc\Gamma=(x_1:\Mset{}:\sigma_1,\dots,x_k:\Mset{}:\sigma_k)$;
  \item given semantic contexts
    $\Phi_j=(x_1:\mu^j_1:\sigma_1,\dots,x_k:\mu^j_k:\sigma_k)$ for
    $j=1,\dots,n$ which have all the same underlying typing context
    $\Gamma=(x_1:\sigma_1,\dots,x_k:\sigma_k)$, $\sum_{j=1}^n\Phi_j$
    stands for the semantic context
    $(x_1:\sum_{j=1}^n\mu^j_1:\sigma_1,\dots,x_k:\sum_{j=1}^n\mu^j_k:\sigma_k)$
    whose underlying typing context is $\Gamma$ ($\Zeroc\Gamma$ can
    be seen as the case $n=0$ of this construct with the slight
    problem that $\Gamma$ cannot be derived from the $\Phi_j$'s in
    that case since there are none, whence the special construct
    $\Zeroc\Gamma$).
  \end{itemize}
  Then, assuming that $\Tseq{(x_i:\sigma_i)_{i=1}^k} M\tau$, this
  typing system is such that, given
  $\Vect\mu\in\prod_{i=1}^k\Mfin{\Tsemrel{\sigma_i}}$ and
  $a\in\Tsemrel\tau$, one has $(\Vect\mu,a)\in\Psemrel M\Gamma$ iff
  the judgment
  \begin{linenomath}
  \begin{align*}
    \Sseq{(x_i:\mu_i:\sigma_i)_{i=1}^k}{M}{a}{\tau}
  \end{align*}
  \end{linenomath}
  is derivable.  The interpretation of a term in $\REL$ is simply the
  support of its interpretation in $\PCOH$:
  \begin{linenomath}
  \begin{align*}
    (\Vect\mu,a)\in\Psemrel M\Gamma\Equiv(\Psem M\Gamma)_{\Vect\mu,a}\not=0
  \end{align*}
  \end{linenomath}
  as soon as all occurrences of $\Dice r$ in $M$ are such that
  $r\notin\Eset{0,1}$ (occurrences of $\Dice 0$ and $\Dice 1$ can be
  replaced by $\Num 1$ and $\Num 0$ respectively without changing the
  semantics of $M$). This is easy to prove by a simple induction on
  $M$.

  Since~\cite{BucciarelliEhrhard99,Boudes11} we know that Girard's
  coherence space semantics can be modified as follows: a
  \emph{non-uniform coherence space} is a triple
  $X=(\Web X,\scoh_X,\sincoh_X)$ where $\Web X$ is an at most
  countable set (the web of $X$) and $\scoh_X$, $\sincoh_X$ are
  disjoint binary symmetric relations on $\Web X$ called \emph{strict
    coherence} and \emph{strict incoherence} but contrarily to
  ordinary coherence spaces we can have $a\scoh_X a$ or $a\sincoh_X a$
  for some $a\in\Web X$. These objects can be organized into a
  categorical model of classical linear logic $\NUCS$ whose associated
  Kleisli cartesian closed category is a model of $\PCF$ (that is,
  $\PPCF$ without the $\Dice r$ construct).  Contrarily to what
  happens with Girard's coherence spaces\footnote{Indeed in Girard's
    coherence spaces, the web of $\Excl X$ is the set of all finite
    cliques, or all finite multicliques of elements of $\Web X$ (there
    are two versions of this exponential), hence this web \emph{depends on the
    coherence relation $\scoh_X$}. This is no more the case with
    non-uniform coherence spaces and this is the most important
    difference between the two models.}, we have
  \begin{linenomath}
  \begin{align*}
    \Web{\Tsemcoh\sigma}=\Tsemrel\sigma=\Web{\Tsem\sigma}\,.
  \end{align*}
  \end{linenomath}

  Moreover given a $\PCF$ term $M$ such that $\Tseq{}M\tau$, the set
  $\Tsemcoh M$, which is a clique of the non-uniform coherence space
  $\Tsemcoh\tau$ ---~meaning that
  $\forall a,a'\in\Psem M{}\ \neg (a\sincoh_{\Tsemcoh\tau}a')$~---,
  satisfies
  \begin{linenomath}
    \begin{align*}
      \Psemcoh M{}
      =\Psemrel M{}=\Eset{a\in\Web{\Tsem\tau}\St{\Psem M{}}_a\not=0}\,.
    \end{align*}
  \end{linenomath}
  In other words, the interpretation of a $\PCF$ term in $\NUCS$ is
  \emph{exactly the same} as its interpretation in the basic model
  $\REL$. So what is the point of the model $\NUCS$?  It teaches us
  something we couldn't see in $\REL$: $\Psemrel M{}$ is a clique in
  the non-uniform coherence space associated with its type in $\NUCS$.
  
  In the model $\NUCS$, the interpretation of the object of integers
  $\Snat^\NUCS=\Tsemcoh\Tnat$ satisfies $\Web{\Snat^\NUCS}=\Nat$,
  $n\sincoh n'$ if $n\not=n'$ and
  $\neg(n\scoh n)\wedge\neg(n\sincoh n)$ for all $n\in\Nat$. In the
  model $\NUCS$ at least two exponentials are available; the free one
  is characterized in~\cite{Boudes11}. With this exponential,
  $\Excl{\Snat^\NUCS}$ has $\Mfin\Nat$ as web and
  \begin{itemize}
  \item $\mu\sincoh\mu'$ if
    $\exists n\in\Supp\mu,\,n'\in\Supp{\mu'}\ n\not=n'$
  \item $\mu\scoh\mu'$ if $\Supp\mu\cup\Supp{\mu'}$ has at most one
    element and $\mu\not=\mu'$.
  \end{itemize}
  Notice in particular that $\Mset{0,1}\sincoh\Mset{0,1}$ in
  $\Excl{\Snat^\NUCS}$.

  Let $M\in\PCFlab(L)$ (that is $M\in\PPCFlab(L)$ and $M$ contains no
  instances of $\Dice r$) such that $\Tseq{}M\Tnat$ so that
  $\Spy{\Vect x}(M)\in\PCF$ satisfies
  $\Tseq{(x_l:\Tnat)_{l\in L}}{\Spy{\Vect x}(M)}\Tnat$ where
  $\Vect x=(x_l:\Tnat)_{l\in L}$ is a list of pairwise distinct
  variables. Then $\Psemcoh{\Spy{\Vect x}(M)}{\Vect x}$ is a clique of
  the non-uniform coherence space
  $X=\Limpl{\Excl{\Snatcoh}\ITens\cdots\ITens\Excl{\Snatcoh}}\Snatcoh$
  (one occurrence of $\Excl{\Snatcoh}$ for each element of $L$). If
  $(\Vect\mu,n)\in\Psemcoh{\Spy{\Vect x}(M)}{\Vect x}$ then we know
  that each $\mu(l)\in\Mfin\Nat$ satisfies
  $\Supp{\mu(l)}\subseteq\Eset 0$ (see
  Lemma~\ref{lemma:spy-sem-zero}). And since the set
  $\Psemcoh{\Spy{\Vect x}(M)}{\Vect x}$ is a clique in $X$ and in view
  of the characterization above of the coherence relation of
  $\Excl\Snatcoh$, this set contains at most one element. When it is
  empty, this means that the execution of $M$ does not terminate. When
  it is a singleton $\Eset{(\Vect\mu,n)}$, the execution of $M$
  terminates with value $\Num n$, using $\mu(l)(0)$ times the
  $l$-labeled subterms of $M$. This can be understood as a version of
  the denotational characterization of execution time developed
  in~\cite{DeCarvalho09,DeCarvalho18}, which is based on the model
  $\REL$.

  From the viewpoint of the denotational interpretation in $\PCOH$,
  non-uniform coherence spaces tell us that, for non-probabilistic
  labeled closed terms $M\in\PCFlab(L)$ of type $\Tnat$, the power
  series $\Psem{\Spy{\Vect x}(M)}{}$ has at most one monomial whose
  degree reflects the number of times its labeled subterms are used
  during its (deterministic) execution. Remember indeed that
  \begin{align*}
    \Psemrel{\Spy{\Vect x}(M)}{\Vect x}
    =\Eset{(\Vect\mu,n)
    \St(\Psem{\Spy{\Vect x}(M)}{\Vect x})_{\Vect\mu,n}\not=0}\,.
  \end{align*}
  When $M\in\PPCFlab(L)$, our Theorem~\ref{th:expect-time-diff} is
  a ``smooth'' version of this property.
  \begin{figure}
    \centering
    \footnotesize{
  \begin{center}    
    \AxiomC{$\mu_j=
      \begin{cases}
        \Mset{a} & \text{if }j=i\\
        \Mset{} & \text{otherwise}
      \end{cases}
    $}
    \UnaryInfC{$\Sseq{(x_j:\mu_j:\sigma_j)_{j=1}^k}{x_i}{a}{A}$}
    \DisplayProof
    \quad
    \AxiomC{}
    \UnaryInfC{$\Sseq{\Zeroc\Gamma}{\Num n}{n}{\Tnat}$}
    \DisplayProof
  \end{center}
  \begin{center}
    \AxiomC{$\Sseq\Phi Mn\Tnat$}
    \UnaryInfC{$\Sseq\Phi {\Succ M}{n+1}\Tnat$}
    \DisplayProof
    \quad
    \AxiomC{$\Sseq\Phi M0\Tnat$}
    \UnaryInfC{$\Sseq\Phi {\Pred M}{0}\Tnat$}
    \DisplayProof
    \quad
    \AxiomC{$\Sseq\Phi M{n+1}\Tnat$}
    \UnaryInfC{$\Sseq\Phi {\Pred M}{n}\Tnat$}
    \DisplayProof
  \end{center}
  \begin{center}
    \AxiomC{$r\in\Interoc01\cap\Rational$}
    \UnaryInfC{$\Sseq{\Zeroc\Gamma}{\Dice r}{0}{\Tnat}$}
    \DisplayProof
    \quad
    \AxiomC{$r\in\Interco 01\cap\Rational$}
    \UnaryInfC{$\Sseq{\Zeroc\Gamma}{\Dice r}{1}{\Tnat}$}
    \DisplayProof
  \end{center}
  \begin{center}
    \AxiomC{$\Sseq\Phi Mn\Tnat$}
    \UnaryInfC{$\Sseq{\Phi}{\Succ M}{n+1}{\Tnat}$}
    \DisplayProof
    \quad
    \AxiomC{$\Sseq{\Phi_0}M0\Tnat$}
    \AxiomC{$\Sseq{\Phi_1}PaA$}
    \AxiomC{$\Tseq{\Gamma}QA$
      where $\Gamma=\Underc{\Phi_0}=\Underc{\Phi_1}$}
    \TrinaryInfC{$\Sseq{\Phi_0+\Phi_1}{\If MPQ}{a}{A}$}
    \DisplayProof
  \end{center}
  \begin{center}
    \AxiomC{$\Sseq{\Phi_0}{M}{n}{\Tnat}$}
    \AxiomC{$\Sseq{\Phi_1,x:\Mset{n,\dots,n}:\Tnat}Na\sigma$}
    \AxiomC{$\Underc{\Phi_0}=\Underc{\Phi_1}$}
    \TrinaryInfC{$\Sseq{\Phi_0+\Phi_1}{\Let xMN}{a}{\sigma}$}
    \DisplayProof
  \end{center}
  \begin{center}
    \AxiomC{$\Sseq{\Phi_0}M{n+1}\Tnat$}
    \AxiomC{$\Sseq{\Phi_2}QaA$}
    \AxiomC{$\Tseq{\Gamma}PA$
      where $\Gamma=\Underc{\Phi_0}=\Underc{\Phi_2}$}
    \TrinaryInfC{$\Sseq{\Phi_0+\Phi_2}{\If MPQ}{a}{A}$}
    \DisplayProof
  \end{center}
  \begin{center}
    \AxiomC{$\Sseq{\Phi,x:\mu:A}{M}{b}{B}$}
    \UnaryInfC{$\Sseq{\Phi}{\Abst xAM}{(\mu,b)}{\Timpl AB}$}
    \DisplayProof
  \end{center}
  \begin{center}
    \AxiomC{$\Sseq{\Phi_0}{M}{(\Mset{\List a1n},b)}{\Timpl AB}$}
    \AxiomC{$\Sseq{\Phi_i}{P}{a_i}{A}$ and
      $\Underc{\Phi_i}=\Underc{\Phi_0}$ for $i=1,\dots,n$}
    \BinaryInfC{$\Sseq{\sum_{i=0}^n\Phi_i}{\App MP}{b}{B}$}
    \DisplayProof
  \end{center}
  \begin{center}
    \AxiomC{$\Sseq{\Phi_0}{M}{(\Mset{\List a1n},a)}{\Timpl AA}$}
    \AxiomC{$\Sseq{\Phi_i}{\Fix M}{a_i}{A}$ and
      $\Underc{\Phi_i}=\Underc{\Phi_0}$ for $i=1,\dots,n$}
    \BinaryInfC{$\Sseq{\sum_{i=0}^n\Phi_i}{\Fix M}{a}{A}$}
    \DisplayProof
  \end{center}
      }
      \caption{Relational interpretation of $\PPCF$ as an
        intersection typing system}
    \label{fig:PPCF-rel-intertypes}
  \end{figure}
\end{remark}

% \subsection{Termination expected time}
% Let $L\in\Partfin\Labels$ and $M\in\PPCFlab(L)$ be such that
% $\Tseq{}M\Tnat$. Given a family $\Vect x=(x_l)_{l\in L}$ of pairwise
% distince variables, we define $\Pcfspy {\Vect x}M\in\PPCF$ such that
% $\Tseq{(x_l:\Tnat)_{l\in L}}{\Pcfspy{\Vect x}M}{\Tnat}$; the
% definition is by induction on $M$ and is trivial in all cases (just as
% $\Lcof{\Vect r}(M)$) but the following:
% $\Pcfspy{\Vect x}{\Mark Nl}=\If{x_l}{\Pcfspy{\Vect x}N}{\Loopt\sigma}$
% where $\sigma$ is the type of $N$. Then
% $\Psem{\Pcfspy{\Vect x}M}{\Vect x}\in\Kl\PCOH(\Snat^L,\Snat)$
% (identifying the context $(x_l:\Tnat)_{l\in L}$ with $\Vect x$) and
% satifies the following properties.
% \begin{lemma}
%   If
%   $(m,b)\in\Mfin{L\times\Nat}\times\Nat=\Web{\Limpl{\Excl{\Snat^L}}{\Snat}}$
%   satisfies $(\Psem{\Pcfspy{\Vect x}M}{\Vect x})_{(m,n)}\not=0$
%   then $(l,i)\in\Supp m\Implies i=0$. Moreover, for
%   $\Vect r\in([0,1]\cap\Rational)^L$, we have
%   $\Psem{\Pcfspy{\Vect x}M}{\Vect x}(\Vect r)=\Psem{\Lcof{\Vect
%       r}(M)}{}$.
% \end{lemma}

\section{Differentials and distances}

In this second part of the paper, we also use differentiation in the
category $\PCOH$, but contrarily to what we did when relating derivatives with
execution time ---~we used only derivatives of first order
functions~--- we will now consider also the ``derivatives'' (in that case
one rather uses the word ``differentials'') of higher order
functions. This is possible thanks to the fact that, even at higher
order, our functions are analytic in some sense. 

\subsection{Order theoretic characterization of PCSs}

The following simple lemma will be useful in the sequel. It
is proven in~\cite{Girard04a} in a rather sketchy way, we provide here a
detailed proof for further reference.  We say that a partially
ordered set $S$ is $\omega$-complete if any increasing sequence of
elements of $S$ has a least upper bound in $S$.

\begin{lemma}\label{lemma:PCS-as-closed-set}
  Let $I$ be a countable set and let $P\subseteq\Realpto I$.  Then
  $(I,P)$ is a probabilistic coherence space iff the following
  properties hold (equipping $P$ with the product order).
  \begin{enumerate}
  \item $P$ is downwards closed and closed under barycentric combinations
  \item $P$ is $\omega$-complete
  \item \label{item:pcs-bdd-hyp}and for all $a\in I$ there is
    $\epsilon>0$ such that $\epsilon\Bcanon a\in P$ and
    $P_a\subseteq[0,1/\epsilon]$.
  \end{enumerate}
\end{lemma}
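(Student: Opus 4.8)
The plan is to prove the two implications separately, noting first that condition~\ref{item:pcs-bdd-hyp} is merely a repackaging of the two defining clauses of a PCS. Given downward closure, a single $\epsilon$ realizing both $\epsilon\Bcanon a\in P$ and $P_a\subseteq[0,1/\epsilon]$ is obtained from witnesses of $\exists x\in P\ x_a>0$ and $\exists m\ \forall x\in P\ x_a\leq m$ by taking $\epsilon=\min(x_a,1/m)$, and conversely condition~\ref{item:pcs-bdd-hyp} yields both clauses at once (and in particular $P\subseteq\Realpto I$). So the real content is the equivalence between $\Biorth P=P$ and conditions 1--3.

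For the direction PCS $\Rightarrow$ 1--3, I would write $P=\Orth Q$ with $Q=\Orth P$ (legitimate since $P=\Biorth P$) and record the standard stability properties of the orthogonal of an arbitrary set. For any $Q$, $\Orth Q$ is downward closed (if $u'\leq v'\in\Orth Q$ then $\Eval u{u'}\leq\Eval u{v'}\leq 1$) and closed under barycentric combinations (since $\Eval u{\sum\lambda_i u'_i}=\sum\lambda_i\Eval u{u'_i}\leq\sum\lambda_i=1$ when $\sum\lambda_i=1$), which gives condition~1. For condition~2, the pointwise supremum $s$ of an increasing sequence $(u'_n)$ in $\Orth Q$ satisfies $\Eval u s=\sup_n\Eval u{u'_n}\leq 1$ by monotone convergence, so $s\in\Orth Q$, and $s$ is finite-valued by the boundedness clause, hence is the least upper bound. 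Condition~3 is the repackaging above.

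For the converse I first reduce to the finitely supported case. Since $P$ is downward closed, condition~2 in fact guarantees that the pointwise supremum $s$ of any increasing sequence in $P$ lies in $P$: the least upper bound $\ell\in P$ it provides satisfies $s\leq\ell$, and downward closure then gives $s\in P$. Now take $x\in\Biorth P\setminus P$; condition~\ref{item:pcs-bdd-hyp} yields $\epsilon_a\Bcanon a\in\Orth P$, whence $x_a\leq 1/\epsilon_a<\infty$ and $x\in\Realpto I$. For a fixed enumeration of $I$, the truncations $x^{(n)}$ of $x$ to its first $n$ coordinates form an increasing sequence with pointwise supremum $x$; if they all lay in $P$ we would get $x\in P$, so some $x^{(n_0)}\notin P$ has finite support $J$, and since $x\geq x^{(n_0)}$ it suffices to separate $x^{(n_0)}$.

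It then remains to produce $u'\in\Orth P$ with $\Eval{x^{(n_0)}}{u'}>1$. I would pass to the finite-dimensional projection $C=\{y|_J\St y\in P\}\subseteq(\Realp)^J$, which is convex, downward closed, bounded (by condition~\ref{item:pcs-bdd-hyp}), and does not contain $x^{(n_0)}|_J$ (downward closure makes membership of a $J$-supported vector in $P$ equivalent to membership of its restriction in $C$). The key step is that $C$ is topologically closed: for $z\in\overline C$ the truncations $(z-\tfrac1n\mathbf 1)^+$ lie below nearby points of $C$, hence in $C$, form an increasing sequence with supremum $z$, and their zero-extensions form an increasing sequence in $P$ whose pointwise supremum (the zero-extension of $z$) is in $P$ by condition~2, so $z\in C$. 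Finite-dimensional separation then yields $v$ and $\gamma$ with $\sum_{a\in J}v_a(x^{(n_0)})_a>\gamma\geq\sum_{a\in J}v_a c_a$ for all $c\in C$; downward closure of $C$ lets me replace $v$ by its positive part without decreasing the left side nor raising the right, and condition~\ref{item:pcs-bdd-hyp} (via $\epsilon_a\Bcanon a\in C$) forces $\gamma>0$, so $u'=v^+/\gamma$ extended by $0$ outside $J$ lies in $\Orth P$ and separates $x^{(n_0)}$. The main obstacle is precisely this separation in the infinite-dimensional, unnormed setting: descending to finite dimensions via $\omega$-completeness, showing the projection is closed, and arranging the separating functional to be non-negative and scaled so that it belongs to $\Orth P$.
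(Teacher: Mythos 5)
Your proof is correct, and it follows the paper's overall strategy: dispose of the forward direction by the standard stability properties of orthogonals, then reduce the separation of a point $x\in\Biorth P\setminus P$ to the finitely supported case using $\omega$-completeness together with downward closure (a step the paper states tersely but which you justify carefully: the order-theoretic lub dominates the pointwise sup, which downward closure then puts back into $P$). Where you genuinely diverge is in the finite-dimensional step. The paper symmetrizes $P$ into $Q=\{z\in\Realto I\St(\Absval{z_i})_{i\in I}\in P\}$, takes the gauge $q$ of $Q$ (a norm by condition~\ref{item:pcs-bdd-hyp}), defines a linear functional on the line $\Real y$ dominated by $q$, extends it by the analytic Hahn--Banach theorem, and finally takes absolute values of the coefficients to land in $\Orth P$; the only topological fact it needs is closedness of $P$ along the ray through the point to be separated. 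You instead apply the geometric separation theorem to the point and the projection $C$ of $P$ onto the finite coordinate set, which obliges you to prove that $C$ is topologically closed --- your truncation argument via $\omega$-completeness and downward closure does this correctly --- and you then positivize and rescale the separating functional, using downward closure of $C$ to pass to the positive part and the fact that $\epsilon_a\Bcanon a\in C$ to force the separating constant $\gamma$ to be strictly positive. The trade-off: the paper's route avoids proving closedness of the whole projected set but needs the symmetrization trick and the sign-flipping manipulation at the end to recover a non-negative functional; yours stays entirely inside the positive orthant and makes the geometric content (separating a point from a closed, bounded, downward-closed convex body) more transparent, at the cost of the extra closedness lemma. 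I see no gap.
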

\begin{proof}
  The $\Implies$ implication is easy (see~\cite{DanosEhrhard08}), we
  prove the converse, which uses the Hahn-Banach theorem in finite
  dimension. Notice first that by condition~(\ref{item:pcs-bdd-hyp})
  we have $\Orth P,\Biorth P\subseteq\Realpto I$.

  Let $P\subseteq\Realpto I$ satisfying the three conditions above and
  let us prove that $\Biorth P\subseteq P$, that is, given
  $y\in\Realpto I\setminus P$, we must prove that $y\notin\Biorth P$,
  that is, we must exhibit a $x'\in\Orth P$ such that $\Eval
  y{x'}>1$. We first show that we can assume that $I$ is finite.

  Given $J\subseteq I$ and $z\in\Realpto I$, let $\FamRestr zJ$ be the
  element of $\Realpto I$ which takes value $z_j$ for $j\in J$ and $0$
  for $j\notin J$. Then $y$ is the lub of the increasing sequence
  $(\FamRestr y{\{\List i1n\}})_{n\in\Nat}$ (where $i_1,i_2,\dots$ is
  any enumeration of $I$) and hence there must be some $n\in\Nat$ such
  that $\FamRestr y{\{\List i1n\}}\notin P$ by the assumption that $P$
  is $\omega$-complete.  Therefore it suffices to prove the result for
  $I$ finite, which we assume.  Let
  $Q=\{x\in\Realto I\St (\Absval{x_i})_{i\in I}\in P\}$ which is a
  convex subset of $\Realto I$ by the assumption that $P$ is convex
  and downwards-closed.

  Let $t_0=\sup\{t\in\Realp\St ty\in P\}$. By our assumption that $P$
  is $\omega$-complete, we have $t_0y\in P$ and hence $t_0<1$ since
  $y\notin P$. Let $h:\Real y=\{ty\St t\in\Real\}\to\Real$ be defined
  by $h(ty)=t/t_0$ ($t_0\not=0$ by our
  assumption~(\ref{item:pcs-bdd-hyp}) about $P$ and because $I$ is
  finite). Let $q:\Realto I\to\Realp$ be the gauge of $Q$, which is
  the semi-norm given by
% \begin{align*}
\(
q(z)=\inf\{\epsilon >0\St z\in\epsilon Q\}
\).
%\end{align*}
It is actually a norm by our assumptions on $P$. Observe that
$h(z)\leq q(z)$ for all $z\in\Real y$: this boils down to showing that
$t\leq t_0 q(ty)=\Absval tt_0q(y)$ for all $t\in\Real$ which is clear
since $t_0q(y)=1$ by definition of these numbers.  Hence, by the
Hahn-Banach Theorem\footnote{Here is one of the many versions of the
  Hahn-Banach Theorem: let $E$ be a $\Real$-vector space, $F$ a
  subspace of $E$, $f:F\to\Real$ a linear map, $p:E\to\Realp$ a
  seminorm such that $\Absval f\leq p$ on $F$. Then there is a
  $g:E\to\Real$, linear and extending $f$ and such that
  $\Absval g\leq p$ on $E$.}, there exists a linear
$l:\Realto I\to\Real$ which is upper-bounded by $q$ and coincides with
$h$ on $\Real y$. Let $y'\in\Realto I$ be such that $\Eval z{y'}=l(z)$
for all $z\in\Realto I$ (using again the finiteness of $I$). Let
$x'\in\Realpto I$ be defined by $x'_i=\Absval{y'_i}$. It is clear that
$\Eval y{x'}>1$: since $y\in\Realpto I$ we have
$\Eval y{x'}\geq\Eval y{y'}=l(y)=h(y)=1/t_0>1$. Let
$N=\{i\in I\St y'_i<0\}$.  Given $z\in P$, let $\bar z\in\Realto I$ be
given by $\bar z_i=-z_i$ if $i\in N$ and $\bar z_i=z_i$
otherwise. Then $\Eval z{x'}=\Eval{\bar z}{y'}=l(\bar z)\leq 1$ since
$\bar z\in Q$ (by definition of $Q$ and because $z\in P$). It follows
that $x'\in\Orth P$.
\end{proof}

\subsection{Local PCS and derivatives}\label{sec:loca-PCS-diff}

Given a cone $P$ (see Section~\ref{sec:basics-PCSs} for the
definition) and $x\in\Cuball P$, we define the \emph{local cone at
  $x$} as the set
% \begin{align*}
\(
\Cloc Px=\{u\in P\St\exists\epsilon>0\ x+\epsilon u\in\Cuball P\}
\).
% \end{align*}
Equipped with the algebraic operations inherited from $P$, this set is
clearly a $\Realp$-semi-module. We equip it with the following norm:
% \begin{align*}
\(
  \Normsp u{\Cloc Px}=\inf\{\epsilon^{-1}\St\epsilon>0\text{ and }
  x+\epsilon u\in\Cuball P\}
\)
% \end{align*}
and then it is easy to check that $\Cloc Px$ is indeed a cone. It is
reduced to $0$ exactly when $x$ is maximal in $\Cuball P$. In that
case one has $\Normsp xP=1$ but notice that the converse is not true
in general.

We specialize this construction to PCSs.  Let $X$ be a PCS and let
$x\in\Pcoh X$. We define a new PCS $\Locpcs Xx$ as follows. First we
set
% \begin{align*}
\(
  \Web{\Locpcs Xx}=\{a\in\Web X\St\exists\epsilon>0\
  x+\epsilon\Bcanon a\in\Pcoh X\}
\)
and then
\(
  \Pcohp{\Locpcs Xx}=\{u\in\Realpto{\Web{\Locpcs Xx}}\St x+u\in\Pcoh X\}
\).
There is a slight abuse of notation here: $u$ is not an element of
$\Realpto{\Web X}$, but we consider it as such by simply extending it
with $0$ values to the elements of $\Web X\setminus\Web{\Locpcs Xx}$.
Observe also that, given $u\in\Pcoh X$, if $x+u\in\Pcoh X$, then we
\emph{must have} $u\in\Pcohp{\Locpcs Xx}$, in the sense that $u$
necessarily vanishes outside $\Web{\Locpcs Xx}$.
It is clear that $(\Web{\Locpcs Xx},\Pcohp{\Locpcs Xx})$ satisfies the
conditions of Lemma~\ref{lemma:PCS-as-closed-set} and therefore
$\Locpcs Xx$ is actually a PCS, called the \emph{local PCS of $X$ at
  $x$}.

Let $t\in\Kl\PCOH(X,Y)$ and let $x\in\Pcoh X$. Given
$u\in\Pcohp{\Locpcs Xx}$, we know that $x+u\in\Pcoh X$ and hence we can
compute $\Klfun t(x+u)\in\Pcoh Y$:
\begin{linenomath}
  \begin{align*}
    \Klfun t(x+u)_b=\sum_{\mu\in\Web{\Excl X}}t_{\mu,b}(x+u)^\mu
    =\sum_{\mu\in\Web{\Excl X}}t_{\mu,b}\sum_{\nu\leq\mu}
    \Binom \mu\nu x^{\mu-\nu}u^\nu\,.
  \end{align*}
\end{linenomath}
Upon considering only the $u$-constant and the $u$-linear parts of this
summation (and remembering that actually $u\in\Pcohp{\Locpcs Xx}$), we
get
\begin{linenomath}
\begin{align*}
  \Klfun t(x)+
  \sum_{a\in\Web X}u_a\sum_{\mu\in\Web{\Excl X}}(\mu(a)+1)
  t_{\mu+\Mset a,b}x^\mu\Base b
  \leq\Klfun t(x+u)\in\Pcoh Y\,.
\end{align*}
\end{linenomath}
Given $a\in\Web{\Locpcs Xx}$ and $b\in\Web{\Locpcs Y{\Klfun t(x)}}$, we set
\begin{linenomath}
  \begin{align*}
  \Deriv t(x)_{a,b}=\sum_{\mu\in\Web{\Excl X}}(\mu(a)+1)t_{\mu+\Mset a,b}x^\mu
  \end{align*}
\end{linenomath}
and we have proven that actually
\begin{linenomath}
  \begin{align*}
    \Deriv t(x)\in\PCOH(\Locpcs Xx,\Locpcs Y{\Klfun t(x)})\,.
  \end{align*}
\end{linenomath}
By definition, this linear morphism $\Deriv t(x)$ is the
\emph{derivative (or differential, or Jacobian) of $t$ at
  $x$}\footnote{But unlike our models of Differential LL, this
  derivative is only defined locally; this is slightly reminiscent of
  what happens in differential geometry.}.
It is uniquely characterized by the fact that, for all $x\in\Pcoh X$
and $u\in\Pcoh{\Locpcs Xx}$, we have
\begin{align}
  \Fun t(x+u)=\Fun t(x)+\Matappa{\Deriv t(x)}u+\Rem t(x,u)
\end{align}
where $\Rem t$ is a power series in $x$ and $u$ whose all terms have
global degree $\geq 2$ in $u$.

\begin{example}
  Consider the case where $Y=\Excl X$ and
  $t=\Diracm=\Id_{\Excl X}\in\Kl\PCOH(X,\Excl X)$, so that
  $\Klfun\Diracm(x)=\Prom x$. Given $a\in\Web{\Locpcs Xx}$ and
  $\nu\in\Web{\Locpcs{\Excl X}{\Prom x}}$, we have
  \begin{linenomath}
    \begin{align*}
      \Deriv\Diracm(x)_{a,\nu}
      =\sum_{\mu\in\Web{\Excl X}}(\mu(a)+1)\Diracm_{\mu+\Mset a,\nu}x^\mu
      =
      \begin{cases}
        0 & \text{ if }\nu(a)=0\\
        \nu(a)x^{\nu-\Mset a} & \text{ if }\nu(a)>0\,.
      \end{cases}
    \end{align*}
  \end{linenomath}
\end{example}
% Remarque un peu obscure
%
% \begin{remark}
%   The condition on $p$ that $p\in\Pcohp{\Locpcs X{\Prom x}}$ is rather
%   implicit and it might be helpful to have a more concret
%   characterization thereof. We conjecture that it is equivalent to
%   $\Supp p\cap\Web{\Locpcs Xx}\not=\emptyset$.
% \end{remark}

We know that
$\Deriv\Diracm(x)\in\Pcohp{\Limpl{\Locpcs Xx}{\Locpcs{\Excl X}{\Prom
      x}}}$
so that $\Deriv\Diracm(x)$ is a ``local version'' of DiLL's
codereliction~\cite{Ehrhard18}. Observe for instance that
$\Deriv\Diracm(0)$ satisfies
$\Deriv\Diracm(0)_{a,\nu}=\Kronecker{\nu}{\Mset a}$ and therefore
coincides with the ordinary definition of codereliction.

\begin{proposition}[Chain Rule]\label{prop:chain-rule}
  Let $s\in\Kl\PCOH(X,Y)$ and $t\in\Kl\PCOH(Y,Z)$. Let $x\in\Pcoh X$
  and $u\in\Pcoh{\Locpcs Xx}$. Then we have
  $\Matappa{\Deriv{(t\Comp s)}(x)}u=\Matappa{\Deriv t(\Fun
    s(x))}{\Matappa{\Deriv s(x)}u}$.
\end{proposition}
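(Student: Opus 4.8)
The plan is to exploit the fact, stated just before this proposition, that the derivative is \emph{uniquely} characterized by the expansion $\Fun t(x+u)=\Fun t(x)+\Matapp{\Deriv t(x)}u+\Rem t(x,u)$ in which $\Rem t(x,u)$ collects the terms of global degree $\geq 2$ in $u$, together with the fact that composition in $\Kl\PCOH$ coincides with composition of the associated functions, that is, $\Fun{(t\Comp s)}=\Fun t\circ\Fun s$. Fix $x\in\Pcoh X$ and $u\in\Pcoh{\Locpcs Xx}$ (so that $x+u\in\Pcoh X$), and write $y=\Fun s(x)$. It then suffices to expand $\Fun{(t\Comp s)}(x+u)=\Fun t(\Fun s(x+u))$ as a power series in $u$ and to read off its $u$-linear part, all rearrangements being legitimate since the coefficients are non-negative and the values stay in the cones.

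First I would observe that the increment $v:=\Fun s(x+u)-\Fun s(x)=\Matapp{\Deriv s(x)}u+\Rem s(x,u)$ can be fed to the expansion of $t$ at $y$. Indeed $\Fun s$ is monotone, so $v\in\Realpto{\Web Y}$, and $\Fun s(x)+v=\Fun s(x+u)\in\Pcoh Y$; by the observation preceding this proposition this forces $v\in\Pcoh{\Locpcs Y{\Fun s(x)}}$. Applying the characterizing identity for $t$ at $y$ and then linearity of the matrix action $\Matapp{\Deriv t(y)}{\cdot}$ yields
\[
  \Fun t(y+v)=\Fun t(y)+\Matapp{\Deriv t(y)}{\Matapp{\Deriv s(x)}u}
  +\Matapp{\Deriv t(y)}{\Rem s(x,u)}+\Rem t(y,v)\,.
\]
Here the first summand is the constant part $\Fun{(t\Comp s)}(x)=\Fun t(y)$, and the second is manifestly $u$-linear; it is exactly the claimed value $\Matapp{\Deriv t(\Fun s(x))}{\Matapp{\Deriv s(x)}u}$, which is moreover a bona fide linear morphism $\Locpcs Xx\to\Locpcs Z{\Fun{(t\Comp s)}(x)}$, being the composite of $\Deriv s(x)\in\Pcoh(\Locpcs Xx,\Locpcs Y{\Fun s(x)})$ and $\Deriv t(y)\in\Pcoh(\Locpcs Y{\Fun s(x)},\Locpcs Z{\Fun t(y)})$.

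It then remains to check that the last two summands contribute only terms of global degree $\geq 2$ in $u$. For $\Matapp{\Deriv t(y)}{\Rem s(x,u)}$ this is clear, since $\Deriv t(y)$ does not involve $u$ and so applying it preserves the $u$-degree, which is already $\geq 2$ in $\Rem s(x,u)$. The only genuinely delicate point is the term $\Rem t(y,v)$: each of its monomials has degree $\geq 2$ in the entries of its second argument, while $v=\Matapp{\Deriv s(x)}u+\Rem s(x,u)$ has no $u$-constant term, so every entry of $v$ has $u$-degree $\geq 1$; hence a product of at least two such entries has $u$-degree $\geq 2$. Consequently the whole expansion takes the form $\Fun{(t\Comp s)}(x)+\Matapp{\Deriv t(y)}{\Matapp{\Deriv s(x)}u}+(\text{terms of degree}\geq 2\text{ in }u)$, and by uniqueness of the $u$-linear part in the characterizing identity for $t\Comp s$ I conclude $\Matapp{\Deriv{(t\Comp s)}(x)}u=\Matapp{\Deriv t(\Fun s(x))}{\Matapp{\Deriv s(x)}u}$, as required.
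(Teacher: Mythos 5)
Your proof is correct and follows essentially the same route as the paper's: expand $\Fun t(\Fun s(x+u))$ using the characterizing identity at $\Fun s(x)$, split $\Matapp{\Deriv t(\Fun s(x))}{(\Matapp{\Deriv s(x)}u+\Rem s(x,u))}$ by linearity, and observe that the leftover terms all have global degree $\geq 2$ in $u$. The extra details you supply (that $v=\Matapp{\Deriv s(x)}u+\Rem s(x,u)$ indeed lies in $\Pcoh{\Locpcs Y{\Fun s(x)}}$, and the explicit degree bookkeeping for $\Rem t(\Fun s(x),v)$) are points the paper leaves implicit, and they are handled correctly.
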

\begin{proof}
It suffices to write
\begin{linenomath}
  \begin{align*}
  \Fun{(t\Comp s)}(x+u)
  &= \Fun t(\Fun s(x+u))
  = \Fun t(\Fun s(x)+\Matappa{\Deriv s(x)}u+\Rem s(x,u))\\
  &= \Fun t(\Fun s(x))+\Matappa{\Deriv t(\Fun s(x))}{(\Matappa{\Deriv s(x)}u
    +\Rem s(x,u)))}+\Rem t(\Fun s(x),\Matappa{\Deriv s(x)}u
    +\Rem s(x,u))\\
  &= \Fun t(\Fun s(x))+\Matappa{\Deriv t(\Fun s(x))}{(\Matappa{\Deriv s(x)}u)}
  +\Matappa{\Deriv t(\Fun s(x))}{(\Rem s(x,u))}
    +\Rem t(\Fun s(x),\Matappa{\Deriv s(x)}u
    +\Rem s(x,u))
  \end{align*}
\end{linenomath}
by linearity of $\Deriv t(\Fun s(x))$ which proves our contention by
the observation that, in the power series
$\Matappa{\Deriv t(\Fun s(x))}{(\Rem s(x,u))} +\Rem t(\Fun
s(x),\Matappa{\Deriv s(x)}u +\Rem s(x,u))$, $u$ appears with global
degree $\geq 2$ by what we know on $\Rem s$ and $\Rem t$.
\end{proof}

% Ceci est pour plus tard!

% \input{bundle}

\subsection{Glb's, lub's and distance}
Since we are working with probabilistic coherence spaces, we could
deal directly with families of real numbers and define these
operations more concretely. We prefer not to do so to have a more
canonical presentation which can be generalized to cones such as those
considered in~\cite{EhrhardPaganiTasson18,Ehrhard20}. Given a PCS $X$,
remember that $\Normsp\_X$ denotes the norm $\Normsp\_{\Pcohc X}$ of
the associated cone, see Section~\ref{sec:basics-PCSs}.

Given $x,y\in\Pcoh X$, observe that $x\Infi y\in\Pcoh X$, where
$(x\Infi y)_a=\min(x_a,y_a)$, and that $x\Infi y$ is the glb of $x$
and $y$ in $\Pcoh X$ (with its standard ordering). It follows that $x$
and $y$ have also a lub $x\Supi y\in\Pcohc X$ which is given by
$x\Supi y=x+y-(x\Infi y)$ (and of course
$(x\Supi y)_a=\max(x_a,y_a)$).

Let us prove that $x+y-(x\Infi y)$ is actually the lub of $x$ and
$y$. First, $x\leq x+y-(x\Infi y)$ simply because $x\Infi y\leq y$.
Next, let $z\in\Pcohc X$ be such that $x\leq z$ and $y\leq z$. We must
prove that $x+y-(x\Infi y)\leq z$, that is
$x+y\leq z+(x\Infi y)=(z+x)\Infi(z+y)$, which is clear since
$x+y\leq z+x,z+y$. We have used the fact that $+$ distributes over
$\Infi$ so let us prove this last fairly standard property:
$z+(x\Infi y)=(z+x)\Infi(z+y)$. The ``$\leq$'' inequation is obvious
(monotonicity of $+$) so let us prove the converse, which amounts to
$x\Infi y\geq(z+x)\Infi(z+y)-z$ (observe that indeed that
$z\leq (z+x)\Infi(z+y)$). This in turn boils down to proving that
$x\geq(z+x)\Infi(z+y)-z$ (and similarly for $y$) which results
from $x+z\geq(z+x)\Infi(z+y)$ and we are done.

We define the distance between $x$ and $y$ by
\begin{linenomath}
\begin{align*}
  \Distsp Xxy=\Normsp{x-(x\Infi y)}X+\Normsp{y-(x\Infi y)}X\,.
\end{align*}
\end{linenomath}
The only non obvious fact to check to prove that this is actually a
distance is the triangular inequality, so let $x,y,z\in\Pcoh X$. We
have
$x-(x\Infi z)\leq x-(x\Infi y\Infi z)=x-(x\Infi y)+(x\Infi y)-(x\Infi
y\Infi z)$ and hence
\begin{linenomath}
\begin{align*}
  \Normsp{x-(x\Infi z)}X\leq\Normsp{x-(x\Infi y)}X+\Normsp{(x\Infi y)-(x\Infi
    y\Infi z)}X\,.
\end{align*}
\end{linenomath}
Now we have $(x\Infi y)\Supi(y\Infi z)\leq y$, that is
$(x\Infi y)+(y\Infi z)-(x\Infi y\Infi z)\leq y$, that is
$(x\Infi y)-(x\Infi y\Infi z)\leq y-(y\Infi z)$. It follows that
\begin{linenomath}
\begin{align*}
  \Normsp{x-(x\Infi z)}X\leq\Normsp{x-(x\Infi y)}X+\Normsp{y-(y\Infi z)}X
\end{align*}
\end{linenomath}
and symmetrically
\begin{linenomath}
\begin{align*}
\Normsp{z-(x\Infi z)}X\leq\Normsp{z-(z\Infi y)}X+\Normsp{y-(y\Infi x)}X
\end{align*}
\end{linenomath}
and summing up we get, as expected
\(
\Distsp Xxz\leq\Distsp Xxy+\Distsp Xyz
\).

\begin{remark}\label{rk:cone-distance}
  In a cone $P$, glb's do not necessarily exist; we can nevertheless
  define a distance as follows:
  \begin{align*}
    \Distsp Pxy=\inf\{\Normsp{x-z}P+\Normsp{y-z}P
    \St z\in P\text{ and }z\leq x,y\}
  \end{align*}
  and it is possible to prove that, equipped with this distance, $P$
  is always Cauchy-complete. Of course if $P=\Pcohc X$ this distance
  coincides with the distance defined above.
\end{remark}

\subsection{A Lipschitz property}
Using the differential of Section~\ref{sec:loca-PCS-diff}, we prove
that all morphisms of $\Kl\PCOH$ satisfy a Lipschitz property, with a
coefficient which cannot be upper bounded on the whole domain.

First of all, observe that, if $w\in\Pcohcp{\Limpl XY}$ and
$x\in\Pcohc X$, we have
\begin{align*}
  \Normsp{\Matappa wx}Y\leq\Normsp w{\Limpl XY}\,\Normsp xX\,.
\end{align*}
Indeed if $\Normsp w{\Limpl XY}\not=0$ and $\Normsp xX\not=0$ we have
$\frac{w}{\Normsp w{\Limpl XY}}\in\Pcohp{\Limpl XY}$ and
$\frac{x}{\Normsp xX}\in\Pcoh X$, therefore
$\Matappa{\frac{w}{\Normsp w{\Limpl XY}}}{\frac{x}{\Normsp xX}}\in\Pcoh
Y$ and our contention follows. And if $\Normsp w{\Limpl XY}=0$ or
$\Normsp xX=0$ the inequation is obvious since then $\Matappa wx=0$.

Let $p\in\Interco01$.
If $x\in\Pcoh X$ and $\Normsp xX\leq p$, observe that, for any
$u\in\Pcoh X$, one has
\begin{align*}
  \Normsp{x+(1-p)u}X\leq\Normsp xX+(1-p)\Normsp uX\leq 1  
\end{align*}
and hence
$(1-p)u\in\Pcohp{\Locpcs Xx}$.  Therefore, given
$w\in\Pcohp{\Limpl{\Locpcs Xx}{Y}}$, we have
$\Normsp{\Matappa w{(1-p)u}}Y\leq 1$ for all $u\in\Pcoh X$ and hence
$(1-p)w\in\Pcohp{\Limpl XY}$.

Let $t\in\Pcohp{\Limpl{\Excl X}{\One}}$. We have seen that, for all
$x\in\Pcoh X$ we have
$\Deriv t(x)\in\Pcohp{\Limpl{\Locpcs Xx}{\Locpcs\One{\Klfun
      t(x)}}}\subseteq \Pcohp{\Limpl{\Locpcs Xx}{\One}}$.  Therefore,
if we assume that $\Normsp xX\leq p$, we have
\begin{align}\label{eq:deriv-orth}
  (1-p)\Deriv t(x)\in\Pcohp{\Limpl X\One}=\Pcoh{\Orth X}\,.
\end{align}
Let $x\leq y\in\Pcoh X$ be such that $\Normsp yX\leq p$. Observe that
$2-p>1$ and that
\begin{align*}
  x+(2-p)(y-x)=y+(1-p)(y-x)\in\Pcoh X
\end{align*}
(because $\Norm y_X\leq p$ and $y-x\in\Pcoh X$).  We consider the
function
\begin{align*}
  h:\Intercc0{2-p}&\to\Intercc 01\\
  \theta&\mapsto\Klfun t(x+\theta(y-x))
\end{align*}
which is clearly analytic.
% \footnote{In the sense that
          %           $\forall\theta\in\Intercc0{2-p}\
          %           h(\theta)=\sum_{n=0}^\infty
          %           a_n\theta^n$ for some family $(a_n)_{n\in\Nat}$
          %           with
          %           $\forall n\in\Nat\ a_n\in\Realp$.} on $[0,2-p)$.
More precisely, one has $h(\theta)=\sum_{n=0}^\infty c_n\theta^n$ for
some sequence of non-negative real numbers $c_n$ such that
$\sum_{n=0}^\infty c_n(2-p)^n\leq 1$.

Therefore the derivative of $h$ is well defined on
$[0,1]\subset[0,2-p)$ and one has
\begin{align*}
  h'(\theta)=\Matappa{\Deriv
  t(x+\theta(y-x))}{(y-x)}\leq\frac{\Normsp{y-x}X}{1-p}
\end{align*}
by~\Eqref{eq:deriv-orth}, using Proposition~\ref{prop:chain-rule}. We have
\begin{align}\label{eq:lipsch-ordonne}
  0\leq\Klfun t(y)-\Klfun t(x)
  =h(1)-h(0)=\int_0^1h'(\theta)\,d\theta\leq\frac{\Normsp{y-x}X}{1-p}\,.
\end{align}

Let now $x,y\in\Pcoh X$ be such that $\Normsp xX,\Normsp yX\leq p$ (we
don't assume any more that $x$ and $y$ are comparable). We have
\begin{align*}
  \Absval{\Klfun t(x)-\Klfun t(y)}
  &=\Absval{\Klfun t(x)-\Klfun t(x\Infi y)+\Klfun t(x\Infi y)-\Klfun t(y)}\\
  &\leq \Absval{\Klfun t(x)-\Klfun t(x\Infi y)}
    +\Absval{\Klfun t(y)-\Klfun t(x\Infi y)}\\
  &\leq\frac 1{1-p}(\Normsp{x-(x\Infi y)}X+\Normsp{y-(x\Infi y)}X)\\
  &=\frac{\Distsp Xxy}{1-p}
\end{align*}
by~\Eqref{eq:lipsch-ordonne} since $x\Infi y\leq x,y$. So we have
proven the following result.
\begin{theorem}\label{th:pcoh-lipsch}
  Let $t\in\Pcohp{\Limpl{\Excl X}{\One}}$.  Given $p\in[0,1)$, the
  function $\Klfun t$ is Lipschitz with Lipschitz constant
  $\frac 1{1-p}$ on $\{x\in\Pcoh X\St\Normsp xX\leq p\}$ when $\Pcoh X$ is
  equipped with the distance $\Distspsymb X$, that is
  \begin{align*}
    \forall x,y\in\Pcoh X\quad \Normsp xX,\Normsp yX\leq p
    \Implies\Absval{\Klfun t(x)-\Klfun t(y)}\leq\frac{\Distsp Xxy}{1-p}\,.
  \end{align*}
\end{theorem}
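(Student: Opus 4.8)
The plan is to reduce the stated estimate to a one-dimensional mean-value computation along line segments, exploiting that a morphism of $\Kl\PCOH$ is a power series (hence analytic with non-negative coefficients) together with the local derivative bound. Two preliminary facts come first. The operator-norm inequality $\Normsp{\Matapp wx}Y\leq\Normsp w{\Limpl XY}\,\Normsp xX$ follows by normalising $w$ and $x$ to the unit balls of $\Limpl XY$ and $X$. The geometric observation is that when $\Normsp xX\leq p$ there is ``room of size $1-p$'' in every direction: for all $u\in\Pcoh X$ one has $\Normsp{x+(1-p)u}X\leq p+(1-p)=1$, so $(1-p)u\in\Pcohp{\Locpcs Xx}$. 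This is precisely what upgrades a local test $w\in\Pcohp{\Limpl{\Locpcs Xx}\One}$ to a global one after scaling, and applied to $w=\Deriv t(x)$ (which is a morphism by the construction of Section~\ref{sec:loca-PCS-diff}) it yields Equation~\Eqref{eq:deriv-orth}, namely $(1-p)\Deriv t(x)\in\Pcoh{\Orth X}$.

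Next I would treat the ordered case: assume $x\leq y$ with $\Normsp yX\leq p$. Setting $h(\theta)=\Klfun t(x+\theta(y-x))$, the analyticity of $t$ makes $h$ analytic on $[0,2-p)$; the endpoint identity $x+(2-p)(y-x)=y+(1-p)(y-x)\in\Pcoh X$ guarantees that the non-negative Taylor coefficients of $h$ satisfy $\sum_n c_n(2-p)^n\leq 1$, so $h$ is genuinely differentiable on $[0,1]$. Monotonicity of the norm gives $\Normsp{x+\theta(y-x)}X\leq\Normsp yX\leq p$ for $\theta\in[0,1]$, so Equation~\Eqref{eq:deriv-orth} applies all along the segment. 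By the Chain Rule (Proposition~\ref{prop:chain-rule}) one has $h'(\theta)=\Matapp{\Deriv t(x+\theta(y-x))}{(y-x)}$, and combining~\Eqref{eq:deriv-orth} with the operator-norm bound gives $h'(\theta)\leq\Normsp{y-x}X/(1-p)$. Integrating from $0$ to $1$ produces Equation~\Eqref{eq:lipsch-ordonne}, that is $0\leq\Klfun t(y)-\Klfun t(x)\leq\Normsp{y-x}X/(1-p)$.

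Finally I would discharge the comparability hypothesis by routing through the greatest lower bound $x\Infi y$. Since $x\Infi y\leq x$ and $x\Infi y\leq y$, both have norm $\leq p$, so the ordered estimate applies to the pairs $(x\Infi y,x)$ and $(x\Infi y,y)$. A triangle inequality then yields $\Absval{\Klfun t(x)-\Klfun t(y)}\leq\frac{1}{1-p}(\Normsp{x-(x\Infi y)}X+\Normsp{y-(x\Infi y)}X)$, whose right-hand side is exactly $\Distsp Xxy/(1-p)$ by the definition of the distance $\Distspsymb X$.

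The conceptual heart of the argument, and the step I expect to be the main obstacle, is Equation~\Eqref{eq:deriv-orth}: one must recognise that $\Deriv t(x)$, a priori only a morphism into the \emph{local} space $\Locpcs\One{\Klfun t(x)}$, becomes a bounded functional on all of $X$ exactly after paying the factor $1-p$, and that this factor is dictated by the distance from $x$ to the boundary of the unit ball. Once the derivative is available as a morphism and the glb/distance apparatus of the preceding subsection is in place, the remaining steps are routine; the integral representation of $h(1)-h(0)$ only needs the differentiability of $h$ on $[0,1]$, which the power-series form supplies.
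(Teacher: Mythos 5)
Your proposal is correct and follows essentially the same route as the paper: the same operator-norm and ``room of size $1-p$'' observations leading to $(1-p)\Deriv t(x)\in\Pcoh{\Orth X}$, the same analytic function $h(\theta)=\Klfun t(x+\theta(y-x))$ integrated over $[0,1]$ for the ordered case, and the same reduction of the general case through the glb $x\Infi y$. The only (welcome) difference is that you make explicit the monotonicity argument showing $\Normsp{x+\theta(y-x)}X\leq p$ along the whole segment, which the paper leaves implicit.
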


\begin{remark}
  The Lipschitz constant cannot be uniformly upper-bounded on
  $\Pcoh X$, in particular it cannot be upper-bounded by $1$, that is
  $t$ is not always contractive. A typical example is
  $t=\phi_{0.5}\in\Pcoh(\Limpl{\Excl\One}\One)$ of
  Example~\ref{ex:expect-computation}: the left plot of
  Fig.~\ref{fig:example-expect} shows that the Lipschitz constant goes
  to $\infty$ when $p$ goes to $1$.
\end{remark}
% ++++ insister sur le fait qu'on ne peut pas avoir des fonctions contractantes qui permettraient de faire des points fixes à la Banach

% ++++ ajouter une section sans vrais résultats
%\section{Towards a probabilistic differential linear logic}

\section{Application to the observational
  distance in $\PCFP$}
Given a $\PPCF$ term $M$ such that $\Tseq{}{M}{\Tnat}$, remember that
we use $\Probared M{\Num 0}$ for the probability of $M$ to reduce to
$\Num 0$ in the probabilistic reduction system
of~\cite{EhrhardPaganiTasson18}, so that
$\Probared M{\Num
  0}=\Proba{\Inistate{M}}{(\Eventconvz{\Inistate{M}})}$ with the
notations of Section~\ref{sec:PPCF-derivatives}. Remember that
$\Probared M{\Num 0}={\Psem M{}}_0$ by the Adequacy Theorem
of~\cite{EhrhardPaganiTasson18}.

Given a type $\sigma$ and two $\PCFP$ terms $M,M'$ such that
$\Tseq{}{M}{\sigma}$ and $\Tseq{}{M'}{\sigma}$, we define the
\emph{observational distance} $\Distobs{M}{M'}$ between $M$ and $M'$
as the sup of all the 
% $\Absval{\Matred^\infty_{\App CM,\Num 0}-\Matred^\infty_{\App
    %     C{M'},\Num 0}}$,
\begin{align*}
  \Absval{\Probared{\App CM}{\Num 0}-\Probared{\App C{M'}}{\Num 0}}  
\end{align*}
taken over terms $C$ such that $\Tseq{}{C}{\Timpl\sigma\Tnat}$ (testing
contexts).

% For any $\epsilon\in[0,1]$ (rational), let
% $M^\epsilon=\Ifv{\Dice\epsilon}{\Num 1}{z}{\Num 0}$ so that
% $\Psem{M^\epsilon}{}=(1-\epsilon)\Snum 0+\epsilon\Snum 1$.

% The trouble with this distance is that, if $\epsilon>0$ we have
% $\Distobs{M^0}{M^\epsilon}=1$ whereas
% $\Distsp{\Snat}{\Psem{M^0}{}}{\Psem{M^\epsilon}{}}=2\epsilon$.

If $\epsilon\in[0,1]\cap\Rational$ we have
$\Distobs{\Dice 0}{\Dice\epsilon}=1$ as soon as $\epsilon>0$.  It
suffices indeed to consider the context
\begin{align*}
  C=\Fix{\Abst f{\Timpl\Tnat\Tnat}{\Abst x\Tnat{\Ifv{x}{\APP fx}{z}{\Num
      0}}}}\,. 
\end{align*}
The semantics
$\Psem C{}\in\Pcohp{\Limpl{\Excl{\Snat}}{\Snat}}$ is a function
$c:\Pcoh{\Snat}\to\Pcoh{\Snat}$ such that
\begin{align*}
  \forall u\in\Pcoh{\Snat}\quad
  c(u)=u_0c(u)+\big(\sum_{i=1}^\infty u_i\big)\Snum 0
\end{align*}
and which is minimal (for the order relation of
$\Pcohp{\Limpl{\Excl{\Snat}}{\Snat}}$). If follows that
\begin{align*}
  c(u)=
  \begin{cases}
    0 & \text{if }u_0=1\\
    \frac 1{1-u_0}\sum_{i=1}^\infty u_i & \text{otherwise}\,.
  \end{cases}
\end{align*}
Then
\begin{align*}
  c((1-\epsilon)\Snum 0+\epsilon\Snum 1)=
  \begin{cases}
    0 & \text{if }\epsilon=0\\
    1 & \text{if }\epsilon>0\,.
  \end{cases}
\end{align*}
This is a well known phenomenon called ``probability amplification''
in stochastic programming.

Nevertheless, we can control a tamed version of the observational
distance. Given a closed $\PPCF$ term $C$ such that
$\Tseq{}{C}{\Timpl\sigma\Tnat}$ we define
\begin{align*}
  \Tamedc Cp=\Abst z\sigma{\App C{\If{\Dice p}{z}{\Loopt\sigma}}}
\end{align*}
and a tamed version of the observational distance is defined by
\begin{align*}
  \Tdistobs pM{M'}=\sup
  \left\{\Absval{\Probared{\App{\Tamedc Cp}M}{\Num 0}
  -\Probared{\App{\Tamedc Cp}{M'}}{\Num 0}}
  \St \Tseq{}{C}{\Timpl\sigma\Tnat}
\right\}\,.
\end{align*}
In other words, we modify our first definition of the observational
distance by restricting the universal quantification on contexts to
those which are of shape $\Tamedc Cp$.

\begin{theorem}\label{th:Tdistobs-denot-dist}
  Let $p\in[0,1)\cap\Rational$. Let $M$ and $M'$ be terms such that
  $\Tseq{}M{\sigma}$ and $\Tseq{}{M'}{\sigma}$.  Then we have
  \begin{align*}
    \Tdistobs p{M}{M'}\leq
    \frac{p}{1-p}\,\Distsp{\Tsem\sigma}{\Psem{M}{}}{\Psem{M'}{}}\,.
  \end{align*}
\end{theorem}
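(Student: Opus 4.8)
The plan is to reduce the statement to a single application of the Lipschitz property of Theorem~\ref{th:pcoh-lipsch}, after computing the denotation of the tamed applications $\App{\Tamedc Cp}M$ and $\App{\Tamedc Cp}{M'}$. First I would unfold the semantics of the taming construct. Since $\Tamedc Cp=\Abst z\sigma{\App C{\If{\Dice p}{z}{\Loopt\sigma}}}$, interpreting abstraction, application, $\If{}{}{}$ and $\Dice p$ as recalled in Section~\ref{sec:PPCF-core}, and using $\Psem{\Loopt\sigma}{}=0$, the inner subterm evaluates to $\Psem{\If{\Dice p}{z}{\Loopt\sigma}}{z:\sigma}(u)=p\,u$, because the $\Num 0$-coefficient of $\Psem{\Dice p}{}$ is $p$ and the divergent branch contributes $0$. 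Composing with $\Psem C{}$ gives $\Psem{\App{\Tamedc Cp}M}{}=\Klfun{\Psem C{}}(p\,\Psem M{})$, and taking the $0$-th coordinate together with adequacy (Theorem~\ref{th:pcoh-adequacy}) yields $\Probared{\App{\Tamedc Cp}M}{\Num 0}=\left(\Klfun{\Psem C{}}(p\,\Psem M{})\right)_0$, and likewise with $M'$. (Rationality of $p$ is needed only so that $\Dice p$, hence $\Tamedc Cp$, is a legal term.)

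The next step is to cast this $0$-th coordinate as a genuine morphism into $\One$ so that Theorem~\ref{th:pcoh-lipsch} applies. Given a testing context $C$ with $\Tseq{}C{\Timpl\sigma\Tnat}$, I would define $t_C\in\Pcohp{\Limpl{\Excl{\Tsem\sigma}}{\One}}$ to be the column of $\Psem C{}$ indexed by the output $0\in\Web\Snat$, i.e.\ $(t_C)_{\mu,*}=(\Psem C{})_{\mu,0}$; this is a morphism into $\One$ since $\Klfun{t_C}(v)=\left(\Klfun{\Psem C{}}(v)\right)_0\in[0,1]$ for every $v\in\Pcoh{\Tsem\sigma}$, because $\Klfun{\Psem C{}}(v)\in\Pcoh\Snat$. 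With this notation $\Probared{\App{\Tamedc Cp}M}{\Num 0}=\Klfun{t_C}(p\,\Psem M{})$. Now $\Normsp{p\,\Psem M{}}{\Tsem\sigma}=p\,\Normsp{\Psem M{}}{\Tsem\sigma}\leq p$ and similarly for $M'$, so both points lie in the ball of radius $p<1$, and Theorem~\ref{th:pcoh-lipsch} gives $\Absval{\Klfun{t_C}(p\,\Psem M{})-\Klfun{t_C}(p\,\Psem{M'}{})}\leq\frac{1}{1-p}\,\Distsp{\Tsem\sigma}{p\,\Psem M{}}{p\,\Psem{M'}{}}$.

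It then remains to observe that the distance $\Distspsymb{\Tsem\sigma}$ is positively homogeneous, $\Distsp X{p\,x}{p\,y}=p\,\Distsp Xxy$: indeed $(p\,x)\Infi(p\,y)=p\,(x\Infi y)$ since $\min$ is homogeneous on $\Realp$, so both vectors $p\,x-(p\,x)\Infi(p\,y)$ and $p\,y-(p\,x)\Infi(p\,y)$ are $p$ times the corresponding vectors for $x,y$, and the norm is homogeneous. Combining this with the previous display, every testing context $C$ satisfies $\Absval{\Probared{\App{\Tamedc Cp}M}{\Num 0}-\Probared{\App{\Tamedc Cp}{M'}}{\Num 0}}\leq\frac{p}{1-p}\,\Distsp{\Tsem\sigma}{\Psem M{}}{\Psem{M'}{}}$, and since this bound is uniform in $C$, taking the supremum over all $C$ with $\Tseq{}C{\Timpl\sigma\Tnat}$ yields the claimed inequality for $\Tdistobs pM{M'}$.

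All the computations here are routine once Theorem~\ref{th:pcoh-lipsch} is available; the only conceptually delicate point, and the step I would treat most carefully, is the first one: that the taming operator $\Tamedc{\cdot}p$ scales the semantic argument \emph{exactly} by $p$, thereby confining the tested points to the ball of radius $p<1$ where the differential of $\Klfun{t_C}$ stays bounded and the Lipschitz estimate holds. This is precisely what defuses the probability-amplification blow-up observed for the untamed distance, where the tested points may sit on the boundary $\Normsp\cdot{\Tsem\sigma}=1$ and the derivative can become infinite.
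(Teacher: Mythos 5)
Your proposal is correct and follows essentially the same route as the paper: compute that the taming construct scales the semantic argument to $p\,\Psem M{}$, use adequacy to express $\Probared{\App{\Tamedc Cp}M}{\Num 0}$ as the $0$-th coordinate of $\Klfun{\Psem C{}}(p\,\Psem M{})$, view that coordinate as a morphism into $\One$, apply Theorem~\ref{th:pcoh-lipsch} on the ball of radius $p$, and conclude by homogeneity of $\Distspsymb{\Tsem\sigma}$. The only difference is that you spell out the semantic computation of the taming operator and the passage from contexts $C$ to morphisms $t_C\in\Pcohp{\Limpl{\Excl{\Tsem\sigma}}{\One}}$, which the paper leaves implicit in its chain of inequalities.
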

\begin{proof}
\begin{align*}
  \Tdistobs p{M}{M'}
  &=\sup\{\Absval{\Klfun{\Psem C{}}(p\Psem {M}{})_0
    -\Klfun{\Psem C{}}(p\Psem {M'}{})_0}
    \St\Tseq{}{C}{\Timpl\sigma\Tnat}\}\\
  &\leq\sup\{\Absval{\Klfun{t}(p\Psem{M}{})
    -\Klfun{t}(p\Psem {M'}{}})
    \St t\in\Pcohp{\Limpl{\Excl{\Tsem{\sigma}}}\One}\}\\
  &\leq
    \frac{\Distsp{\Tsem\sigma}{p\Psem{M}{}}{p\Psem{M'}{}}}{1-p}
    =\frac{p}{1-p}\,\Distsp{\Tsem\sigma}{\Psem{M}{}}{\Psem{M'}{}}\,.
\end{align*}
by the Adequacy Theorem and by Theorem~\ref{th:pcoh-lipsch}.
\end{proof}

% Nevertheless, if we choose some rational $p\in[0,1)$ and
% ``perturbate'' our terms $M$ and $M'$ by setting
% $M_p=\Ifv{\Dice p}{M}{z}{\Loopt\Tnat}$ and
% $M'_p=\Ifv{\Dice p}{M'}{z}{\Loopt\Tnat}$ so that
% $\Psem{M_p}{}=p\Psem{M}{}$ and $\Psem{M'_p}{}=p\Psem{M'}{}$, we get
% \begin{align*}
%   \Distobs{M_p}{M'_p}
%   &=\sup\{\Absval{\Matred^\infty_{\App C{M_p},\Num 0}-\Matred^\infty_{\App
%     C{M'_p},\Num 0}}\St\Tseq{}{C}{\Timpl\sigma\Tnat}\}\\
%   &=\sup\{\Absval{{\Psem{\App C{M_p}}{}}_0-{\Psem{\App C{M'_p}}{}}_0}
%     \St\Tseq{}{C}{\Timpl\sigma\Tnat}\}\\
%   &=\sup\{\Absval{\Klfun{\Psem C{}}(\Psem {M_p}{})_0
%     -\Klfun{\Psem C{}}(\Psem {M'_p}{})_0}
%     \St\Tseq{}{C}{\Timpl\sigma\Tnat}\}\\
%   &\leq\sup\{\Absval{\Klfun{t}(\Psem{M_p}{})
%     -\Klfun{t}(\Psem {M'_p}{}})
%     \St t\in\Pcohp{\Limpl{\Excl{\Tsem{\sigma}}}\One}\}\\
%   &\leq
%     \frac{\Distsp{\Tsem\sigma}{\Psem{M_p}{}}{\Psem{M'_p}{}}}{1-p}
%     =\frac{p}{1-p}\,\Distsp{\Tsem\sigma}{\Psem{M}{}}{\Psem{M'}{}}\,.
% \end{align*}
% by the Adequacy Theorem and by Theorem~\ref{th:pcoh-lipsch}.

Since $p/(1-p)=p+p^2+\cdots$ and $\Distsp{\Tsem\sigma}{\_}{\_}$ is an
over-approximation of the observational distance restricted to linear
contexts, this inequation carries a rather clear operational intuition
in terms of execution in a Krivine machine as in
Section~\ref{sec:PCF-operational} (thanks to Paul-André Melliès for
this observation).
%
% Our main point is that the model $\Kl\PCOH$ allows
% to prove such results in a purely denotational way.
%
% More precise comments can be found in the Appendix.
Indeed, using the stacks of Section~\ref{sec:PCF-operational}, a
\emph{linear} observational distance on $\PPCF$ terms can easily be
defined as follows, given terms $M$ and $M'$ such that
$\Tseq{}{M}{\sigma}$ and $\Tseq{}{M'}{\sigma}$:
\begin{align*}
  \Ldistobs{M}{M'}=
  \sup_{\Tseqst\sigma\pi}\Absval{\Proba{\State M\pi}
  {(\Eventconvz{\State M\pi})}-
  \Proba{\State{M'}\pi}{(\Eventconvz{\State{M'}\pi})}}\,.
\end{align*}
In view of Theorem~\ref{th:Tdistobs-denot-dist} and of the fact that
$\Ldistobs{M}{M'}\leq\Distsp{\Tsem\sigma}{\Psem{M}{}}{\Psem{M'}{}}$
(easy to prove, since each stack can be interpreted as a linear
morphism in $\PCOH$), a natural and purely syntactic conjecture is that
\begin{align}\label{eq:dist-ineq-strong}
  \Tdistobs p{M}{M'}\leq\frac{p}{1-p}\,\Ldistobs{M}{M'}\,.
\end{align}
This seems easy to prove in the case
$\Proba{\State{M'}\pi}{(\Eventconvz{\State{M'}\pi})}=0$: it suffices
to observe that a path which is a successful reduction of
$\State{\App{\Tamedc Cp}M}{\Stempty}$ in the ``Krivine Machine'' of
Section~\ref{sec:PCF-operational} (considered here as a Markov chain)
can be decomposed as
\begin{align*}
  \State{\App{\Tamedc Cp}M}{\Stempty}
  &\Rel{\to^*} \State{\If{\Dice p}{M}{\Loopt\sigma}}{\pi_1(C,M)}
  \Rel{\to^*} \State{\If{\Dice p}{M}{\Loopt\sigma}}{\pi_2(C,M)}\\
  &\Rel{\to^*}\cdots\Rel{\to^*}
    \State{\If{\Dice p}{M}{\Loopt\sigma}}{\pi_k(C,M)}
    \Rel{\to^*}\State{\Num 0}{\Stempty}
\end{align*}
where $(\pi_i(C,M))_{i=1}^k$ is a finite sequence of stacks such that
$\Tseqst\sigma{\pi_i(M)}$ for each $i$. Notice that this sequence of
stacks depends not only on $C$ and $M$ but also on the considered path
of the Markov chain.

In the general case, Inequation~\Eqref{eq:dist-ineq-strong} seems less
easy to prove because, for a given common initial context $C$, the
sequences of reductions (and of associated stacks) starting with
$\State{\App{\Tamedc Cp}M}{\Stempty}$ and
$\State{\App{\Tamedc Cp}{M'}}{\Stempty}$ differ. This divergence has
low probability when $\Ldistobs{M}{M'}$ is small, but it is not
completely clear how to evaluate it. Coinductive methods like
probabilistic bisimulation as in the work of Crubillé and Dal~Lago are
certainly relevant here.

Our Theorem~\ref{th:pcoh-lipsch} shows that another and more
geometric approach, based on a simple denotational model, is also
possible to get Theorem~\ref{th:Tdistobs-denot-dist} which, though
weaker than Inequation~\Eqref{eq:dist-ineq-strong}, allows
nevertheless to control the $p$-tamed distance.

We finish the paper by observing that the equivalence relations
induced on terms by these observational distances coincide with the
ordinary observational distance if $p\not=0$.

\begin{theorem}\label{th:distobsp-controle}
  Assume that $0<p\leq1$. If $\Tdistobs p M{M'}=0$ then $M\Obseq M'$
  (that is, $M$ and $M'$ are observationally equivalent).
\end{theorem}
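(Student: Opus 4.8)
The plan is to recognize the $p$-tamed distance as an \emph{ordinary} observational distance between two auxiliary terms, and then to exploit that the effect of the taming is merely to scale the denotation by the invertible factor $p$. Concretely, I would fix the common type $\sigma$ of $M,M'$ and set $M_p=\If{\Dice p}M{\Loopt\sigma}$ and $M'_p=\If{\Dice p}{M'}{\Loopt\sigma}$, both closed terms of type $\sigma$. For any closed context $C$ with $\Tseq{}{C}{\Timpl\sigma\Tnat}$, the term $\App{\Tamedc Cp}M$ reduces in one head $\beta$-step to $\App C{M_p}$ (the bound variable $z$ being replaced by $M$, and $z$ not free in the closed term $C$). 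Since this step is silent for the evaluation machine $\Evalst$ (it carries probability $1$), the two states $\Inistate{\App{\Tamedc Cp}M}$ and $\Inistate{\App C{M_p}}$ have identical runs, whence $\Probared{\App{\Tamedc Cp}M}{\Num 0}=\Probared{\App C{M_p}}{\Num 0}$ (one may equally argue by compositionality of $\Psem\cdot{}$ together with Theorem~\ref{th:pcoh-adequacy}). Taking the supremum over $C$ yields $\Tdistobs pM{M'}=\Distobs{M_p}{M'_p}$, so the hypothesis $\Tdistobs pM{M'}=0$ says precisely that $M_p\Obseq M'_p$.

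Next I would compute the denotations of the tamed terms. Using $\Psem{\Dice p}{}=p\,\Snum 0+(1-p)\,\Snum 1$ and the interpretation of $\If$, together with $\Psem{\Loopt\sigma}{}=0$, one gets $\Psem{M_p}{}=p\,\Psem M{}$ and likewise $\Psem{M'_p}{}=p\,\Psem{M'}{}$. Now I invoke the Full Abstraction Theorem of~\cite{EhrhardPaganiTasson18}: from $M_p\Obseq M'_p$ it provides $\Psem{M_p}{}=\Psem{M'_p}{}$, that is $p\,\Psem M{}=p\,\Psem{M'}{}$. Since $p>0$ the scaling is cancellable and we conclude $\Psem M{}=\Psem{M'}{}$. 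The converse (easy) direction of the same theorem — equal denotations imply observational equivalence, via compositionality and adequacy — then gives $M\Obseq M'$, as required. The boundary case $p=1$ is covered as well, since then $\Psem{M_p}{}=\Psem M{}$ and the scaling is trivial.

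The substantive ingredient is the \emph{completeness} direction of full abstraction (observational equivalence forces equality of denotations), but this is available off the shelf from~\cite{EhrhardPaganiTasson18}; the genuinely new observation is simply that taming by $\If{\Dice p}\cdot{\Loopt\sigma}$ acts denotationally as multiplication by the scalar $p$, an operation invertible exactly when $p>0$, which is why the $p$-tamed and untamed equivalences coincide throughout that range even though, as the amplification example preceding Theorem~\ref{th:Tdistobs-denot-dist} shows, they behave very differently as \emph{metrics}. The only points needing a little care are that $M_p,M'_p$ are bona fide closed terms of type $\sigma$, so that full abstraction genuinely applies to them, and that the $\beta$-step relating $\App{\Tamedc Cp}M$ to $\App C{M_p}$ preserves $\Probared\cdot{\Num 0}$; both are routine and I would dispatch them quickly as above.
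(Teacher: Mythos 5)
Your proof is correct and follows essentially the same route as the paper's: tame both terms to $M_p,M'_p$, observe that $\Tdistobs pM{M'}=0$ forces $M_p\Obseq M'_p$, apply full abstraction to get $\Psem{M_p}{}=\Psem{M'_p}{}=p\,\Psem{M}{}=p\,\Psem{M'}{}$, cancel $p>0$, and conclude by adequacy. You merely spell out two steps the paper leaves implicit (the $\beta$-step identifying $\App{\Tamedc Cp}M$ with $\App C{M_p}$, and the computation $\Psem{M_p}{}=p\,\Psem M{}$), which is fine.
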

\begin{proof} If $\Tseq{}M\sigma$ we set
$M_p=\If{\Dice p}{M}{\Loopt\sigma}$.  If $\Tdistobs p M{M'}=0$ then
$M_p\Obseq M'_p$ by definition of observational equivalence, hence
$\Psem{M_p}{}=\Psem{M'_p}{}$ by our Full Abstraction
Theorem~\cite{EhrhardPaganiTasson18}, but $\Psem{M_p}{}=p\Psem M{}$
and similarly for $M'$. Since $p\not=0$ we get $\Psem M{}=\Psem{M'}{}$
and hence $M\Obseq M'$ by adequacy~\cite{EhrhardPaganiTasson18}.
\end{proof}

So for each $p\in(0,1)$ and for each type $\sigma$ we can consider
$\Tdistobsf p$ as a distance on the observational classes of closed
terms of type $\sigma$. We call it the \emph{$p$-tamed observational
  distance}. Our Theorem~\ref{th:Tdistobs-denot-dist} shows that we
can control this distance using the denotational distance. For
instance we have
      %       \begin{align*}
\(
  \Tdistobs p{\Dice 0}{\Dice\epsilon}\leq\frac{2p\epsilon}{1-p}
  \)
%\end{align*}
so that $\Tdistobs p{\Dice 0}{\Dice\epsilon}$ tends to $0$ when
$\epsilon$ tends to $0$.

\section{Conclusion}
The two results of this paper are related: both use derivatives
wrt.~probabilities to evaluate the number of times arguments are
used. The derivatives used in the second part are more general than
those of the first part simply because the parameters wrt.~which
derivatives are taken can be of an arbitrary type whereas in the first
part, they are of ground type, but this is essentially the only
difference.

Indeed in Section~\ref{sec:PPCF-derivatives} we computed partial
derivatives of morphisms
\begin{align*}
  t\in\PCOH(\Excl X,\Snat)
\end{align*}
where $X=\Snat\IWith\cdots\IWith\Snat$ ($k$ copies). More precisely,
the $t$'s we consider in that section are such that if
$t_{\Vect\mu,n}\not=0$ then $n=0$ and each $\mu_i\in\Mfin\Nat$
satisfies $\Supp{\mu_i}\subseteq\Eset 0$. So actually we can consider
such a $t$ as a an element of
$\PCOH(\Exclp{\One\IWith\cdots\IWith\One},\One)$ which induces a
function
\begin{align*}
  \Klfun t:\Pcohp{\One\IWith\cdots\IWith\One}\Isom\Intercc01^k
  &\to\Pcoh\One\Isom\Intercc 01\\
  x &\mapsto\sum_{\List n1k\in\Nat}t_{\List n1k}\prod_{i=1}^k x_i^{n_i}
\end{align*}
where $t_{\List n1k}\in\Realp$ for each $\List n1k\in\Nat$.  Let
$Y=\One\IWith\cdots\IWith\One$. Let $\Vect x\in\Pcoh\Snat^k$ and
$x\in\Pcoh Y$ so that $x$ can be seen as a tuple
$(x_1,\dots,x_k)\in\Intercc 01$ and $\Norm x_Y=\max_{i=1}^kx_i$. It
follows that $\Locpcs Yx$ can be described as follows:
\begin{align*}
  \Web{\Locpcs Yx}&=\Eset{i\St 1\leq i\leq k\text{ and }x_i<1}\\
  \Pcohp{\Locpcs Yx}&=\Eset{u\in\Realpto{\Web{\Locpcs Yx}}\St
                     \forall i\in\Web{\Locpcs Yx}\ x_i+u_i\leq 1}\,.
\end{align*}
and then we have defined the differential of $t$,
\begin{align*}
  \Deriv t(x)\in\PCOH(\Locpcs Yx,\One)
\end{align*}
in Section~\ref{sec:loca-PCS-diff}. This differential relates as
follows with the partial derivatives used in
Section~\ref{sec:PPCF-derivatives}:
\begin{align*}
  \forall u\in\Pcoh{\Locpcs Yx}\quad
  \Matappa{\Deriv t (x)}u=\sum_{i\in\Web{\Locpcs Yx}}t'_i(x)u_i
\end{align*}
where $t'_i(x)\in\Realp$ is the $i$th partial derivative of the
function $\Klfun t$ at $x$. Notice that in
Section~\ref{sec:PPCF-derivatives} we use slightly more general partial
derivatives computed also at indices $i$ such that $x_i=1$ (which are
actually left derivatives since the function can be undefined for
$x_i>1$) where they can take infinite values, in accordance with the
fact that the expectation of the number of steps can be $\infty$ like
in the example $\phi_{0.5}$, see Fig.~\ref{fig:example-expect}. This
is not allowed in Section~\ref{sec:loca-PCS-diff} where we insist on
keeping all derivatives finite for upper-bounding them.

We think that these preliminary results provide motivations for
investigating further differential extensions of $\PPCF$ and related
languages in the spirit of the differential
lambda-calculus~\cite{EhrhardRegnier02}.

\section*{Acknowledgments}
We thank Raphaëlle Crubillé, Paul-André Melliès, Michele Pagani and
Christine Tasson for many enlightening discussions on this work. We
also thank the referees of the FSCD'19 version of this paper for their
precious comments and suggestions. Last but not least we thank warmly
the reviewers of this journal version for their in-depth reading and
understanding of the paper and for their invaluable help in improving
the presentation.

This research was partly funded by the ANR project ANR-19-CE48-0014
\emph{Probabilistic Programming Semantics (PPS)}.

% This work was partly funded by the ANR project \emph{Probabilistic
%   Program Semantics} ANR-19-CE48-0014.

%%% Local Variables:
%%% mode: latex
%%% TeX-master: "lmcs"
%%% End:

\bibliographystyle{alpha}
\bibliography{newbiblio}

% \section*{Appendix A: additional comments on Theorem~\ref{th:Tdistobs-denot-dist}}
% We would like to add some further observations on
% Theorem~\ref{th:Tdistobs-denot-dist} for the referees. If the paper is
% accepted these comments will be made available online.

% \input{comment.tex}

% \section*{Appendix B: missing proofs}
% \input{proofs.tex}

\end{document}